\documentclass[11pt,reqno]{amsart}
\usepackage[foot]{amsaddr}
\usepackage[margin=1.2in]{geometry}
\usepackage{graphicx}
\usepackage{mathrsfs,amsmath,amssymb}
\usepackage{subfig,hyperref,xcolor,enumerate,enumitem}
\usepackage{cite, bm}
\usepackage{cleveref}
\usepackage{tikz}
\usepackage{xcolor,cancel,amsthm}
\newcommand{\dd}{\mathrm{d}}
\newcommand{\bU}{\bar{U}}

\newcommand{\cS}{\mathcal{S}}
\newcommand{\smax}{s_{\mathrm{ max}}}
\newcommand{\qand}{\quad \mbox{and} \quad}

\DeclareMathOperator{\tr}{tr}

\mathchardef\ordinarycolon\mathcode`\:
\mathcode`\:=\string"8000
\begingroup \catcode`\:=\active
  \gdef:{\mathrel{\mathop\ordinarycolon}}
\endgroup

\newtheorem{lem}{Lemma}[section]
\newtheorem{defn}{Definition}[section]
\newtheorem{prop}{Proposition}[section]

\begin{document}

 % \title[Black holes and their evolutions]{Black holes and their evolutions: \\ lessons from bifurcation theory
 % }

  \title[Black hole evolutions]{Black hole evolutions: \\ lessons from bifurcation theory
 }

%\title[Black hole evolution and bifurcation theory]{Black hole evolution and bifurcation theory
%}

\author{Ivan Booth$^{1,2}$, Graham Cox$^1$, Chiamaka Mary Okpala$^2$}

\address{$^1$ Department of Mathematics and Statistics, Memorial University, NL, Canada}
\address{$^2$ Department of Physics and Physical Oceanography, Memorial University, NL, Canada}

\begin{abstract} 
We explore the role of the stability operator in regulating the evolution of marginally outer trapped surfaces (MOTSs). In 2005, Andersson, Mars and Simon showed that if the stability operator is invertible,  then the time evolution of a MOTS is unique. Here we focus on  moments at which  that stability operator is not invertible.  Understanding MOTSs as analogous to fixed points, and the stability operator as a linearization of the system of the MOTS-defining equations, bifurcation theory can be used to classify possible non-unique evolutions. MOTS pair creation/annihilation is an example of a saddle-node bifurcation but other possibilities can occur, including pitchfork and transcritical bifurcations. Analytical and numerical tools are used to identify examples of the various bifurcations in a variety of spacetimes. To help analyze those results, we define a generalized MOTS stability operator and discuss the (partial) barrier properties of unstable MOTSs. Spherically symmetric examples are given in Reissner--Nordstr\"om--de Sitter spacetime and axisymmetric examples are studied in Reissner--Nordstr\"om and Weyl-distorted Schwarzschild. 
This application of bifurcation theory is very general and so applies to any theory of gravity containing MOTSs (or some generalization thereof). Possible bifurcations of those structures are constrained in any such theory in the same way.
\end{abstract}

\maketitle

\setcounter{tocdepth}{2}
\tableofcontents

\section{Introduction}

A marginally outer trapped surface 
(MOTS) is a closed, spacelike two-surface $\mathcal{S}$ for which the expansion vanishes,
\begin{equation}
    \left. \theta_\ell \right|_\mathcal{S} = 0 \, , 
\end{equation}
where $\ell$ is any future outward-oriented null normal to $\mathcal{S}$. 
Apparent horizons are the best known and most important examples of MOTSs. However over the last few years it has become clear that most MOTSs are 
not apparent horizons.
Large, likely infinite, families of MOTSs have been found
in the interior of apparent horizons 
not only during black hole mergers \cite{Pook-Kolb:2021jpd,Booth:2021sow,Pook-Kolb:2021gsh}
but also in even the simplest exact solutions \cite{Booth:2020qhb,Hennigar:2021ogw,Booth:2022vwo,Sievers:2023zng}. These interior MOTSs often have complicated, self-intersecting geometries and some even have non-spherical topologies. During the numerical time evolution of a merger, as well as during exact solution phase space evolutions (such as changing the charge in Reissner--Nordstr\"om), these exotic MOTSs also evolve and, notably, 
engage in pair creations and annihilations. For example, the well-known ``jumps'' of 
apparent horizons seen in both numerical 
mergers \cite{thornburg}
and exact-solution spacetimes \cite{Booth:2005b,bendov} are
better understood as MOTS pair creations.  
Further,  during a merger, the final dissolution of the original two black holes in the interior of the new final black hole can be understood to occur when their apparent horizons annihilate in meetings with interior MOTS \cite{Pook-Kolb:2021jpd,Pook-Kolb:2021gsh}.

\subsection{MOTS stability operator: classifying local apparent horizons}
\label{sec:introapp}

\begin{figure}
    \centering
\includegraphics[width=0.9\textwidth]{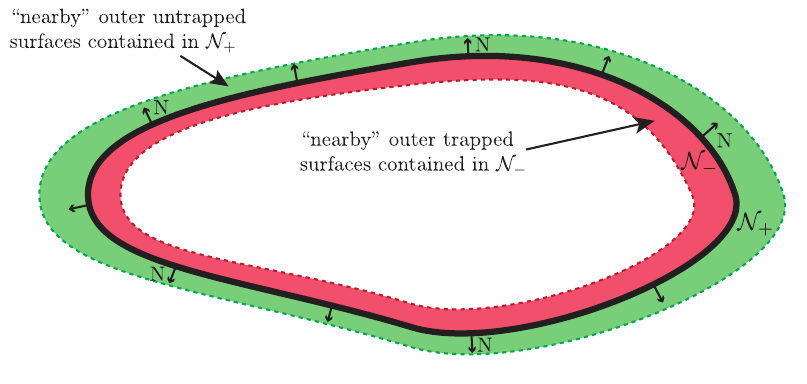}
    \caption{A strictly stable MOTS $\mathcal{S}$ has a two-sided neighbourhood $\mathcal{N}$
    %with outer untrapped surfaces in the outside green region $\mathcal{N}_+$ and outer trapped surfaces in the inside red region $\mathcal{N}_-$. Further,  
    such that any outer trapped (untrapped) surface contained in $\mathcal{N}$ is necessarily entirely contained in the inside $\mathcal{N}_-$ (outside $\mathcal{N_+}$). $\mathcal{S}$ is thus a barrier between outer trapped and untrapped surfaces that are contained in $\mathcal{N}$. }
    \label{fig:stability}
\end{figure}

With this multitude of new MOTSs, the study of interior black hole geometries and their dynamics appears to explode in complexity. However, there is a mathematical tool that imposes some order. Consider a MOTS $\mathcal{S}$ embedded in a spacelike 
slice $\Sigma_t$, 
with ${N}$ the outward-pointing unit normal to $\mathcal{S}$ in $\Sigma_t$ (Figure \ref{fig:stability}). The \emph{MOTS stability operator}, $L_\cS$, is defined by
\begin{equation}
    L_{\mathcal S} \psi := \delta_{\psi N} \theta_\ell
\end{equation}
for any function $\psi$ on $\cS$ \cite{Andersson:2005,Andersson:2007fh}. This measures the rate of change of the 
expansion if $\mathcal{S}$ is deformed by an amount $\psi$ in the $N$
direction. Equivalently, 
it is the linearization of $\theta_\ell$ when $\mathcal{S}$ is deformed by an amount $\epsilon \psi$ in the $N$ direction:
\begin{equation}
\left. \theta_{\ell} \right|_{\mathcal{S} + \epsilon \psi N} =
 (L_\mathcal{S} \psi) \, \epsilon
+ \mathrm{O}(\epsilon^2)\; . 
\end{equation} 
Like the closely related stability operator 
for minimal surfaces in Riemannian geometry, this is also a linear elliptic operator (though in general it is not self-adjoint). As such it has a discrete spectrum, consisting of isolated eigenalues with finite multiplicity. Moreover, the eigenvalue with the smallest real part, the principal eigenvalue $\lambda_o$, is non-degenerate and real. 

If $\lambda_o >0$ then $\mathcal{S}$ is said to be \emph{strictly stable}. In this case there exists a two-sided neighbourhood $\mathcal{N}$ of $\mathcal{S}$ such that any outer trapped surface contained in $\mathcal{N}$ must be entirely 
    contained in the inner part $\mathcal{N}_-$, and any outer untrapped surface contained in $\mathcal{N}$ must be entirely contained in the 
    outer part $\mathcal{N}_+$ \cite{Andersson:2005,Andersson:2007fh}. This is illustrated in Figure \ref{fig:stability}.

% \begin{enumerate}
%     \item
%     \begin{equation}
%         \delta_{\psi N} \theta_\ell > 0 \Rightarrow \psi > 0 \; . 
%     \end{equation}
%     That is, all deformations generating outer untrapped surfaces
%     are necessarily outwards and those generating outer 
%     trapped surfaces are necessarily inwards. 
%     \item At least one set of such deformations exists: the principal eigenfunction $\psi_o>0$ and so 
%     \begin{equation}
%         \delta_{\psi_o N} \theta_\ell = \lambda_o \psi_o >0 \; . 
%     \end{equation}
%     % That is, there exist outward deformations generating outer untrapped surfaces and inward ones generating outer trapped surfaces. 
%     \item There exists a two-sided neighbourhood $\mathcal{N}$ of $\mathcal{S}$ such that any outer trapped surface contained in $\mathcal{N}$ must be entirely 
%     contained in the inner part $\mathcal{N}_-$. Similarly any outer untrapped surface contained in $\mathcal{N}$ must be entirely contained in the 
%     outer part $\mathcal{N}_+$. This is illustrated in Figure \ref{fig:stability}.
% \end{enumerate}
Hence, strictly stable MOTSs form local barriers between an outside region of outer untrapped surfaces and an inside region of outer trapped surfaces. As such, they are sensible local replacements for apparent horizons. 
Unstable MOTSs (with $\lambda_o<0$) flip this distinction: they have outer trapped surfaces outside and outer untrapped surfaces inside\footnote{In Section 
\ref{sec:inout} we will see that, while there is still a partial separation of trapped versus untrapped regions, an unstable MOTS is not an impermeable barrier. There can be (un)trapped surfaces in $\mathcal{N}$ that straddle it.}. This opposite behaviour for stable versus unstable MOTS is illustrated in Figure \ref{fig:mtt}.

This understanding of strictly stable MOTSs as local apparent horizons matches intuition.
During a non-rotating, head-on black hole merger, the only stable MOTSs are the two initial apparent horizons and the final apparent horizon\cite{Pook-Kolb:2021jpd,Pook-Kolb:2021gsh}. All other MOTSs have one or more negative eigenvalues and so should not be thought of as horizons.
Similarly, in standard black hole exact solutions, such as the Kerr--Newman--de Sitter family, only the outermost black hole horizons are stable MOTSs.
The MOTSs foliating interior and cosmological horizons are unstable \cite{Booth:2020qhb,Hennigar:2021ogw,Booth:2022vwo,Sievers:2023zng}.

Note that while  \cite{Andersson:2005,Andersson:2007fh} are key to our modern understanding of local apparent horizons, they were not the first time that such an idea was discussed. Notable earlier formulations include Hayward's \emph{future outer trapping horizons (FOTHs)} \cite{Hayward:1994}  and the even earlier discussion of stability in \cite{Newman_1987}.

\subsection{MOTS stability operator: governing evolution}

The stability operator is  more than just a tool for classifying MOTSs.  
If spacetime is deformed, it also determines how $\mathcal{S}$ must respond in order to remain a MOTS.

Consider a MOTS $\mathcal{S}$ with stability operator $L_{\mathcal{S}}$ in a time slice $\Sigma_{t_0}$.
It was shown in \cite{Andersson:2005,Andersson:2007fh,Andersson:2008up} that if zero is not an eigenvalue of $L_{\mathcal{S}}$, so it is invertible, then $\mathcal{S}$ can be evolved uniquely in time into a \emph{marginally outer trapped tube (MOTT)} for some finite interval $(t_a, t_b)$ 
containing $t_o$, with the evolved $S_t \subset \Sigma_t$. Note that the unique evolution is guaranteed into both the past and future, since $t_a < t_o < t_b$.
Strictly stable MOTSs not only evolve uniquely (their principal eigenvalue is positive and so all other eigenvalues are also non-zero) but their associated MOTTs are 
necessarily null or spacelike and (given appropriate energy conditions) non-contracting in area (Figure \ref{fig:mtt}).

It is important to note that the proof of unique evolution applies to any variation of
the spacetime containing the MOTS. For example, it would equally well cover variations of the charge parameter $q$ in Reissner--Nordstr\"om, a deformation parameter in the Weyl-distorted 
Schwarzschild solutions \cite{Pilkington:2011aj} or indeed any other perturbation of 
the spacetime. In any of these cases (to which 
we will return) the existence of a MOTS with invertible stability operator in an initial spacetime 
implies that in any perturbation of that spacetime there will also be a perturbed version 
of that MOTS.

If an eigenvalue vanishes, then the MOTS can have more interesting 
evolutions. Starting with the simple case of
spherically symmetric dust collapse, it was 
observed in \cite{Booth:2005b} (and is illustrated in 
Figure \ref{fig:mtt}) that at points where the 
principal eigenvalue vanishes, MOTSs can be 
created and annihilated in pairs. In the figure, dust accretes
onto an existing black hole. From $t_1$ to $t_2$ the principal 
eigenvalue of each MOTS is positive and so the MOTT expands. However at $t_2$ the infalling dust reaches sufficient density that a new MOTS forms outside of the original one. This has 
vanishing principal eigenvalue and so can bifurcate into 
outer and inner MOTSs. The outer MOTS has positive principal 
eigenvalue and is expanding. The inner one has negative principal eigenvalue. Hence, it  no longer forms a boundary with outer untrapped surfaces outside and outer trapped surfaces inside. Instead the outer trapped region is now outside while the outer untrapped region is inside. This MOTS contracts until $t_6$ when it meets the original horizon, at which point they annihilate with vanishing principal eigenvalue (a non-unique evolution into the past with no evolution into the future). At this point the original 
black hole has lost its distinct identity and so we say that it 
has \emph{dissolved} in the new, larger one. 

\begin{figure}
    \centering
    \includegraphics[width=0.9\linewidth]{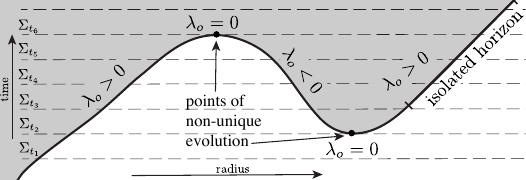}
    \caption{Cartoon spacetime diagram of the evolution of a spherical MOTT as ingoing matter forces it to expand. The MOTS bifurcates at the points where the principal eigenvalue vanishes, and so its evolution is non-unique. In contrast to the strictly stable case, it only extends into one of the past/future. At the end of the evolution it settles down into equilibrium as an isolated horizon. As per Section \ref{sec:introapp}, when the MOTS is strictly stable the outer trapped region is inside and the untrapped region is outside. }
    \label{fig:mtt}
\end{figure}

To date, the most detailed tracking of MOTSs during binary mergers was conducted in the series of papers \cite{Pook-Kolb:2018igu,Pook-Kolb:2019iao,Pook-Kolb:2019ssg, Pook-Kolb:2021gsh,Booth:2021sow,Pook-Kolb:2021jpd,kastha2026cuspformationmergingblack}. These studied head-on 
collisions of non-rotating black holes and found an extensive, intricate
series of MOTS pair creations and annihilations that mediated the 
evolution from the two initial apparent horizons into that of the final 
black hole. In contrast to the examples considered in Figure \ref{fig:mtt}, these MOTSs had exotic, non-symmetric geometries
and many even had (multiple) self-intersections. Further, creations and annihilations in these
cases occurred not only for a vanishing principal eigenvalue, but also 
for vanishing higher eigenvalues: these events don't just occur between stable and 
unstable MOTS but can also happen for pairs of unstable ones. 

There were also hints that other things could be happening. 
During the simulations of \cite{Pook-Kolb:2021jpd} it was observed
that it is also possible for eigenvalues to vanish during otherwise
smooth evolutions (see Figure 17 and the associated discussion on page 16 of that reference). This is consistent with the fact that vanishing eigenvalues are 
necessary though not sufficient indicators of
non-unique evolution. 

However, there is another possible explanation of what is happening at that disappearance. 
In \cite{Pook-Kolb:2021jpd},
the authors only searched for axisymmetric MOTS, whereas 
the vanishing 
eigenvalues were associated with non-axisymmetric eigenfunctions. This suggests that they might correspond to new MOTSs
bifurcating off the original one in non-axisymmetric ways. Such events would have been invisible to 
axisymmetric searches. However, 
even if that is the case, there would be something different happening from the now well-known MOTS pair creation and annihilation events. The MOTS which developed the vanishing eigenvalue
did not appear or vanish, but instead
persisted. Any non-unique evolution would then involve
extra MOTSs splitting off from and/or joining the 
original one. 
Hence,
this would have to have been a different kind of evolution.

There are other examples of these kinds of events. 
Another recent paper 
\cite{Hennigar:2021ogw} included a study of MOTSs
in the Reissner--Nordstr\"om spacetime as the charge is varied. In that case the inner horizon 
at $r_{\mathrm{IH}} = m - \sqrt{m^2-q^2}$  persists for any $0 < q \leq m$. This is a 
spherical MOTS and so its eigenfunctions are spherical harmonics $Y_{n p} (\theta, \phi)$ and its eigenvalues may easily be calculated. 
Restricting our attention to 
axisymmetric $Y_{n 0}$, the $n^{\mathrm{th}}$ eigenvalue\footnote{Note that \cite{Hennigar:2021ogw} restricted attention to odd numbered eigenvalues and numbered them accordingly. Hence the formula for $q_n$ appears differently in that paper. } 
vanishes when \cite{Booth:2021sow}
% \begin{equation}
% \frac{Q_N}{M} = \frac{\sqrt{4N^2 + 2N + 1}}{2N^2 + N + 1}  \label{eq:RNvanishX}
% \end{equation}
\begin{equation}
    \frac{q_n}{m} = \frac{2 \sqrt{n^2 + n + 1}}{n^2 + n + 2}  
    = \left\{1, \frac{\sqrt{3}}{2}, \frac{\sqrt{7}}{4}, \frac{\sqrt{13}}{7}, \frac{\sqrt{21}}{11}, \dots \right\} \label{eq:RNvanish}
\end{equation}
and in general for $q_{n+1} < q \leq q_n$
there will be $n$ negative eigenvalues. 
In that paper it was seen (see for example
Figure 12 and the associated discussion in \cite{Hennigar:2021ogw}) that other MOTSs may split from the inner one and 
this splitting occurs at points where an eigenvalue vanishes. These are examples of 
non-unique evolution events that are not pair creations/annihilations.

 \subsection{Overview}

This paper returns to more carefully examine MOTSs with vanishing eigenvalues and the non-unique evolutions of MOTTs from those surfaces. Our core result is that the possible non-unique evolutions are constrained by results from bifurcation theory. This sheds light on the results mentioned at the end of the last section and also motivates the search for other bifurcations. Finding and constructing these examples is technically non-trivial, hence a significant portion of the paper is spent developing the analytic and numerical tools needed to do that work. We proceed as follows. 

In Section \ref{backgroundI},  we review both the basic mathematics of MOTSs as well as to how, in axisymmetry, the MOTS-defining 
equations may be reformulated as a coupled pair of ordinary differential equations which can be solved by standard numerical methods. These are the equations that we use to identify MOTSs in the various spacetimes that we study. We also define the MOTS stability operator and use it to characterize when a MOTS forms a boundary between trapped and untrapped regions.

%Section \ref{sec:stabop} presents and then studies a generalized stability operator in which $\ell$ has an arbitrary scaling and deformations can be in any normal direction (spacelike, timelike or null). We 
%specialize the discussion to the particular cases of non-expanding horizons
%and then deformations in a given time slice (as in \cite{Andersson:2005,Andersson:2007fh}). Calculating the spectra of any of these operators generally requires numerical techniques and these are discussed in \ref{app:spectra}.

Section \ref{sec:MOTSevolve} introduces the tools of bifurcation theory to the study of MOTS evolution. In particular, we start with a review of one-dimensional bifurcations and describe the three normal forms (saddle-node, transcritical, pitchfork) that arise in our examples, then explain how this applies to the infinite-dimensional phase space of MOTS bifurcations.

The core of the paper, Sections \ref{sec:sphere_stab}--\ref{sec:weyl}, presents examples of different types of bifurcations in various exact solution spacetimes:
\begin{itemize}
    \item Section \ref{sec:sphere_stab} 
considers spherically symmetric bifurcations in Reissner--Nordstr\"om--de Sitter spacetimes, where the inner, outer and cosmological horizons merge or split as the mass, charge and cosmological constant are varied. In these simple cases all calculations can be done analytically.  

    \item Section \ref{sec:axiRN} partially departs from spherical symmetry by studying bifurcations from the inner horizon of Reissner--Nordstr\"om. The background spacetime is spherically symmetric but most of the MOTSs studied are only axisymmetric. 
This section builds from the work of \cite{Hennigar:2021ogw}, extending those results with new insights provided by bifurcation theory. Almost all of the bifurcations considered in this sections are associated with vanishing non-principal eigenvalues. 

    \item Section \ref{sec:weyl} examines bifurcations in the Weyl--Schwarzschild spacetimes, focusing on the case of quadrupole distortions. This is still an exact solution but this time neither the spacetime nor the MOTSs are spherically symmetric: we only have axisymmetry. Both principal and non-principal eigenvalue bifurcations are considered. 

\end{itemize}

In Appendix \ref{sec:stabop} we define a generalized stability operator and establish some properties which, in particular, are very useful in stability calculations. Finally, in Appendix \ref{sec:numerics} we review the numerical methods used in this paper.

\section{Background: Marginally outer trapped surfaces}
\label{backgroundI}

\subsection{Definitions}
\label{sec:AH}

Let $(M,g_{ab}, \nabla_a)$ be a four-dimensional time-oriented spacetime
and let $(\mathcal{S},q_{AB}, \mathcal{D}_A)$ be a
 closed, spacelike two-dimensional surface immersed in $M$. It is allowed for $\mathcal{S}$ to have
 self-intersections. 

The normal bundle to $\mathcal{S}$ is two-dimensional and can be
spanned by a dyad of two future-pointing null vectors $(l^+, l^-)$, which we assume can
be meaningfully labeled as outward- and inward-oriented, respectively. We cross-normalize
so that $l^+ \cdot l^- =  -1$. This leaves a degree of rescaling freedom in the
definition of the null vectors. Rescaling by 
\begin{equation}
    (l^+, l^-)  \rightarrow (e^\beta l^+, e^{-\beta} l^-) \label{gauge}
\end{equation} 
for an arbitrary function $\beta$ results in a pair of vectors that also satisfy our scaling condition. 

\subsubsection{Null expansion}

The 
outward/inward null expansions are
\begin{equation}
    \theta_{l^\pm} = q^{ab} \nabla_a l^\pm_b \, , 
    \label{eq:MOTS4D}
\end{equation}
where
\begin{equation}
    q^{ab} := e^a_A e^b_B q^{AB} = g^{ab} + l_+^a l_-^b +  l_-^a l_+^b
\end{equation}
is the push-forward of the inverse of the  induced metric $q_{AB}$ on $\mathcal{S}$. Note that the position of $\pm$ as a sub/superscript doesn't convey meaning but switches up and down as is notationally convenient.

We say that $\mathcal{S}$ is an \emph{outer trapped surface} if $\theta_{l^+}< 0$, a \emph{marginally outer trapped surface (MOTS)} if $\theta_{l^+}=0$, and
%. Encompassing both of these 
a \emph{weakly outer trapped surface} if $\theta_{l^+} \leq 0$.
%
%has $\theta_{l^+} \leq 0$. 
An \emph{outer untrapped surface} has  $\theta_{l^+} >0$ and a \emph{weakly outer untrapped surface} has 
$\theta_{l^+} \geq 0$\footnote{Not all $\mathcal{S}$ are included in this classification: if $\theta_{l^+}$ changes sign
over the surface then it falls into none of these classes.} .
The sign of $\theta_{l^+}$ is unchanged if we replace $l^+$ by any other outward-pointing future null vector: such a vector must be of the form $\ell = e^\mu l^+$ for some positive function $e^\mu$, therefore
\begin{equation}
    \theta_\ell = e^\mu \theta^{l_+}
    \; .
\end{equation}
This means that  the above classifications are invariant under such rescalings. We will see below (in Section \ref{sec:HGSO} and Appendix \ref{sec:stabop}) how $\mu$ can be chosen to simplify our calculations.

\subsubsection{Rotation}
\label{ssec:rotation}
Expressed relative to the dyad $(l^+, l^-)$, the connection on the normal bundle $T^\perp \mathcal{S}$ is  
\begin{equation}
\omega^{l_+}_A = - e_A^b l^-_c \nabla_b l_+^c \; . 
\label{eq:omega}
\end{equation}
For an axisymmetric MOTS with rotational Killing vector field $\frac{\partial}{\partial \phi}$ and area form $\sqrt{q}$,
\begin{equation}
J = \int_\mathcal{S} \sqrt{q} \left(\frac{\partial}{\partial \phi} \right)^A \omega^{l_+}_A
\end{equation}
is the angular momentum associated with that rotation \cite{Brown:1992br, Ashtekar:1999a, Ashtekar:2000b, Ashtekar:2001jb , Booth:2007}.  This vanishes if
\begin{equation}
    \omega^{l_+}_A = D_A F
    \label{eq:totalder1}
\end{equation}
for some function $F$. We thus say that a MOTS is \emph{non-rotating} if (\ref{eq:totalder1}) holds.  When $\mathcal{S}$  has spherical topology this is equivalent to the curvature of the normal bundle vanishing, $\Omega = d \omega^{l_+} = 0$.

\subsubsection{$(3+1)$ versions}
In mathematical and numerical relativity it is common to work in a $(3+1)$ formalism,
where the spacetime is foliated by spacelike slices $(\Sigma_t, h_{ij}, D_i)$ for some
time parameter $t$. These have a 
future-pointing unit timelike normal $\hat{u}^a$ and extrinsic curvature
\begin{equation}
    K_{ij} = e_i^a e_j^b \nabla_a \hat{u}_b 
\end{equation}
where $e_i^a$ is the pullback operator into $\Sigma_t$. 

If one restricts attention to surfaces $\mathcal{S}$ contained in a slice $\Sigma_t$, the general form of the future-oriented null normals is \begin{equation}
    l^a_+ = e^\beta (\hat{u}^a + N^a)  \quad \mbox{and} \quad  l^a_- = \frac{e^{-\beta}}{2} \left( \hat{u}^a - N^a \right) 
    \label{eq:lplm}
\end{equation}
where $N^a = e^a_i N^i$ is the push-forward of the outward-oriented unit normal
to $\mathcal{S}$ in $\Sigma_t$ and $\beta$ is an arbitrary function. Then
\begin{equation}
    \theta_{l^+} =e^\beta \left( q^{ij} K_{ij} + D_i N^i   \right) \, ,
\label{eq:MOTS3D}
\end{equation}
where we can also write $q^{ij} K_{ij} = \tr K - K_{ij} N^i N^j$, with $\tr K = h^{ij} K_{ij}$ being the mean curvature of $\Sigma_t$. This shows that $\theta_{l^+}$ depends only on surface quantities $q_{ij}$, $K_{ij}$ and the in-surface normal $N^i$.

\subsection{The MOTSodesic method for locating MOTSs}
\label{sec:MOTSodesic}

While it is straightforward to identify MOTSs defined by constant coordinate parameters in 
static or stationary spacetimes (such 
as the $r=2m$ horizon in Schwarzschild) it is not so simple to identify them in general. For these non-trivial cases, 
identifying a MOTS involves solving the partial differential equation $\theta_\ell = 0$\cite{thornburg, Pook-Kolb:2018igu}.

For an axisymmetric MOTS in an axisymmetric spacetime, however, this PDE can be recast as a pair of coupled ordinary differential equations: the so-called 
MOTSodesic equations\cite{Booth:2021sow,Booth:2022vwo}. 
Their derivation is straightforward for the non-rotating axisymmetric cases that we encounter in this paper. See \cite{Booth:2022vwo} for more complicated cases, including rotating black hole 
spacetimes of arbitrary dimension.

We assume that the slices $(\Sigma_t, h_{ij},  D_i)$
and $K_{ij}$ share the axisymmetry of the spacetime
and so, in a spherical coordinate system 
$(r,\theta, \phi)$, are functions only of $(r,\theta)$. We further assume that both the metric and extrinsic curvature
take  block-diagonal forms with no
$\dd r\dd \phi$ or $\dd \theta \dd \phi$ components:
\begin{equation}
h_{ij} = \left[
\begin{array}{ccc}
 \tilde{h}_{rr} & \tilde{h}_{r\theta} & 0 \\
 \tilde{h}_{r\theta} & \tilde{h}_{\theta \theta} & 0 \\
 0 & 0 & \mathcal{R}^2 \\
\end{array} 
\right] \; \;  \mbox{and} \;
K_{ij} = \left[ \begin{array}{ccc}
 \tilde{K}_{rr} & \tilde{K}_{r\theta} & 0 \\
 \tilde{K}_{r\theta} & \tilde{K}_{\theta \theta} & 0 \\
 0 & 0 & K_{\phi \phi}\\
\end{array} 
\right] \; . \label{eq:hblock}
\end{equation}
% With these assumptions, a circle of constant $r$ and $\theta$ has circumference  $C = 2 \pi \mathcal{R}$. Then the 
% smoothness of $\Sigma_t$ requires that
% \begin{equation}
% \lim_{\theta \rightarrow 0} \mathcal{R}= \lim_{\theta \rightarrow \pi} \mathcal{R} = 0 \; . 
% \end{equation}
%
%
As shown  in Figure \ref{fig:MOTSo}, we think of
$(\Sigma_t, h_{ij},  D_i)$ as a rotation of a 
two-dimensional half-plane geometry
$(\tilde{\Sigma}_t, \tilde{h}_{\tilde{\imath}\tilde{\jmath}}, \tilde{D}_{\tilde{\imath}})$ (where $\tilde{\imath}, \tilde{\jmath}$ run over $(r,\theta)$) with respect to the azimuthal angle $\phi$. An arclength parameterized curve $\gamma(s) = (P(s), \Theta(s))$ in $\tilde{\Sigma}$ then rotates into an $(s,\phi)$ parameterized surface $\mathcal{S}(s, \phi) = (P(s), \Theta(s), \phi)$ in $\Sigma$ (Figure \ref{fig:MOTSo} again). 

\begin{figure}
    \includegraphics[width=0.8\linewidth]{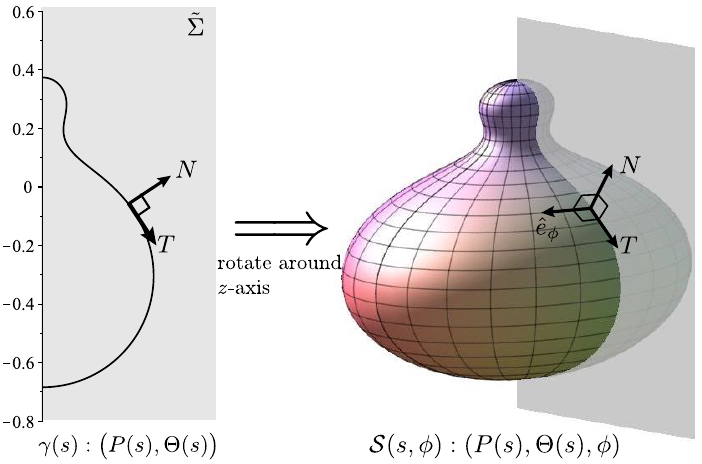}
    \caption{Rotation of a curve $\gamma(s)$ in the $\tilde{\Sigma}$ half-plane into a surface $\mathcal{S}(s,\phi)$ in $\Sigma$.}
    %Given a metric $h_{ij}$ of the form (\ref{eq:hblock}), the normal $N^i$ to $\mathcal{S}$ is $N^i = e^i_{\tilde{\jmath}} N^{\tilde{\jmath}}$, the push-forward to $T \Sigma$ of the unit normal to $\gamma$ in $T \tilde{\Sigma}$.}
    \label{fig:MOTSo}
\end{figure}

Writing derivatives  with respect to arclength with a dot, $T^{\tilde{\imath}} = [\dot{P}, \dot{\Theta}]$ is the unit tangent to $\gamma(s)$ in $\tilde{\Sigma}$ and the left-hand unit normal is
\begin{equation}   
N^{\tilde{\imath}} = \tilde{\varepsilon}^{\tilde{\imath}}_{\phantom{\imath} \tilde{\jmath}} T^{\tilde{\jmath}} \; ,  \label{eq:N}
\end{equation}
where $\varepsilon_{\tilde{\imath} \tilde{\jmath}}$ is the 
Levi-Civita tensor on $\tilde{\Sigma}$. Along $\gamma$ these form a basis for $T \tilde{\Sigma}$ and, as for any arclength
parameterized curve,
\begin{equation}
    T^{\tilde{\jmath}} \tilde{D}_{\tilde{\jmath}} T^{\tilde{\imath}} = \kappa N^{\tilde{\imath}} \label{eq:TdT}
\end{equation}
for some curvature function $\kappa$. Given such a 
curve, we can calculate 
\begin{equation}
    \kappa = N_{\tilde{\imath}} T^{\tilde{\jmath}} \tilde{D}_{\tilde{\jmath}} T^{\tilde{\imath}} \; . 
\end{equation}

Turning to $\mathcal{S}$, at any point on the 
surface the tangent space is spanned by $(T,N, \hat{e}_\phi)$: the 
push-forwards of the unit tangent and normal vectors 
from $T \tilde{\Sigma}$ along with the unit vector pointing in the rotational 
symmetry direction. Then 
\begin{equation}
    h^{ij} = T^i T^j + N^i N^j + \hat{e}_\phi^i \hat{e}_\phi^j
\end{equation}
and so the mean curvature of $\mathcal{S}$ in 
$\Sigma$ is 
\begin{equation}
    k_N = D_i N^i = h^{ij} D_i N_j = (T^i T^j + \hat{e}_\phi^i \hat{e}_\phi^j) D_i N_j = - \kappa - N^i a^{\hat{e}_\phi}_i\, , 
\end{equation}
where $a^{\hat{e}_\phi}_i$ is the acceleration of $\hat{e}_\phi$, which by direct calculation can be shown to be equal to
\begin{equation}
  a^{\hat{e}_\phi}_i =  -   D_i \ln \mathcal{R} \; . \label{eq:acc}
\end{equation}
Hence, the MOTS condition $\theta_\ell = 0$ can be transformed with the help of (\ref{eq:MOTS3D}) into an expression for $\kappa$ in terms of 
$h_{ij}$ and $K_{ij}$:
\begin{equation}
    \theta_\ell = 0 \quad \Longrightarrow \quad k_N + q^{ij} K_{ij}  = 0 \quad \Longrightarrow \quad 
    \kappa =  -   N^i a_i^{\hat{e}_\phi}  + 
    q^{ij} K_{ij} \label{eq:kappa} \; . 
\end{equation}
Then (\ref{eq:N}), (\ref{eq:TdT}) and (\ref{eq:kappa}) together define a pair
of coupled, second-order, non-linear differential equations for $(P(s),\Theta(s))$ whose solutions 
rotate into surfaces with $\theta_{l^+}=0$:
\begin{align}
    \ddot{P} & = -\tilde{\Gamma}^r_{\tilde{\imath} \tilde{\jmath}} T^{\tilde{\imath}} T^{\tilde{\jmath}} +  \kappa 
    N^{r} \label{eq:PT}\\
\ddot{\Theta} & = -\tilde{\Gamma}^\theta_{\tilde{\imath} \tilde{\jmath}} T^{\tilde{\imath}} T^{\tilde{\jmath}}
+ \kappa  
N^{\theta} \; . \nonumber 
\end{align}
These are the MOTSodesic equations. 

These equations require boundary conditions: 
to be a MOTS, a surface $\mathcal{S}$ must be closed, in addition to having $\theta_{l^+}=0$. This means a MOTSodesic curve will only generate a MOTS if it intersects the $z$-axis orthogonally (and so avoids having a conical singularity there) as in the example of Figure \ref{fig:MOTSo}, or is a closed curve in the half-plane $\tilde\Sigma$ and hence rotates
into a torus (for examples of toroidal MOTS see \cite{Pook-Kolb:2021jpd} or \cite{Sievers:2023zng}). 
In this paper, all MOTS will be deformed from initially spherical ones and so will intersect the $z$-axis.
% However, there is a technical complication in that degeneracies at $\theta = 0, \pi$ mean that (\ref{eq:PT}) are not well-defined there. We address this problem in the next subsection. 
The numerical procedure we used to solve the MOTSodesic equations is presented in Appendix \ref{app:MOTSo}.

\subsection{The stability operator}
\label{sec:slice}

Having defined MOTSs and shown how they can be found numerically, we now study their deformations. In Appendix \ref{sec:stabop} we will define and study a generalized stability operator that measures the variation in $\theta_\ell$ when $\cS$ undergoes an arbitrary normal deformation  within the spacetime $M$. Here we restrict our attention to the case that is most relevant in most applications: deformations that lie in a time slice $(\Sigma_t, h_{ij},  K_{ij} )$, as discussed in Section \ref{sec:AH}. 
%This is the case most studied in both mathematical and numerical relativity and so it merits explicit discussion. 

Letting $N$ denote the outward unit normal to $\cS$ in $\Sigma_t$, and $\ell$ a choice of future-pointing outward null vector along $\cS$, the stability operator, $L_{\mathcal S}(\ell, N)$, is given by
\begin{equation}
    L_{\mathcal S}(\ell, N) \psi := \delta_{\psi N} \theta_\ell
\end{equation}
for any function $\psi$ on $\cS$. Writing $\ell = e^\mu l^+$ for some function $\mu$, we obtain (from the general formula (\ref{eq:var}) in Appendix \ref{sec:stabop} with $A=\frac12$ and $B=1$) 
\begin{equation}
    L_{\cS}(\ell,N) = e^\mu \Big( 
    -  \mathcal{D}^2_{l^+} \label{eq:varSigma}
   +  \mathcal{K}  -  \frac{1}{2}  \|\sigma_{l^+}\|^2 - G_{l^+ l^-}  - \frac{1}{2}  G_{l^+ l^+} \,    \Big) \psi  \; ,
\end{equation}
where $\mathcal{K}$ is the Gauss curvature of $\mathcal{S}$, 
$G_{l^+ l^-} = G_{ab} l_+^a l_-^b$ and $G_{l^+ l^+} = G_{ab} l_+^a l_+^b$ for the Einstein tensor $G_{ab}$, and $\|\sigma_{l^+}\|^2 = \sigma_{AB}^{l^+} \sigma^{AB}_{l^+}$
for the outward null shear
\begin{equation} 
\sigma^{l_+}_{AB} = k^{l_+}_{AB} - \frac{1}{2}  \theta_{l^+} q_{AB} = 
e_A^a e_B^b \nabla_A l^+_B - \frac{1}{2} \theta_{l^+} q_{AB} \; . 
\end{equation}
For the derivative terms we have
\begin{equation}
    \mathcal{D}_{l^+}^2 F  = (\mathcal{D}^A - \omega_{l^+}^A)(\mathcal{D}_A - {\omega}_A^{l^+}) F\;,
\end{equation} 
where $\mathcal{D}$ is the covariant derivative on $\mathcal{S}$ and ${\omega}^{l_+}_A$ is the connection on the normal bundle to $\mathcal{S}$ defined  in (\ref{eq:omega}).

From (\ref{eq:varSigma}) we see that $L_\cS(\ell,N)$ is a second-order linear elliptic differential operator. Since $\cS$ is compact, the spectrum of $L_{\mathcal S}(\ell,N)$ consists of isolated eigenvalues of finite multiplicity. While the stability operator is not necessarily self-adjoint, and so may have complex eigenvalues, the principal eigenvalue $\lambda_o$ (the eigenvalue with the smallest real part) is real and simple, so the corresponding eigenfunction is unique up to a scalar multiple. Moreover, any such eigenfunction is nowhere vanishing, so it can be chosen to be positive everywhere \cite{Andersson:2005}. We then say that $\cS$ is \emph{stable} if $\lambda_o \geq 0$ and unstable if $\lambda_o < 0$. The stable case can be further subdivided into \emph{strictly stable} ($\lambda_o > 0$) and \emph{marginally stable} ($\lambda_o = 0$).

%In Appendix \ref{sec:stabop} we will see that the the operator $L_{\cS}(\ell,N)$ depends on the scaling of $\ell$ but its eigenvalues do not. Moreover, if we rescale $N$ (so it no longer has unit length) the eigenvalues will change, but the number of negative and zero eigenvalues will not. So, if we are only interested in the signs of the eigenvalues, the scaling of $\ell$ and $N$ is irrelevant. In Section \ref{sec:HGSO} we will see that this freedom is useful in numerical calculations, for instance when the stability operator has very large coefficients, rendering the calculation of its eigenvalues numerically unstable. By scaling $\ell$ and/or $N$ we obtain an operator that still encodes the stability of $\cS$ but has more manageable coefficients.

We will see in Appendix \ref{sec:rescaling} that the stability does not depend on the scaling of $\ell$, so we can simplify the form of (\ref{eq:varSigma}) by setting $\mu = 0$, which means $\ell = l_+$.  Alternatively, $\mu$ could be chosen to cancel off complicated leading factors that would cause numerical problems. For example, in Section \ref{sec:HGSO} we choose a $\mu$ to eliminate an inconvenient leading exponent that multiplies the entire operator.

%\ivan{Haven't defined non-rotating at this point...}

\subsection{Stability operator for axisymmetric MOTSs}
\label{sec:eigcalc_axi}

In the examples of later sections we will need to calculate stability spectra for MOTSs generated from MOTSodesics, as in Section \ref{sec:MOTSodesic}. Relative to the spherical-type coordinates $(r,\theta, \phi)$ on $\Sigma$, such a MOTS can be parameterized by arclength along 
the meridians $s \in (0,\smax)$  and $\varphi = (- \pi, \pi) $ as 
\begin{equation}
    r = P(s) \, , \quad \theta = \Theta(s) \quad  \mbox{and} \quad \phi = \varphi \, , 
\end{equation}
where $P(s)$ and $\Theta(s)$ are solutions of the MOTSodesic equations (\ref{eq:PT}). The induced metric on this surface is 
\begin{equation}
    q_{AB} \dd x^A \dd x^B = \dd s^2 + \mathcal{R}(s)^2 \dd \varphi^2 
\end{equation}
where $\mathcal{R}^2 (s)  = h_{\phi \phi} \big(
P(s),\Theta(s) \big)$. 
The unit tangent vectors to the surface are 
\begin{equation}
    T = \frac{\partial}{\partial s} \rightarrow  \dot{P} \frac{\partial}{\partial r} + \dot{\Theta} \frac{\partial}{\partial \theta} \quad \mbox{and} \quad \hat{e}_\phi = \frac{1}{\mathcal{R}} \frac{\partial}{\partial \varphi} \rightarrow   \frac{1}{\mathcal{R}} \frac{\partial}{\partial \phi} \, , 
\end{equation}
where dots indicate derivatives with respect to $s$ and arrows indicate the push-forward into the 
full tangent space of $\Sigma$. The outward-oriented unit normal is then 
\begin{equation}
    N^{\tilde{\imath}} \frac{\partial}{\partial x^{\tilde{\imath}}}  = \sqrt{\tilde{h}} \left( 
    \left(\tilde{h}_{r \theta} \dot{P} + \tilde{h}_{\theta \theta} \dot{\Theta} \right) \frac{\partial}{\partial r} - 
    \left(\tilde{h}_{r r} \dot{P}   + \tilde{h}_{r \theta} \dot{\Theta} \right) \frac{\partial}{\partial \theta}
    \right) 
\end{equation}
where $\tilde{h} = \tilde{h}_{rr} \tilde{h}_{\theta \theta} - \tilde{h}^2_{r \theta}$. This was raised from the normal one-form $N_{\tilde{\imath}} \dd x^{\tilde{\imath}} = \sqrt{\tilde{h}} \big( \dot{\Theta} \dd r - \dot{P} \dd \theta \big)$.
% \begin{equation}
%     N_{\tilde{\imath}} \dd x^{\tilde{\imath}} = \sqrt{\tilde{h}} \left( \dot{\Theta} \dd r - \dot{P} \dd \theta \right) \;.
% \end{equation}

Then we need to calculate the components of 
(\ref{eq:varSigma}). Choosing $\ell=l^+$, so that $\mu=0$, we
expand this as 
\begin{equation}
\label{eq:axiL}
   L_{\mathcal{S}}(l^+, N)   = 
    -\mathcal{D}^2 + 2 \omega_{l^+}^A \mathcal{D}_A + \Big( D_A \omega_{l^+}^A- \| \omega_{l^+} \|^2 +   \mathcal{K}
    \ - \frac{1}{2}  \|\sigma_{l^+}\|^2 - G_{\hat{u} \hat{u}}   - G_{\hat{u} N} \Big)  \; .
\end{equation}
% \begin{align}
%    L_{\mathcal{S}}(\ell, N)   =  e^\mu \bigg(
%     -\mathcal{D}^2 + 2 \omega_{l^+}^A \mathcal{D}_A + \Big(& D_A \omega_{l^+}^A- \| \omega_{l^+} \|^2 +   \mathcal{K}
%    \label{eq:axiL}\\
%    & \ - \frac{1}{2}  \|\sigma_{l^+}\|^2 - G_{\hat{u} \hat{u}}   - G_{\hat{u} N} \Big) \bigg)  \; . \nonumber 
% \end{align}
Taking the null normal scaling (\ref{eq:lplm}) with $\beta=0$, from
%(\ref{eq:tom}) and 
the block diagonal form of $K_{ij}$ we get
\begin{equation}
    \omega_{l^+} = \tilde{K}_{TN} \dd s 
    %= (K_{rr} \dot{P}^2 + 2 K_{r \theta} \dot{P} \dot{\Theta} + K_{\theta \theta} \dot{\Theta}^2  )  \dd s
    \, ,   \label{eq:Ktom}
\end{equation}
where we have abbreviated $\tilde{K}_{TN} = \tilde{K}_{\tilde{
    \imath} \tilde{\jmath}} T^{\tilde{
    \imath}} N^{\tilde{
    \imath}}$. The components of the extrinsic 
curvature are also functions of $s$: 
$K_{ij} = K_{ij} (P(s), \Theta(s))$. As 
such, $\omega_{l^+}$ is a total 
derivative and so $L_{\mathcal{S}}(\ell, N)$ has all real eigenvalues and there is a gauge choice for which it is self-adjoint (Proposition \ref{prop:rotation}). However,  
 (\ref{eq:Ktom}) is generally not integrable in 
closed form and so it is computationally inconvenient
to transform $L_{\mathcal{S}} (\ell, N)$ into its self-adjoint form. We thus stay with our current gauge choice and work with this non-zero $\omega_{l^+}$.

We find expressions for the various components of this equation.
% First, 
% the induced metric on the MOTS $\mathcal{S}$ can be written as
% \begin{equation}
%     q_{AB} \dd x^A \dd x^B = \dd s^2 + \mathcal{R}(s)^2 \dd \phi^2 
% \end{equation}
% where $s$ is the arclength along curves of constant $\phi$ and $\mathcal{R} (s)  = \mathcal{R} (P(s), \Theta(s))$.  
For a function $\psi(s, \phi)$, the 
Laplacian is
\begin{equation}
    \mathcal{D}^2 \psi = \psi_{,ss} + \frac{\mathcal{R}_{,s} \psi_{,s}}{\mathcal{R}} + \frac{\psi_{,\phi \phi}}{\mathcal{R}^2}
\end{equation}
where the comma subscripts indicate partial derivatives. 
The other intrinsic term, 
the Gauss curvature, takes the form
\begin{equation}
    \mathcal{K} = - \frac{\mathcal{R}_{,ss}}{\mathcal{R}} \; . 
\end{equation}
Further, from  $\theta_{l^+} = 0$ it is straightforward to show that
\begin{equation}
    \|\sigma_{l^+}\|^2 = 2 (K_{\hat{e}_\phi \hat{e}_\phi} - a_N^{\hat{e}_\phi})^2 \, ,
\end{equation}
where subscripts in the two terms on the right-hand side are again indicating contractions. 

Then the 
stability operator for these 
axisymmetric surfaces becomes
\begin{equation}
     L_{\mathcal{S}} (l^+, N)  \psi  =  
       -\psi_{,ss}  + \mathcal{G}(s) \psi_{,s} + \mathcal{H}(s) \psi
       - \frac{\psi_{,\phi \phi}}{\mathcal{R}(s)^2} \label{eq:LSlN}
\end{equation}
where 
\begin{align}
    \mathcal{G}(s)  = & -\frac{\mathcal{R}_{,s}}{\mathcal{R}} + 2 \tilde{K}_{TN} \,,   \label{eq:cG} \\
    \mathcal{H}(s)  = & \frac{\dd}{\dd s} \tilde{K}_{TN} + \frac{\mathcal{R}_{,s}}{\mathcal{R}} \tilde{K}_{TN} - \tilde{K}_{TN}^2 -\frac{\mathcal{R}_{,ss}}{\mathcal{R}} \label{eq:cH} \\ & \quad - 2 \left(\frac{K_{\phi \phi}}{\mathcal{R}^2}  - \frac{N^{\tilde{\imath}} D_{\tilde{\imath}} \mathcal{R}}{\mathcal{R}}  \right)^2 - G_{\hat{u} \hat{u}}   - G_{\hat{u} N}  \; . \nonumber 
\end{align}

It was shown in \cite{Andersson:2007fh} that the principal eigenfunction of an axisymmetric $L_{\mathcal{S}} (\ell, N)$ is also 
axisymmetric. Hence if we are only interested in whether or not $\mathcal{S}$ is stable, it is  sufficient to study the eigenvalues of axisymmetric eigenfunctions, which are eigenfunctions of the one-dimensional operator
\begin{equation}
     \tilde{L} = 
      -\frac{\dd^2}{\dd s^2}   + \mathcal{G}(s) \frac{\dd}{\dd s} + \mathcal{H}(s) \label{eq:L1d} \; . 
\end{equation}
Away from spherical symmetry
in the examples of Sections
\ref{sec:axiRN} and \ref{sec:weyl},
we generally work with this $\tilde{L}$ to determine stability. 

All of the examples considered in this paper are axisymmetric and we classify 
% When $\cS$ is non-rotating, there is a choice of $\ell$ for which $L_\cS(\ell,N)$ is self-adjoint and hence has real eigenvalues; see Proposition \ref{prop:rotation} for details. Restricting our attention to axisymmetric eigenfunctions of an
% axisymmetric non-rotating MOTS,
our  MOTSs by labelling them as
\emph{$(n_0, n_-)$-unstable} if  $\tilde{L}$ has $n_0$ vanishing eigenvalues and $n_-$ negative eigenvalues. The full operator $L_\mathcal{S} (l^+, N)$ will also have non-axisymmetric eigenfunctions. 
 However, for our purposes the axisymmetric eigenfunctions are the most important and so we focus on them. 

% In the examples below we will restrict our attention to $\tilde{L}$. In addition to determining stability it will also be sufficient to provide examples of the various types of bifurcations. 

% Even with this simplification,  explicitly calculating
% the spectrum  as  we did in Section \ref{sec:sphere_stab}
% is not possible: the eigenfunctions do not have a closed algebraic form. 
% Instead we must approximate the spectrum numerically, as described in Appendix
% \ref{app:spectra}. 

\subsection{MOTSs as boundaries between trapped and untrapped regions}
\label{sec:inout}

As discussed in the introduction, a MOTS $\mathcal{S}$ in a time slice $\Sigma$ is a potential boundary between outer trapped and untrapped regions. We now examine this in more detail. All results are at most slight extensions of Proposition 1 of \cite{Andersson:2005}: the core argument is the same but our applications to unstable MOTSs were not explicitly discussed in that paper.

We restrict our attention to embedded and orientable MOTS so that $\mathcal{S}$ can unambiguously be assigned an inside and an outside. We will call such a MOTS \emph{two-sided}. Using the principal eigenfunction 
$\psi_o>0$  we can construct
a foliation of a neighbourhood of $\mathcal{S}$ as follows. Let $x^a_\mathcal{S} (\vartheta^A)$  be a coordinate parameterization of $\mathcal{S}$ for some set of surface coordinates $\vartheta^A$. Define surfaces $\mathcal{S}_\sigma$ that are parameterized by
       \begin{equation}
    x_\sigma^a (\vartheta^A) = x^a_\mathcal{S} (\vartheta^A) + \sigma \psi_o (\vartheta^A) N^a(\vartheta^A) \; . 
\end{equation}
Then $\{ \sigma, \vartheta^A \}$ defines a coordinate system on some neighbourhood of $\mathcal{S}$ in $\Sigma$, foliated by surfaces of constant $\sigma$, which we denote $\mathcal{S}_\sigma$. We work with this system for the rest of this subsection.
%We split $\mathcal{N}-\mathcal{S}$ into two pieces: the outside $\mathcal{N}_+$ with $\sigma >0$ and inside $\mathcal{N}_-$ with $\sigma <0$.

For small $\sigma$ the outward null expansion of $\mathcal{S}_\sigma$ is
\begin{equation}
    \theta_\sigma^{l_+} \approx \sigma \delta_{\psi_o N} \theta^{l^+}_0 = \sigma L_\mathcal{S} (l_+, N) \psi_o = \sigma \lambda_o \psi_o \; . 
\end{equation}
Thus, for some small neighbourhood $- \epsilon < \sigma < \epsilon$ around a strictly stable
($\lambda_o > 0$) $\cS$, the expansion $\theta^{l_+}_\sigma$ of $\mathcal{S}_\sigma$
is positive for $\sigma >0$ and negative for 
$\sigma < 0$. If $\mathcal{S}$ is unstable ($\lambda_o < 0$) then this relationship is reversed. In either case we let $\mathcal{N}$ be the union of the $\cS_\sigma$ for $- \epsilon < \sigma < \epsilon$, with the outside ($\sigma > 0$) $\mathcal{N}_+$ and inside $\mathcal{N}_-$ ($\sigma < 0$)  as illustrated in Figure \ref{fig:stability}. 
%We further label the outer boundary of $\mathcal{N}$ as $\partial \mathcal{N}_+$ and the inner boundary $\partial \mathcal{N}_-$.

Then we have the following result. 
% \begin{prop}
% \label{prop:maxmin-old}
% Let $\mathcal{S}$ be a two-sided MOTS with a foliation $\{S_\sigma, -\epsilon < \sigma < \epsilon\}$ of a neighbourhood $\mathcal{N}$ of $\mathcal{S}$ constructed as above. Then:
% \begin{enumerate}
%     \item No weakly outer trapped surface in $\mathcal N$ can have a maximum value of $\sigma$ in a region where $\theta^{l_+}_\sigma >0$.
%     \item No weakly outer untrapped surface can have a minimum value of $\sigma$ in a region where $\theta^{l_+}_\sigma <0$.
%     \end{enumerate}
% \end{prop}
\begin{lem}
\label{lem:maxmin}
Let $\mathcal{S}$ be a two-sided MOTS with a neighbourhood $\mathcal{N}$ foliated by $\{S_\sigma\}$ as above, and let $\tilde\cS$ be a closed surface in $\mathcal N$.
\begin{enumerate}
    \item If $\tilde\cS$ is weakly outer trapped, then $\sigma|_{\tilde\cS}$ cannot have a maximum $\sigma_*$ such that $\theta^{l_+}_{\sigma_*} >0$.
    \item If $\tilde\cS$ is weakly outer untrapped, then $\sigma|_{\tilde\cS}$ cannot have a minimum $\sigma_*$ such that $\theta^{l_+}_{\sigma_*} < 0$.
    %\item No weakly outer untrapped surface can have a minimum value of $\sigma$ in a region where $\theta^{l_+}_\sigma <0$.
    \end{enumerate}
\end{lem}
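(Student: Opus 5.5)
This is the standard maximum-principle comparison argument of Proposition 1 of \cite{Andersson:2005}, applied at a single tangency point. I give the plan for part (1); part (2) follows by reversing every inequality.

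Suppose $\tilde\cS$ is weakly outer trapped and $\sigma|_{\tilde\cS}$ attains its maximum $\sigma_*$ at a point $p$, with $\theta^{l_+}_{\sigma_*}>0$ at $p$. Then $\tilde\cS$ lies entirely in the closed region $\{\sigma\le\sigma_*\}$ and touches the leaf $\cS_{\sigma_*}$ at $p$. Both surfaces are spacelike, lie in $\Sigma$, and share the unit normal $N$ at $p$. The orientation needs a check here: the outward normal of $\tilde\cS$ at $p$ points toward increasing $\sigma$, since $\tilde\cS$ is closed and lies on the inner side of the leaf. Consequently the future outward null normals also agree at $p$, after normalizing both to $\hat u + N$.

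Near $p$, I write $\tilde\cS$ as a normal graph $u$ over $\cS_{\sigma_*}$ in the $(\sigma,\vartheta^A)$ coordinates, so that $\tilde\cS=\{\sigma=\sigma_*+u(\vartheta)\}$. The graph function satisfies $u\le 0$, $u(p)=0$ and $\mathcal{D}u(p)=0$. By (\ref{eq:MOTS3D}), the expansion $\theta^{l_+}$ of a graph is a quasilinear elliptic functional of $u$, namely $q^{ij}K_{ij}+D_iN^i$. Its principal part is $-\Delta u$, up to a positive factor coming from $|\nabla\sigma|$ and $\psi_o$. At $p$ the first derivatives of $u$ vanish, so the two expansions differ only through the Hessian term:
\[
\theta^{l_+}_{\tilde\cS}(p)-\theta^{l_+}_{\sigma_*}(p) = -a^{AB}\mathcal{D}_A\mathcal{D}_B u(p)
\]
for some positive definite $a^{AB}$. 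The $K_{ij}$ terms and the lower-order terms agree at $p$ because the positions and normals coincide there.

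At an interior maximum of $u$ the Hessian is nonpositive, so the right-hand side is $\ge 0$. This gives $\theta^{l_+}_{\tilde\cS}(p)\ge\theta^{l_+}_{\sigma_*}(p)>0$, which contradicts $\theta^{l_+}_{\tilde\cS}\le 0$. The rescaling freedom $\ell=e^\mu l^+$ does not affect signs, so the choice of null normal is harmless.

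For part (2), a minimum of $\sigma$ on $\tilde\cS$ gives $u\ge0$ and a nonnegative Hessian. Hence $\theta^{l_+}_{\tilde\cS}(p)\le\theta^{l_+}_{\sigma_*}(p)<0$, contradicting $\theta^{l_+}_{\tilde\cS}\ge0$.

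I expect the main obstacle to be the careful justification that the lower-order terms match at the touching point. Concretely, I need to confirm that the linearization structure of $\theta$ in the graph function is elliptic with principal part of the right sign, and that $\tilde\cS$ is smooth and immersed near $p$ so that it can be written as a graph. For this step I would simply invoke the first-variation formula (\ref{eq:varSigma}): it shows that the principal symbol of $\theta$, viewed as a function of the normal displacement, is $-\mathcal{D}^2$. Since the inequality is strict ($\theta_{\sigma_*}>0$), no strong maximum principle is needed, only the pointwise second-derivative test.
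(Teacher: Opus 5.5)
Your proposal is correct and is essentially the paper's proof, which writes $\tilde{\mathcal{S}}$ as a graph $\sigma = H(\vartheta^A)$ and computes $\theta^{l_+}_{\tilde{\mathcal{S}}} = \big(q^{ij}K_{ij} - q^{ij}\Gamma^k_{ij}N_k\big) - \alpha q^{AB}\partial_A\partial_B H$, so that at an extremum ($\partial_A H = 0$) one gets $\theta^{l_+}_{\tilde{\mathcal{S}}} = \theta^{l_+}_{\sigma_*} - \alpha q^{AB}\partial_A\partial_B H$; this is exactly your pointwise Hessian comparison, with no strong maximum principle. One point to tighten is the orientation in part (2): your reason that the outward normal points toward increasing $\sigma$ (that $\tilde{\mathcal{S}}$ lies below the leaf) does not carry over to the minimum, where what fixes the orientation is the global graph form $\sigma = H(\vartheta^A)$ that the paper implicitly assumes, and without it (2) genuinely fails, since for a small sphere the outward normal at the minimum points toward decreasing $\sigma$.
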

%For the weakly outer trapped case this means the surface $\cS_{\sigma_*}$ (corresponding to the maximum value of $\sigma$ on $\tilde\cS$) cannot have everywhere positive expansion, so $\theta^{l_+}_{\sigma_*}$ must be $\leq 0$ somewhere, and likewise for the outer untrapped case. 
%As we will see below, this is particularly useful when $\cS$ has $\lambda_o \neq 0$, since in that case $\theta^{l_+}_\sigma$ is sign-definite for every $\sigma \neq 0$. Then the maximum value of $\sigma$ on a weakly outer trapped surface must satisfy $\theta^{l_+}_{\sigma_*} < 0$, and likewise for the weakly outer untrapped case.

The first statement is a part of Proposition 1 of \cite{Andersson:2005}, where it is proved using the strong maximum principle. Here we give a more direct proof, as in Theorem 7.1 of \cite{Andersson:2007fh}.

\begin{proof}
    Let $\tilde{\mathcal{S}}$ be parameterized as $\sigma = H(\vartheta^A)$. The outward-oriented unit normal to $\tilde{\mathcal{S}}$ in $\Sigma$ is 
    \begin{equation}
        N = \alpha (\dd \sigma - \partial_A H \dd \vartheta^A)
    \end{equation}
    for some positive normalization factor $\alpha$, thus the outward null expansion is
    \begin{align}
        \theta^{l_+}_{\tilde{\mathcal{S}}} 
        %& = k_u + k_N \nonumber\\
        & = q^{ij} K_{ij} + q^{ij} D_i N_j \nonumber \\
        & = \left( q^{ij} K_{ij} - q^{ij} \Gamma_{ij}^k N_k \right) - \alpha q^{AB} \partial_A \partial_B H \;. 
    \end{align}
    At any extremum of $H$ on $\cS$ we have $\partial_A H = 0$ and so
    \begin{equation}
        \theta^{l_+}_{\tilde{\mathcal{S}}} = \theta^{l_+}_{\sigma_*} - \alpha q^{AB} \partial_A \partial_B H \; .  
    \end{equation}
    
    If $\tilde\cS$ is weakly outer trapped and $\sigma_*$ is the maximum value of $H$, then $q^{AB} \partial_A \partial_B H \leq 0$ and so
    \begin{equation}
        0 \geq \theta^{l_+}_{\tilde{\mathcal{S}}} \geq \theta^{l_+}_{\sigma_*}
    \end{equation}
    holds at the point where the maximum is achieved.
    Therefore it is not the case that $\theta^{l_+}_{\sigma_*}>0$.

    Similarly, if $\tilde\cS$ is weakly outer untrapped and $\sigma_*$ is the minimum value of $H$, then we have $q^{AB} \partial_A \partial_B H \geq 0$ and so
    \begin{equation}
        0 \leq \theta^{l_+}_{\tilde{\mathcal{S}}} \leq \theta^{l_+}_{\sigma_*}
    \end{equation}
    holds at the point where the minimum is achieved, thus it is not the case that $\theta^{l_+}_{\sigma_*}< 0$.
\end{proof}

For a strictly stable MOTS we then recover Proposition 1(i) from \cite{Andersson:2005}.

\begin{prop}
\label{prop:barrier-s}
    Let $\mathcal{S}$ be a two-sided strictly stable MOTS with a neighbourhood $\mathcal{N}$ as constructed above, and let $\tilde\cS$ be a closed surface in $\mathcal N$.
    \begin{enumerate}
        \item If $\tilde\cS$ is weakly outer trapped, then it must be contained in $\mathcal{N}_- \cup \mathcal{S}$.
        \item If $\tilde\cS$ is weakly outer untrapped, then it must be contained in $\mathcal{N}_+ \cup \mathcal{S}$.
    \end{enumerate}
\end{prop}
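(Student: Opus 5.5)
The plan is to combine Lemma \ref{lem:maxmin} with the sign information on $\theta^{l_+}_\sigma$ that strict stability provides. Since $\lambda_o>0$ and $\psi_o>0$, the expansion satisfies $\theta^{l_+}_\sigma = \sigma\lambda_o\psi_o + O(\sigma^2)$. After shrinking $\epsilon$ if necessary, this gives $\theta^{l_+}_\sigma>0$ everywhere on $\cS_\sigma$ for $0<\sigma<\epsilon$, and $\theta^{l_+}_\sigma<0$ everywhere for $-\epsilon<\sigma<0$.

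The one point needing care is that the $O(\sigma^2)$ remainder is uniform over the compact surface $\cS$. The expansion is a smooth function of $(\sigma,\vartheta^A)$ and vanishes at $\sigma=0$, so we can write $\theta^{l_+}_\sigma = \sigma\, F(\sigma,\vartheta)$ with $F$ smooth and $F(0,\vartheta)=\lambda_o\psi_o(\vartheta)$. The right-hand side is bounded below by $\lambda_o\min\psi_o>0$, so by continuity and compactness $F>0$ on $[-\epsilon,\epsilon]\times\cS$ for small $\epsilon$. I take $\mathcal N$ to be this (possibly shrunk) neighbourhood, which is consistent with the construction above.

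For part (1), let $\tilde\cS\subset\mathcal N$ be closed and weakly outer trapped. Because $\tilde\cS$ is compact, $\sigma|_{\tilde\cS}$ attains a maximum $\sigma_*$. If $\sigma_*>0$, then $\theta^{l_+}_{\sigma_*}>0$ at every point of $\cS_{\sigma_*}$, and in particular at the touching point. This contradicts Lemma \ref{lem:maxmin}(1). Hence $\sigma_*\le 0$, so $\sigma\le 0$ on all of $\tilde\cS$, which means $\tilde\cS\subset\mathcal N_-\cup\cS$.

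Part (2) is symmetric. For $\tilde\cS$ weakly outer untrapped, take the minimum $\sigma_*$ of $\sigma|_{\tilde\cS}$. If $\sigma_*<0$, then $\theta^{l_+}_{\sigma_*}<0$, which contradicts Lemma \ref{lem:maxmin}(2). Therefore $\sigma\ge0$ on $\tilde\cS$, and $\tilde\cS\subset\mathcal N_+\cup\cS$.

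I expect no real obstacle here. The only subtlety is the uniform sign of $\theta^{l_+}_\sigma$ on each leaf, which the factorization argument above handles. I also read the phrase ``$\theta^{l_+}_{\sigma_*}>0$'' in the lemma as a pointwise statement at the extremal point, and the argument is valid under that reading.
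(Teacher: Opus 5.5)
Your proof is correct and follows the paper's argument. Strict stability gives $\theta^{l_+}_\sigma>0$ for $\sigma>0$ and $\theta^{l_+}_\sigma<0$ for $\sigma<0$, and Lemma~\ref{lem:maxmin} then forces the maximum (resp.\ minimum) of $\sigma|_{\tilde\cS}$ to satisfy $\sigma_*\le 0$ (resp.\ $\sigma_*\ge 0$). Your factorization $\theta^{l_+}_\sigma=\sigma F(\sigma,\vartheta)$ with $F(0,\cdot)=\lambda_o\psi_o>0$, together with compactness, is a clean justification of the uniform sign on each leaf, which the paper simply asserts when it chooses $\epsilon$ small.
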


\begin{proof}
Strict stability of $\cS$ implies $\theta^{l_+}_{\sigma} >0$ for $\sigma > 0$ and $\theta^{l_+}_{\sigma} < 0$ for $\sigma < 0$. If $\tilde\cS$ is weakly outer trapped, the maximum value of $\sigma|_{\tilde\cS}$ must therefore satisfy $\sigma_* \leq 0$, by Lemma \ref{lem:maxmin}(i). This means $\sigma \leq 0$ everywhere on $\tilde\cS$ and so $\tilde{\mathcal{S}} \in \mathcal{N}_- \cup \mathcal{S}$. Similarly, if $\tilde\cS$ is weakly outer untrapped the minimum must satisfy $\sigma_* \geq 0$, which means $\sigma \geq 0$ on $\tilde\cS$ and so $\tilde{\mathcal{S}} \in \mathcal{N}_+ \cup \mathcal{S}$.
 \end{proof}
 
This means that a strictly stable $\mathcal{S}$ is a barrier between a region $\mathcal{N}_+$ containing outer untrapped surfaces and a region $\mathcal{N}_-$ containing outer trapped surfaces. There are a few important points to keep in mind. 
\begin{enumerate}
    \item Even though $\mathcal{N}$ is only guaranteed to exist as a small neighbourhood, in some cases it will cover large regions of the spacetime. For Schwarzschild in the Painlev\'e--Gullstrand slicing, $\mathcal{N}_+$ is all points with $r>2m$ while $\mathcal{N}_-$ is all points with $r<2m$. For Reissner--Nordstr\"om with the same slicing, $\mathcal{N}_+$ is all points outside the outer horizon and
   $\mathcal{N}_-$ is all points between the inner and outer horizons. 
   \item It is possible for weakly outer (un)trapped MOTS to traverse $\mathcal N$: if they have no local extrema in $\mathcal{N}$ then they cannot violate Lemma \ref{lem:maxmin}. Similarly, a weakly outer trapped surface that entered $\mathcal{N}_+$ from the outside and had a minimum in that region would not violate the proposition, and likewise for a weakly outer untrapped surface that entered $\mathcal N_-$ from the inside. $\cS$ is only a barrier for surfaces that are fully contained in $\mathcal{N}$.
   \item On the other hand, we \emph{can} use Lemma \ref{lem:maxmin} to rule out the possiblity of a weakly outer trapped surface that enters $\mathcal N$ from the inside and has a maximum in $\mathcal N_+$, or a weakly outer untrapped surface that enters $\mathcal N$ from the outside and has a minimum in $\mathcal N_-$.
\end{enumerate}

In the unstable case we have the following analogue of Proposition \ref{prop:barrier-s}.

\begin{prop}
\label{prop:barrier-us}
    Let $\mathcal{S}$ be a two-sided unstable MOTS with a neighbourhood $\mathcal{N}$ as constructed above, and let $\tilde\cS$ be a closed surface in $\mathcal N$.
    \begin{enumerate}
        \item If $\tilde\cS$ is weakly outer trapped, then it is not contained entirely in $\mathcal N_-$.
        \item If $\tilde\cS$ is weakly outer untrapped, then it is not contained entirely in $\mathcal N_+$.
    \end{enumerate}
\end{prop}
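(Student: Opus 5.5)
The plan is to argue by contradiction in both parts, applying Lemma \ref{lem:maxmin} at the appropriate extremum of $\sigma|_{\tilde\cS}$. Because $\cS$ is unstable ($\lambda_o<0$), the construction of the foliation gives
\[
\theta^{l_+}_\sigma \approx \sigma\lambda_o\psi_o,
\]
so on the (shrunk if necessary) neighbourhood $\mathcal N$ we have $\theta^{l_+}_\sigma<0$ for $0<\sigma<\epsilon$ and $\theta^{l_+}_\sigma>0$ for $-\epsilon<\sigma<0$. This is exactly the sign pattern of the strictly stable case, reversed. I will use it in the same way as in the proof of Proposition \ref{prop:barrier-s}.

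For part (i), suppose $\tilde\cS$ is weakly outer trapped and lies entirely in $\mathcal N_-$. Since $\tilde\cS$ is closed, hence compact, the continuous function $\sigma|_{\tilde\cS}$ attains a maximum $\sigma_*$. Containment in $\mathcal N_-$ gives $\sigma_*<0$ (and $\sigma_*>-\epsilon$), so $\theta^{l_+}_{\sigma_*}>0$ by the sign pattern above. This contradicts Lemma \ref{lem:maxmin}(i).

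Part (ii) is symmetric. If $\tilde\cS$ is weakly outer untrapped and contained in $\mathcal N_+$, the minimum $\sigma_*$ of $\sigma|_{\tilde\cS}$ satisfies $0<\sigma_*<\epsilon$. Hence $\theta^{l_+}_{\sigma_*}<0$, which contradicts Lemma \ref{lem:maxmin}(ii).

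The only point needing care is the sign of $\theta^{l_+}_\sigma$ on every leaf $\cS_\sigma$ with $\sigma\neq0$, not just to first order. I would handle this by writing $\theta^{l_+}_\sigma=\sigma(\lambda_o\psi_o+O(\sigma))$ uniformly on the compact surface $\cS$. Since $\psi_o>0$ is bounded below, choosing $\epsilon$ small makes the bracket strictly negative everywhere. This is the same choice of $\mathcal N$ already implicit in the construction, so it should present no real obstacle. Containment "entirely in $\mathcal N_\pm$" is what guarantees that the extremum lies on the correct side of $\cS$, with $\sigma_*\neq 0$.
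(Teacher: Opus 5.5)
Your proof is correct and is essentially the paper's argument: the paper uses the reversed sign pattern ($\theta^{l_+}_\sigma<0$ for $\sigma>0$, $\theta^{l_+}_\sigma>0$ for $\sigma<0$) together with Lemma \ref{lem:maxmin} to force $\sigma_*\ge 0$ at the maximum (resp.\ $\sigma_*\le 0$ at the minimum), which is just the contrapositive form of your contradiction argument. Your added remark that $\theta^{l_+}_\sigma=\sigma(\lambda_o\psi_o+O(\sigma))$ holds uniformly on $\cS$ spells out the choice of $\epsilon$ that the paper builds into the construction of $\mathcal N$.
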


\begin{proof}
Instability of $\cS$ implies $\theta^{l_+}_{\sigma} < 0$ for $\sigma > 0$ and $\theta^{l_+}_{\sigma} > 0$ for $\sigma < 0$. If $\tilde\cS$ is weakly outer trapped, the maximum value of $\sigma|_{\tilde\cS}$ must therefore satisfy $\sigma_* \geq 0$, by Lemma \ref{lem:maxmin}(i). This means $\sigma \geq 0$ somewhere on $\tilde\cS$ and so $\tilde{\mathcal{S}}$ is not contained in $\mathcal{N}_-$. Similarly, if $\tilde\cS$ is weakly outer untrapped the minimum must satisfy $\sigma_* \leq 0$, which means $\sigma \leq 0$ somewhere on $\tilde\cS$ and so $\tilde{\mathcal{S}}$ is not contained in $\mathcal{N}_+$.
 \end{proof}

The constraints on $\tilde\cS$ in the unstable case are weaker than in the strictly stable case. While no weakly outer untrapped surface can be contained in $\mathcal{N}_+$ and no weakly outer trapped surface can be contained in $\mathcal{N}_-$, it is possible for such surfaces to straddle $\mathcal{S}$ and so have a maximum value of $\sigma$ in $\mathcal{N}_+$ and a minimum in $\mathcal{N}_-$.

It is easy to see that such surfaces exist in the vicinity of an unstable $\mathcal{S}$ with a vanishing non-principal eigenvalue $\lambda_k$. The corresponding eigenfunction $\psi_k$ will take both positive and negative values of $\mathcal{S}$ (see, for instance, Lemma 4.2 in \cite{BCMsymmetry}). A surface $\tilde{\mathcal{S}}$ defined by
%\begin{equation}
%    x_{\tilde{\mathcal{S}}}^a (\vartheta^A) = x_{\mathal{S}}^a (\vartheta^A) + \left( \alpha \psi_o(\vartheta^A) + b \psi_k (\vartheta^A) \right) N^a (\vartheta^A)
%\end{equation}
%
\begin{equation}
    x_{\tilde{\mathcal{S}}}^a (\vartheta^A) = x_{\mathcal{S}}^a + \Big( \epsilon_1 \psi_o(\vartheta^A) + \epsilon_2 \psi_k(\vartheta^A) \Big) N^a(\vartheta^A)
\end{equation}
% \begin{equation}
%     \sigma_{\tilde{\mathcal{S}}} = \left( \epsilon_1 \psi_o + \epsilon_2 \psi_k \right) N^\sigma 
% \end{equation}
% where $N^\sigma$ is the component of the normal in the $\sigma$ direction, 
has expansion
\begin{equation}
    \theta^{l_+}_{\tilde{\mathcal{S}}} \approx 
    \epsilon_1 L_{\mathcal{S}} (l_+, N) \psi_o + \epsilon_2
L_{\mathcal{S}} (l_+, N) \psi_k = \epsilon_1 \lambda_o \psi_o \, ,   \end{equation}
if $\epsilon_1, \epsilon_2$ are sufficiently small. Choosing the relative sizes of $\epsilon_1$ and $\epsilon_2$ so that $\tilde{\mathcal{S}}$ straddles $\mathcal{S}$, it is clear that $\tilde\cS$ can be made outer trapped, untrapped or marginally trapped with an appropriate choice of $\epsilon_1$. 
We will see many examples of such surfaces in the later sections, as they appear in most bifurcations.

\section{MOTS deformations and bifurcations}
\label{sec:MOTSevolve}

As was first discussed in \cite{Andersson:2005, Andersson:2007fh} with respect to time evolutions, the spectrum of the stability operator constrains the possible evolutions and deformations of a MOTS. 

\subsection{Deformations of MOTSs}
\label{sec:deformations}

For a time slice $(\Sigma, h_{ij}, K_{ij})$, we consider a one-parameter deformation of the geometry:
\begin{equation}
    h_{ij} = h_{ij}(\alpha) \,, \qquad K_{ij} = K_{ij}(\alpha) \,,
\end{equation}
where $\alpha$ is a parameter.
This parameter could correspond to time evolution, as in \cite{Andersson:2005, Andersson:2007fh}, or 
other coordinate translation, but it could also 
involve changing physical parameters such as the charge of a Reissner--Nordstr\"om spacetime (Section \ref{sec:axiRN}), combination of charge, mass and cosmological constant for Reissner--Nordstr\"om--de Sitter (Section \ref{sec:sphere_stab}) or a Weyl parameter (Section \ref{sec:weyl}).

Let $\mathcal S$ be a MOTS in $\Sigma$. Given a small parameter change $\Delta\alpha$, we attempt to deform $\cS$ by a corresponding amount $\psi N^i \Delta\alpha$ in the normal direction such that it remains a MOTS. For this to work the combined deformation must vanish,
\begin{equation}
    \delta \theta_\ell = \delta_{\alpha} \theta_\ell + \delta_{\psi N} \theta_\ell = 0 \, , \label{eq:MOTSvar}
\end{equation}
where $\delta_\alpha \theta_\ell = \frac{\mathrm{d} \theta_\ell}{\mathrm{d} \alpha}$ measures the change of expansion for the fixed surface $\cS$ as $\alpha$ is varied. The function $\psi$ therefore must solve the equation
\begin{equation}
    L_\mathcal{S} (\ell, N) \psi = -\delta_\alpha \theta_\ell \;.
    \label{eq:psieq}
\end{equation}
This has a unique solution 
% \begin{equation}
%     \psi = - L^{-1}_\mathcal{S} (\ell, X) [\delta_\alpha \theta_\ell] \label{eq:MOTSvarsol}
% \end{equation}
if $L_\mathcal{S} (\ell, N)$ is invertible, which suggests that invertibility is a sufficient condition for the required deformation of $\cS$ to exist.

This intuition can be made precise using the implicit function theorem: if $L_\mathcal{S} (\ell, N)$ is invertible, then $\cS$ can be deformed so that it remains a MOTS with respect to the $\alpha$-deformed geometry \cite{Andersson:2005}. Since $L_\mathcal{S} (\ell, N)$ has discrete spectrum, it will be invertible if and only if it has no vanishing eigenvalues. In particular, if the principal eigenvalue is positive then all other eigenvalues also have positive real part and so none vanish. 

This means a bifurcation can only occur at a MOTS $\cS$ for which $L_\mathcal{S} (\ell, N)$ is \emph{not} invertible. In this case the equation (\ref{eq:psieq}) could have multiple solutions, or none at all. Correspondingly, $\cS$ may split into multiple MOTSs, or disappear completely, as $\alpha$ changes.

Bifurcation theory provides tools for  
analyzing such problems and calculating the number and behaviour of nearby MOTSs. The application of such tools to MOTSs is investigated systematically in \cite{BusseyCoxKunduri} (see also \cite{Andersson:2008up} for a special case and \cite{Jost} for an application to minimal surfaces). In this paper we provide a number of examples, both analytic and numerical, that illustrate the most common types of bifurcations in physically reasonable spacetimes.

We next review some fundamentals of bifurcation theory, before returning to MOTSs in Section \ref{sec:BifTheory}. For more details on bifurcation theory see, for example, \cite{strogatz2015} as an introduction or \cite{kuznetsov1998elements,wiggins2006} for a more advanced discussion.

\subsection{Review of bifurcation theory}
\label{app:bif}

Consider a one-dimensional dynamical system
\begin{equation}
    \dot{y} = F ( y) 
\end{equation}
where dot is the time derivative. The \emph{fixed points} are values $y_o$ for which $F(y_o) = 0$. Writing $y = y_o + \Delta y$ near a fixed point, the perturbation $\Delta y$ evolves according to 
\begin{equation}
    \Delta\dot{y} \approx F_y(y_o) \Delta y \label{eq:1Ddynsys} \, , 
\end{equation}
where the subscript is a $y$-derivative. If $F_y(y_o) < 0$ then $y_o$ is \emph{stable}: nearby $y$ evolve towards it as time increases. If $F_y(y_o) > 0$ then $y_o$ is
\emph{unstable}: nearby $y$ evolve away from it. If $F_y(y_o) = 0$ then $y_o$ is \emph{marginally stable} and it is necessary to consider higher derivatives of $F$ to determine what happens to nearby $y$.

Now consider a parameterized one-dimensional 
dynamical system
\begin{equation}
    \dot{y} = F(y, \alpha) \label{eq:1Ddynsysparam} \; ,
\end{equation}
where $\alpha$ is a parameter. If $F(y_0, \alpha_o) = 0$, then $y_o$ is a fixed point for the parameter value $\alpha_o$. Bifurcation theory describes how fixed points can appear or disappear, as well as change stability, as $\alpha$ varies.

This amounts to describing the set of solutions to $F(y,\alpha) = 0$. If $F_y(y_o, \alpha_0) \neq 0$, then (by the implicit function theorem) we can solve for $y$ to get a smooth curve of fixed points, $y=Y(\alpha)$, with $Y(\alpha_o) = y_o$. Every solution to $F(y,\alpha) = 0$ near $(y_o, \alpha_o)$ must lie on this curve, so there will be no additional fixed points and hence no bifurcation occurs. This shows that (in the one-dimensional case) a bifurcation can only occur when $y_o$ is marginally stable.

To understand the possible types of bifurcations, one uses a change of variables to reduce $F$ to a \emph{normal form}. The three most common are the following.

\begin{description}
    \item[Saddle-node] For
    \begin{equation}
    \label{SN:normal}
        F(y,\alpha) = \alpha + y^2
    \end{equation}
    we have two fixed points for $\alpha < 0$, one for $\alpha=0$, and none for $\alpha > 0$, so the number of fixed points changes as $\alpha$ passes through $0$. Writing the fixed points for $\alpha < 0$ as $y_\pm = \pm \sqrt{-\alpha}$, we calculate
    \begin{equation}
        F_y(y_\pm,\alpha) = 2y_\pm = \pm 2 \sqrt{-\alpha} \; ,
    \end{equation}
    which means $y_-$ is stable and $y_+$ is unstable. We thus have stable and unstable fixed points colliding and annihilating one another, as in  Figure \ref{fig:RNdS}a). 

    \quad This is a \emph{saddle-node bifurcation}; its normal form is (\ref{SN:normal}). At the point $(0,0)$ we have $F_y = 0$, $F_{yy} \neq 0$ and $F_\alpha \neq 0$. Any function $F(y,\alpha)$ that has a fixed point satisfying these conditions can be reduced to either $\alpha + y^2$ or $\alpha - y^2$ by a suitable coordinate transformation, so the corresponding dynamical system undergoes a saddle-node bifurcation at that point. Intuitively this is saying that the zero set of $F(y,\alpha)$ behaves like the zero set of its second-order Taylor polynomial, provided the conditions $F_{yy} \neq 0$ and $F_\alpha \neq 0$ are both satisfied.

\item[Transcritical] 
In a saddle-node bifurcation fixed points are created (or annihilated) in pairs. However, it is also possible for new fixed points to bifurcate from existing ones. The simplest example of this is a transcritical bifurcation, which has normal form
\begin{equation}
    F(y,\alpha) = \alpha y - y^2.
\end{equation}
Here there are two curves of fixed points, $y=0$ and $y = \alpha$, which intersect at $\alpha=0$. In terms of stability we calculate $F_y = \alpha - 2y$ and hence
\begin{equation}
    F_y(0,\alpha) = \alpha \; , \qquad F_y(\alpha, \alpha) = - \alpha \;.
\end{equation}
For $\alpha<0$ we see that $y=0$ is stable and $y = \alpha$ is unstable, and vice versa for $\alpha > 0$. That is, the two curves exchange stability at $\alpha=0$; cf. Figure  \ref{fig:RNdS}b).

\quad In practice a transcritical bifurcation usually occurs when there is a known family of fixed points (in this case $y=0$) that changes stability at some point, resulting in a new family bifurcating from it. The key properties of the normal form are that $F(0,\alpha) = 0$ for all $\alpha$, and at the point $(0,0)$ we have $F_y = 0$, $F_{yy} \neq 0$ and $F_{y\alpha} \neq 0$. As in the saddle-node case, any function satisfying these conditions can be reduced to the normal form by a change of variables. Moreover, if the known family of fixed points is a curve $y_o(\alpha)$, rather than $y=0$, we can reduce to the above case by simply replacing $F(y,\alpha)$ with $F\big(y + y_o(\alpha), \alpha\big)$.

\item[Pitchfork] Finally, suppose that $y=0$ is a fixed point for all $\alpha$, with $F_y = 0$ and $F_{y\alpha} \neq 0$ but $F_{yy} = 0$, so the above discussion does not apply. We then need to look at higher derivatives of $F$. If $F_{yyy} \neq 0$ we have a \emph{pitchfork bifurcation}. There are two possible normal forms, depending on the sign of $F_{yyy}$:
\begin{equation}
    F(y,\alpha) = \alpha y + y^3
\end{equation}
if $F_{yyy} > 0$ (subcritical), and 
\begin{equation}
    F(y,\alpha) = \alpha y - y^3
\end{equation}
if $F_{yyy} < 0$ (supercritical).

\quad In either case we have $F_y(0,\alpha) = \alpha$, so $y=0$ is stable for $\alpha<0$ and unstable for $\alpha>0$. In the subcritical case there are two additional fixed points when $\alpha<0$, namely $y = \pm\sqrt{-\alpha}$, which are both unstable. 
At $\alpha=0$ all three fixed points coincide, and only $y=0$ persists for positive $\alpha$. Similarly, the supercritical case has two additional fixed points, $y = \pm\sqrt{\alpha}$, for $\alpha>0$, which are both stable, and do not persist for negative $\alpha$. An example of a subcritical pitchfork bifurcation is shown in Figure \ref{fig:RNdS}c). 

\quad Pitchforks often occur in the presence of symmetry: if $F(y,\alpha)$ is an odd function of $y$, then $F(0,\alpha) = 0$, meaning $y=0$ is automatically a fixed point  and $F_{yy}(0,\alpha) = 0$. This is the most common way for this bifurcation to occur. 

\end{description}

Saddle-node, transcritical and pitchfork (enforced by symmetry) are the three standard one-parameter bifurcations. Other bifurcations are possible, but only with more vanishing Taylor coefficients. These requires fine-tuning (often via additional parameters) and so are less commonly observed in applications.

For a system of equations the situation is much the same. Let
\begin{equation}
    \dot{y}^i = F^i(y^j) \label{eq:nDdynsysparamMB}
\end{equation}
be an $n$-dimensional dynamical system, with $y^j \in \mathbb{R}^n$ and $F^i: \mathbb{R}^n \rightarrow \mathbb{R}^n$. The \emph{fixed points} $y_o^i$ satisfy $F^i ( y_o^j) = 0$ and their stability is determined by the Jacobian matrix $[DF_o]_j^i = \partial_j F^i (y_o)$. If all of the eigenvalues have negative real part then $y_o^i$ is \emph{stable}, and if at least one eigenvalue has positive real part then $y_o^i$ is \emph{unstable}.

For a one-parameter family of $n$-dimensional 
dynamical systems
\begin{equation}
    \dot{y}^i = F^i(y^j,\alpha) \label{eq:nDdynsysparam}
\end{equation}
we thus need to understand the zero set of $F^i(y^j,\alpha)$. If $y^i_o$ is a fixed point for $\alpha = \alpha_o$ and the Jacobian is invertible at that point, then there will be a smooth curve of fixed points, $y^i =Y^i(\alpha)$, with $Y^i(\alpha_o) = y^i_o$; otherwise a bifurcation may occur.

Unlike the one-dimensional case, where we simply had $F_y(y_o,\alpha_o) = 0$, here the Jacobian $[DF_o]_j^i$ is an $n\times n$ singular matrix, which could be quite complicated. However, except in the presence of symmetry, the nullspace is usually one dimensional. It is then possible (using Lyapunov--Schmidt reduction, see \cite{GSbifurcation}) to reduce to a one-dimensional bifurcation problem, to which the discussion of normal forms in the previous section  applies.

% Except in exceptional circumstances the possible bifurcation in $n$-dimensions are essentially the same as those in one-dimension. If there is only a single vanishing eigenvalue then in all non-vanishing eigenvector directions, the system behaviour is simple. Those other directions are either stable or unstable -- nothing very interesting happens and in particular they don't contribute to the bifurcation. We need only concern ourselves with the bifurcations in the direction of the vanishing eigenvalue eigenvector to understand the bifurcation. This direction extends to a curve through the full parameter space by the Center Manifold Theorem
% \cite{kuznetsov1998elements,wiggins2006}. Hence we are usually back to considering one-dimensional bifurcations. The simplest and most common of these are saddle-node, transcritical and pitchfork bifurcations (\ref{app:bif} for more discussion).   

%The exceptional circumstances are if multiple eigenvalues vanish at the same time. However, this only happens with fine-tuning of the system and we will not encounter  such cases. 

\subsection{Bifurcations of MOTSs}
\label{sec:BifTheory}

Many parts of the discussion on dynamical systems may have seemed familiar as they were essentially equivalent to statements made in Section \ref{sec:deformations}. In particular, there is a smooth curve of fixed points (MOTSs) parameterized by $\alpha$ if the linearization is invertible, therefore non-invertibility is a necessary condition for a bifurcation.
%the (\ref{eq:MOTSvar}) and (\ref{eq:uniquend}) are similar as are their solutions (\ref{eq:MOTSvarsol}) and (\ref{eq:uniquendsol}). 
This is not a coincidence as in both cases we are addressing essentially the same problem; this connection will be made explicit in \cite{BusseyCoxKunduri}.

\begin{table}
    \centering
    \begin{tabular}{l|l|l}
      Concept   & Finite-dimensional & MOTS\\
      \hline 
       phase space  & coordinates $y^i$ & parameterized surfaces $x^i(\vartheta^A)$\\
       \hline
       fixed points  & $F^j(y_o^i, \alpha_o) = 0$ & $\theta_\ell ( x_\mathcal{S}^i (\vartheta^A))=0$ \\
       \hline
       linearization  & $y_o^i \rightarrow y_o^i + \Delta y^i $ & $x_S^i \rightarrow  x_\mathcal{S}^i + \epsilon \psi N^i$\\
       \hline
        stability & all eigenvalues of $[DF]^i_j = f^i_{,j}$
         & all eigenvalues of $L_\mathcal{S}(\ell,N)$ \\
        condition & have \textbf{negative real part} & have \textbf{positive real part }\\
         \hline
        bifurcation & $[DF]^i_j$ has a vanishing  & $L_\mathcal{S} (\ell,N)$ has a vanishing \\
        condition & eigenvalue $\lambda_A$ & eigenvalue $\lambda_A$\\
         \hline
        ``direction'' of & eigenvector $v^i_A$  & eigenfunction $\psi_A$  \\ 
        bifurcation & & and normal: $\psi_A N^i$
    \end{tabular}
    \caption{Finite-dimensional dynamical system versus MOTS bifurcation theory.  }
    \label{tab:placeholder}
\end{table}

We now elaborate on the correspondence between the two problems, which is summarized in Table \ref{tab:placeholder}. For MOTSs we consider the phase space of all closed surfaces in $\Sigma$. To those surfaces we can apply the outward expansion operator $\theta_\ell$. MOTSs are surfaces on which it vanishes. These are the equivalent of fixed points in this infinite-dimensional phase space. 

For concreteness we will write our surfaces (locally) in parameterized form $x^i (\vartheta^A)$. Then nearby surfaces take the form
\begin{equation}
    x^i (\vartheta^A) \approx x^i (\vartheta^A) + \epsilon \psi (\vartheta^A) N^i(\vartheta^A) \; . 
    \label{eq:nearby}
\end{equation}
Here the function $\epsilon \psi$ is playing the role of the variation $\Delta y^i$ from the finite-dimensional case. The stability operator 
$L_\mathcal{S} (\ell, X)$ on a MOTS $\mathcal{S}$ plays the same role as the Jacobian $[DF_o]$ at a fixed point $y_o$.
However, it can be seen in Table \ref{tab:placeholder} that the stability conditions are not the same. To reconcile this, we recall the notion of stability for minimal surfaces, which are a special case of MOTSs in slices that have $K_{ij} = 0$.

To find a surface that minimizes area (say within a given homotopy class), it is natural to evolve it in the direction that decreases area as quickly as possible. This is the negative gradient flow of the area functional, known in the literature as \emph{mean curvature flow}. This gives a dynamical system on the phase space of closed surfaces in $\Sigma$, where the normal velocity of a surface is equal to \emph{minus} its mean curvature. Fixed points of this dynamical system are minimal surfaces, and the linearization is minus the linearization of the mean curvature, that is, $-L_\cS(\ell,N)$. This has negative eigenvalues if the stability operator has positive eigenvalues, so the dynamical notion of stability for the mean curvature flow is in fact consistent with the geometric notion of stability for minimal surfaces.

To conclude this discussion, we note that while MOTSs are solutions of a nonlinear equation, and hence can be studied using tools of bifurcation theorem, in general they do not arise as fixed points of a dynamical system. One exception is the spherical MOTSs in Reissner--Nordstr\"om--de Sitter, for which there is a naturally occuring dynamical system (Section \ref{sec:RNdS_Bif}).

\section{Examples of spherically symmetric bifurcations}
\label{sec:sphere_stab}

Our first set of examples comes from the 
 Reissner--Nordstr\"om--de Sitter family of spacetimes.  
In this section we restrict our attention to spherical symmetry and bifurcations with a vanishing principal eigenvalue, so all calculations can be done exactly. We will see that in this case the spherically symmetric MOTSs and their bifurcations
can also be explicitly understood as a one-dimensional dynamical system. 

\subsection{Stability spectrum in spherical symmetry}

For a spherically symmetric MOTS the induced
metric is that of a sphere of some areal radius $R_o$, with Gauss curvature $\mathcal{K} = 1/R_o^2$. If we further assume 
that all other quantities appearing in the stability operator are similarly symmetric,
then $\omega^{l_+} = 0$, $\sigma_{l^+} = 0$, 
and all scalar quantities are constants. 
In this case the stability operator (\ref{eq:varSigma}) reduces to
% \begin{equation}
%     L_{\mathcal{S}} (\ell, X) \psi  = \frac{e^\mu B }{R_o^2}  \left( - \triangle + 1 - R_o^2 \left[\frac{A}{B} G_{l^+ l^+} + G_{l^+ l^-} \right]  \right) \psi \label{eq:SSstab}
% \end{equation}
\begin{equation}
    L_{\mathcal{S}} (\ell, N) \psi  = \frac{e^\mu}{R_o^2}  \left( - \triangle + 1 - R_o^2 \left[\frac12 G_{l^+ l^+} + G_{l^+ l^-} \right]  \right) \psi \label{eq:SSstab}
\end{equation}
where $\triangle$ is the Laplacian on the unit 
sphere. 
The eigenfunctions are spherical 
harmonics $Y_{np}(\theta, \phi)$  with corresponding eigenvalues
% \begin{equation}
%     \lambda_{nm} = \frac{e^\mu B }{R_o^2}  \bigg(  n(n+1) + 1  - R_o^2 \left[\frac{A}{B} G_{l^+ l^+} + G_{l^+ l^-} \right] \bigg)  
%     \label{eq:SSeig2} \; . 
% \end{equation}
\begin{equation}
    \lambda_{np} = \frac{e^\mu}{R_o^2}  \bigg(  n(n+1) + 1  - R_o^2 \left[\frac12 G_{l^+ l^+} + G_{l^+ l^-} \right] \bigg)  
    \label{eq:SSeig2} \; . 
\end{equation}
From (\ref{eq:SSeig2}) it is clear that in the absence of matter fields all eigenvalues are positive and so a spherically symmetric MOTS in vacuum is always strictly stable. 

All of the examples that we consider in this section will be isolated horizons so that $G_{l^+ l^+} = 0$, in which case
\begin{equation}
    \lambda^{\mathrm{iso}}_{np} = \frac{e^\mu}{R_o^2}  \bigg(  n(n+1) + 1  - R_o^2  G_{l^+ l^-}  \bigg)  \; . 
    \label{eq:SSeigiso}
\end{equation}
 %Note too that for these cases, it is manifest that the stability classification does not depend on the scaling of $X$, $\ell$ or $(l_+,l_-)$. 
Physically,  (\ref{eq:SSeigiso}) can be understood as showing that sufficiently dense matter induces instability.

\subsection{Spherical MOTSs in RNdS spacetime}

In infalling Eddington--Finkelstein coordinates the RNdS metric takes the form
\begin{equation}
   g_{\alpha \beta} \dd x^\alpha \dd x^\beta = - \frac{f(r)}{r^2}  \dd v^2 + 2 \dd v \dd r + r^2 \dd \Omega^2 
   \label{eq:metricRNds}
\end{equation}
with 
\begin{equation}
    f(r) = - \frac{\Lambda}{3} r^4  + r^3 - 2mr  + q^2 \; . \label{eq:f} 
\end{equation}
We assume $\Lambda > 0$ so that the spacetime is asymptotically de Sitter. We also assume $q \neq 0$.
%By the structure of this polynomial, with $\Lambda > 0$ there is always at least one positive and one negative root ($f(0)>0$ but $f(r)$ necessarily becomes negative as $r \rightarrow \pm \infty$). 
%However, there will be up to three positive roots and these (in ascending order) correspond to the inner black hole, outer black hole and cosmological horizons. 

For null normals
\begin{equation}
    l_+ = \frac{\partial}{\partial v} + \frac{f}{2r^2}  \frac{\partial}{\partial r} \quad \mathrm{and} \quad
     l_- = -   \frac{\partial}{\partial r}
\end{equation}
the outward expansion is
\begin{equation}
    \theta_{l^+} = \frac{f}{r^3} \; ,
\end{equation}
thus there are MOTSs at all radii $r_o$ for which $f(r_o)=0$.
 When $\Lambda > 0$ and $q \neq 0$ there is always at least one positive and one negative root, since $f(0)>0$ and $f(r)$ becomes negative as $r \rightarrow \pm \infty$. However, there can be up to three positive roots and these (in ascending order) correspond to the inner black hole, outer black hole and cosmological horizon. These are all isolated horizons.

Now
\begin{equation}
    G_{l^+ l^-} = \frac{1}{r^{2}}-\frac{f_{r}}{r^{3}}+\frac{f}{r^{4}}\; . 
\end{equation}
Then, by (\ref{eq:SSeigiso}) the 
principal eigenvalue of a MOTS with $r = r_o$ is
\begin{equation}
    \lambda_o = \frac{e^\mu}{r_o^2}  \bigg( 1  - r_o^2  G_{l^+ l^-}  \bigg)  =  \frac{e^\mu}{r_o^3}
      f_r(r_o) \; ,
\end{equation}
so the sign is determined by the slope of $f(r)$ at $r_o$. 
In particular, this means the principal eigenvalue of the outermost MOTS is always non-positive, since 
$\lim_{r \rightarrow \infty} f(r) = - \infty$. Thus if there are three positive roots, we recover the usual result: the  inner black hole and cosmological horizons are unstable and the outer black hole horizon is strictly stable.

\subsection{Bifurcations in RNdS}
\label{sec:RNdS_Bif}

From the metric, it is straightforward to see that radial null curves satisfy either $v = \mbox{constant}$ for inward-oriented curves or 
\begin{equation}
    \frac{ \dd r}{\dd v} = \frac{f}{2 r^2}  
    \label{eq:RNdS_DS}
\end{equation}
for outward-oriented curves. We focus on the outward-oriented curves. Then (\ref{eq:RNdS_DS}) is a one-dimensional dynamical system and the fixed points are the radial locations of the 
inner, outer and cosmological horizons. 

As noted in Section \ref{sec:BifTheory}, there is a mismatch in terminology between dynamical and geometrical notions of stability. Geometrically 
a horizon with $f_r > 0$ is stable and outward-oriented, spherically symmetric null rays evolve away from it. If $f_r<0$ it is unstable and outward-oriented, spherically symmetric null rays evolve towards it. However, viewed as a dynamical system, these would respectively be labelled unstable and stable. 

We now consider how the fixed points change if we promote the parameters in (\ref{eq:f}) to functions $\Lambda=\Lambda(\alpha)$, $m=m(\alpha)$ and $q=q(\alpha)$ and then evolve through the phase space of solutions. These examples are finely tuned in the sense that we carefully choose functions so that the transcritial and pitchfork examples have a persistent fixed point at a fixed value of $r$. 
However, these are still useful examples as they clearly demonstrate the close relationship between MOTS evolution and bifurcation theory in a familiar exact solution setting. The following three bifurcations occur when the principal eigenvalue vanishes and a previously strictly stable MOTS becomes unstable.

\begin{figure}
\includegraphics{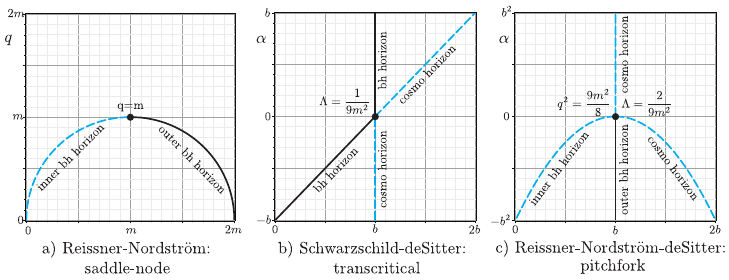}
    \caption{Bifurcations in the Reissner--Nordstr\"om--de Sitter family of spacetimes. The horizontal axis is the areal radius of the MOTS and the vertical axis is the spacetime deformation parameter. The solid dots are the bifurcation points with the physical parameters of those points listed. %\ivan{labels: to match Graham changes plus make $q/Q$ and $m/m$ consistent}
    }
    \label{fig:RNdS}
\end{figure}

\subsubsection{Saddle-node}
\label{sec:RNdS_SN}

Our first example is very well known. Setting $\Lambda = 0$ and so restricting to Reissner--Nordstr\"om and fixing $m=\mbox{constant}$, we vary $q$. For $q< m$ there are two roots: the inner and outer black hole horizons, which are respectively unstable and strictly stable. At $q=m$ the horizons become degenerate, $f_r(m)=0$, and the resulting extremal horizon has a vanishing principal eigenvalue. For $q>m$ there are no longer any MOTSs and we are left with a naked singularity. 

This pair annihilation of horizons, shown in Figure \ref{fig:RNdS}a, is an example of a saddle-node bifurcation.

\subsubsection{Transcritical}

Setting $q=0$ and so restricting to Schwarzschild--de Sitter,
we parameterize $f(r)$ as
\begin{equation}
f(r) = - r\left(\frac{\Lambda  r}{3}+a \right) \left(r -b \right) \left(r -b -\alpha \right) \label{eq:ftc}
\end{equation}
and assume that $a, b > 0$. Then the negative root is at
$r= -3a/\Lambda$, while the outer black hole and cosmological horizons are respectively at
$r=b$ and $r=b+ \alpha$ if $\alpha>0$ (or vice versa if $\alpha < 0$). If $\alpha = 0$, there is a degenerate root at $r=b$ at which 
$f_r(b)=0$ and the principal eigenvalue vanishes.

Fixing $b$ for scale and varying $\alpha$, there is a transcritical bifurcation at $\alpha=0$. Starting with $\alpha < 0$ and increasing it, the physical picture is that the black hole 
horizon moves out towards a cosmological horizon of area $4 \pi b^2$. At $\alpha=0$ they become degenerate and then as $\alpha$ continues to increase, the cosmological horizon increases in area while the black hole horizon remains with constant area $4 \pi b^2$. This is shown in Figure \ref{fig:RNdS}b). 

Matching (\ref{eq:f}) and (\ref{eq:ftc}), we find expressions for $\Lambda$, $a$ and $m$ in terms of $b$ and $\alpha$:
\begin{align}
    \frac{\Lambda}{3} & = \frac{1}{3 b^{2}+3 b \alpha +\alpha^{2}}  \\
    a & = \frac{\Lambda}{3} ( 2 b + \alpha ) \\
    m & = \frac{b}{2} \left( 1 - \frac{\Lambda}{3} b^2 \right) \; . 
\end{align}
Thus $\Lambda$ is always positive, $a$ is positive if $\alpha > - 2b$ and, after some calculation, it can be seen that $m$ is positive if $\alpha > - b$. Hence the bifurcation happens for reasonable physical parameters. 
See \cite{Senovilla:2022bsn,Senovilla:2023npd}
for a nice discussion of a physical version of this process
(though with a slightly different perspective than we use here).

\subsubsection{Pitchfork} 
For the pitchfork we need all physical parameters non-zero. We parameterize
\begin{equation}
    f(r) = - \left(\frac{\Lambda  r}{3}+a \right) \left(r -b \right) \left((r -b)^2 +\alpha) \right) \, , \label{eq:fpf} 
\end{equation}
and assume that $a, b > 0$. 
Then the negative root is again at $r = -3a/\Lambda$ and there is always a root 
at $r=b$. For $\alpha > 0$ these are the only two roots and physically the $r=b$ root is the cosmological horizon in a naked singularity spacetime. For
$\alpha < 0$ there are three positive roots: $b - \sqrt{-\alpha}$, $b$ and $b + \sqrt{-\alpha}$, which are the inner black hole, outer black hole and cosmological horizon, respectively. 

Fixing $b$ for scale, we find a pitchfork bifurcation at $\alpha = 0$. For $\alpha > 0$, $r=b$ is the only MOTS and it is unstable. At $\alpha = 0$ it becomes triply degenerate with a vanishing principal eigenvalue. Then for $\alpha < 0$, $r=b$ becomes the black hole horizon, which is strictly stable, while $r = b \pm \sqrt{-\alpha}$ are the inner black hole and cosmological 
horizon, which are both unstable.

Matching (\ref{eq:f}) and (\ref{eq:fpf}), we find expressions for $\Lambda$, $a$, $m$ and $q^2$ in terms of $b$ and $\alpha$:
\begin{align}
   \frac{\Lambda}{3} &=  \frac{1}{6 b^{2}-\alpha} \\
a & = \Lambda b \\
m &= \frac{\Lambda b}{3} \left(4 b^{2}+\alpha \right) \\
q^2 & = \Lambda b^{2} \left(b^{2}+\alpha \right) \;.
\end{align}
For $b>0$ and $\alpha > -b^2$ all of these quantities are positive. Hence, the pitchfork bifurcation happens for well-defined physical parameters (though  it does involve a naked singularity). 
This process is illustrated in Figure \ref{fig:RNdS}c).

\section{Axisymmetric bifurcations I:  Reissner--Nordstr\"om
%of a stability operator
} 
\label{sec:axiRN}

We now turn from spherical symmetry to consider an example in which axisymmetric MOTSs bifurcate from the inner horizon of a Reissner--Nordstr\"om black hole. Except for the transition from extremality, which was seen in the previous section, all of these bifurcations correspond to vanishing non-principal eigenvalues.

\subsection{Vanishing eigenvalues on the inner horizon}
Using the techniques from Section \ref{sec:sphere_stab} we calculate the (axisymmetric) eigenvalues of the inner horizon as
\begin{align}
    \lambda_{n0}  &=  \frac{1}{P_{\mathrm{in}}^2} \left( n(n+1) + 1 - \frac{q^2}{P_{\mathrm{in}}^2} \right) \\
     & = \frac{1}{P_{\mathrm{in}}^2} \left( n(n+1) - \frac{2 \sqrt{m^2 - q^2}}{m - \sqrt{m^2 - q^2} }\right)  
\end{align}
and so, as claimed in (\ref{eq:RNvanish}), the 
zeroth, first and second eigenvalues vanish for $q=m$, $q=\frac{\sqrt{3}}{2} m$ and 
$q = \frac{\sqrt{7}}{4} m$, respectively. It is around these parameter values that we will look for bifurcations of the inner horizon into other MOTSs.

\subsection{The metric and coordinates}
\label{sec:RNmetric}

% This time we go beyond the principal eigenvalue and consider bifurcations associated with other vanishing eigenvalues. Only the 
% extremality-to-naked-singularity transition corresponds to a loss of boundary between trapped and untrapped regions. When higher eigenvalues vanish
% we  see examples of transcritical and pitchfork bifurcations from the inner horizon.

We work in generalized Painlev\'{e}--Gullstrand coordinates \cite{Hennigar:2021ogw, Martel:2000} so that the spacetime metric is
\begin{equation}
    g_{\alpha \beta} \dd x^\alpha \dd x^\beta =  -F \dd T^2 + 2 \sqrt{1 - pF} \dd T \dd r + p \dd r^2 + r^2 \! \dd \Omega^2 
\end{equation}
where $F = F(r) = 1 - \frac{2m}{r} + \frac{q^2}{r^2}$, $p = p(r)$ is a free function of $r$ 
%$d \Omega^2 = \dd \theta^2 + \sin^2 \!  \theta \,  \dd \phi^2 $ is the metric on a unit two-sphere and the 
and the stress-energy tensor is 
\begin{equation}
    T_{\alpha \beta} \dd x^\alpha \dd x^\beta =
    \frac{q^2}{r^4} \left( F \dd T^2 - 2 \sqrt{1 - pF} \dd T \dd r -p  \dd r^2 + r^2 \dd \Omega^2  \right)  \; . 
\end{equation}

Then surfaces of constant $t$ have induced metric
\begin{equation}
    h_{ij} \dd x^i \dd x^j = p \dd r^2 + r^2 d \Omega^2   \, , 
\end{equation}
forward-in-time oriented unit normal
\begin{equation}
    u^\alpha \left( \frac{\partial}{\partial x^\alpha} \right)  = \sqrt{p} \frac{\partial}{\partial T} - \sqrt{\frac{1 - pF}{p}} \frac{\partial}{\partial r}
\end{equation}
and extrinsic curvature 
\begin{equation}
    K_{ij} \dd x^i \dd x^j =  \frac{F_r p^2 + p_r}{2  \sqrt{p(1-pF)}} \dd r^2 - r \sqrt{\frac{ 1-pF }{p}} \dd \Omega^2 \; . 
\end{equation}
The energy density seen by observers moving perpendicular to these surfaces is 
\begin{equation}
    \rho = T_{\alpha \beta} u^\alpha u^\beta = \frac{q^2}{8 \pi r^4}  \; . 
\end{equation}

The free function $p(r)$ adapts this coordinate system so that the radial acceleration of a worldline with tangent vector $u$ has magnitude \cite{Hennigar:2021ogw}
\begin{equation}
    \| a \| =\left|   \frac{p'}{2p^{3/2}} \right| \, , 
\end{equation}
where the prime indicates a derivative with respect to $r$. Constant $p$
corresponds to freely falling observers as discussed in \cite{Martel:2000}.
For Reissner--Nordstr\"om such  observers do not reach $r=0$: they turn around at $r = q^2/2m$. However, this can be overcome by adapting the coordinate system to accelerating observers. There are many choices for which the coordinates run all the 
way to the origin. In this paper we choose
\begin{equation}
    p(r) = \frac{r^2}{r^2 + q^2}  \quad \Longrightarrow \quad \| a \| = \frac{q^2}{r^2 \sqrt{r^2+q^2}} \; . 
\end{equation}
Note that this is a different $p$ than was presented in \cite{Hennigar:2021ogw}. However, there are only small quantitative differences in the resulting MOTSs. 

\subsection{Finding MOTSs}

Non-spherically symmetric MOTSs are found numerically. 
By the discussion of Section \ref{sec:MOTSodesic}, the MOTSodesic equations are
\begin{align}
    \ddot{P} & = - \frac{q^2}{P(P^2+q^2)} \, \dot{P}^2 + \frac{P^2+q^2}{P} \, \dot{\Theta}^2 + \kappa \sqrt{P^2 + q^2} \dot{\Theta} \label{eq:RNddP}\\
    \ddot{\Theta} & = - \frac{2}{P} \, \dot{P}  \dot{\Theta} - \frac{\kappa}{\sqrt{P^2+q^2}} \, \dot{P} \label{eq:RNddT}
\end{align}
where dots are derivatives with respect to the arclength $s$ and
\begin{equation}
\label{eq:kappaRN}
    \kappa =   -
     \sqrt{\frac{2m}{P^3}} + \frac{ \sqrt{mP} }{\sqrt{2}(P^{2}+q^2)} \,  \dot{P}^2 - \frac{\cot \! \Theta }{\sqrt{P^2+q^2}}  \,  \dot{P}  -  \sqrt{2 mP }  \,  \dot{\Theta}^2+\frac{\ \sqrt{P^2+q^2} }{P}  \,  \dot{\Theta} \; . 
\end{equation}
The arclength constraint $\| T \| = 1$ is 
\begin{equation}
    \frac{P^2}{P^2 + q^2} \left(\dot{P}^2 + (P^2 + q^2) \dot{\Theta}^2 \right) = 1 \; . \label{eq:RNarclength}
\end{equation}

%For this paper, we restrict our attentions to 
%bifurcations from the spherically symmetric inner horizon MOTS found at $P_{\mathrm{in}} = M - \sqrt{M^2-q^2}$. 

\subsection{Stability spectra}

% Using the techniques from Section \ref{sec:sphere_stab} we calculate the eigenvalue spectrum of the inner horizon as
% \begin{align}
%     \lambda_{nm}  &=  \frac{1}{P_{\mathrm{in}}^2} \left( n(n+1) + 1 - \frac{q^2}{P_{\mathrm{in}}^2} \right) \\
%      & = \frac{1}{P_{\mathrm{in}}^2} \left( n(n+1) - \frac{2 \sqrt{M^2 - q^2}}{M - \sqrt{M^2 - q^2} }\right)  
% \end{align}
% and so, in agreement with (\ref{eq:RNvanish}), the 
% zeroth, first and second eigenvalues
% respectively vanish for $q=m$, $q=\frac{\sqrt{3}}{2} m$ and 
% $q = \frac{\sqrt{7}}{4} m$. It is around these MOTSs that we will look for bifurcations of the inner horizon into other MOTTs. 

Using the MOTSodesic equations, we will find axisymmetric MOTSs bifurcating from the inner horizon. Their stability is determined by the operator $\tilde L$ in (\ref{eq:L1d}), so
% are axisymmetric and so, following the discussion earlier in this section, 
we calculate
\begin{equation}
    \mathcal{G} = -\frac{\dot{P}}{P}+ 3 \sqrt{\frac{ 2mP}{P^2+q^{2}}} \dot{P} \dot{\Theta}- (\cot \! \Theta)  \dot{\Theta} 
\end{equation}
and
\begin{align}
    \mathcal{H} = &  -\frac{\ddot{P}}{P} + \left(  - \ddot{\Theta}   +   \frac{2 \dot{P} \dot{\Theta} }{P}  \right) \cot \! \Theta - \frac{2 \dot{P}^2 \cot^2 \! \Theta}{P^2+q^2} - \left( 1 + \frac{2 q^2}{P^2}  \right)  \dot{\Theta}^{2} \nonumber - \frac{q^2}{P^4} \nonumber \\
    & \quad + \sqrt{ \frac{9m}{{2P^3 (P^2 + q^2)}}} \Bigg( P^2 \left(\ddot{P} \dot{\Theta} + \ddot{\Theta} \dot{P} \right)  + \dot{P} \left(\frac{8}{3} + P^2 \dot{\Theta}^2  \right) \cot{\Theta} \nonumber \\
    & \quad + P \dot{P}^2 \dot{\Theta} \left(\frac{1}{2} + \frac{q^2}{P^2 + q^2} \right) - \frac{8\dot{\Theta}}{3P} (P^2 + q^2)  \Bigg)  -  m \left(\frac{4}{P^3} + \frac{9P \dot{P}^2 \dot{\Theta}^2}{2(P^2 + q^2)} \right) \; . 
\end{align}
Using the numerical methods of \ref{app:spectra}, 
calculating the first 10 eigenvalues for the $L$ defined by these coefficients can be done accurately using the first 20 Chebyshev polynomials as a basis and takes on the order of one second on a typical laptop.

\subsection{The first bifurcations}

We examine the first three bifurcations encountered in the Reissner--Nordstr\"om spacetime as the charge is decreased from $q=m$.

\subsubsection{Transition from extremality: a saddle-node bifurcation}

For $q=m$ there is a single spherical MOTS: the extremal horizon with vanishing principal eigenvalue. For $q < m$ this splits into outer and inner horizons, 
as previously discussed in Section \ref{sec:RNdS_SN} and explicitly shown in 
Figure \ref{fig:RN_combo}c). In that figure (and all others for this section) $q$ increases from left to right and so the extremal horizon is on the far right.
%with the bifurcation progressing to the left. 

This bifurcation is also shown in the top part of Figure \ref{fig:BifRN}c), which plots the intersection of the inner and outer MOTSs with the positive $z$-axis as a function of $q$. As noted earlier, this is a saddle-node bifurcation of spherical MOTSs.

\begin{figure}
    \centering
    \includegraphics[scale=0.95]{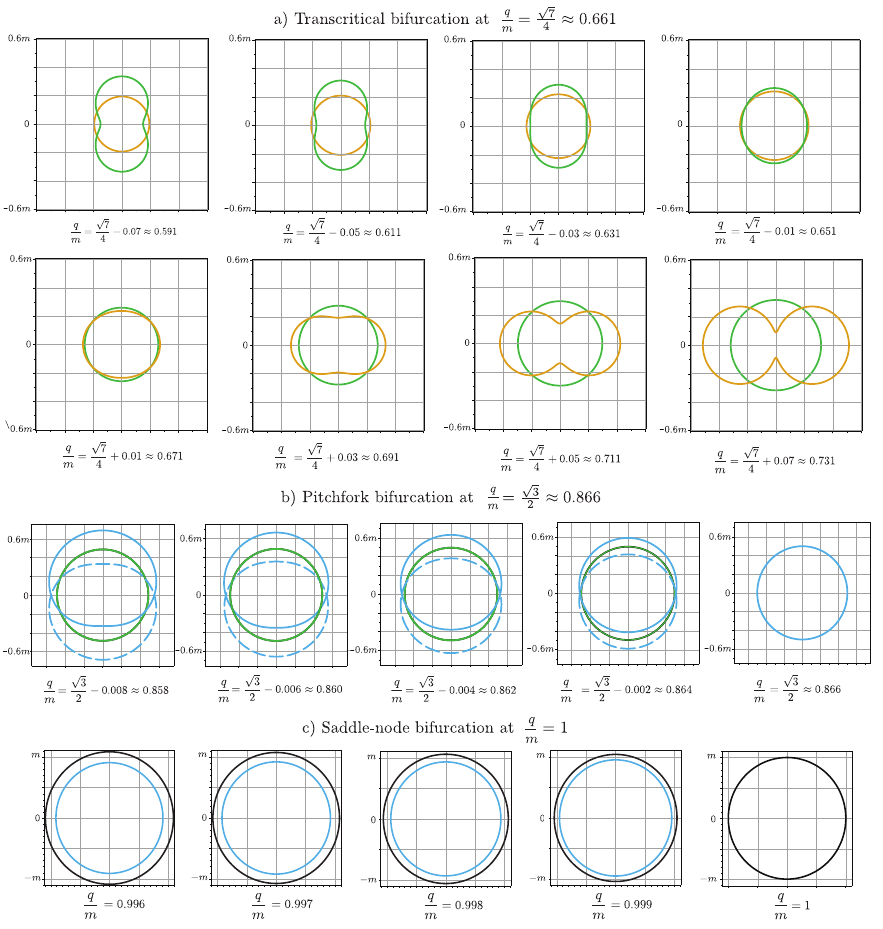}
    \caption{Axisymmetric MOTSs bifurcating from the Reissner--Nordstr\"om inner horizon. Charge is increasing from the top left to the bottom right in this figure. Black, blue, green and gold MOTSs respectively have 0, 1, 2 or 3 negative eigenvalues. For the pitchfork, the two non-spherical MOTSs are shown solid and dashed so that they are easier to distinguish. }
    \label{fig:RN_combo}
\end{figure}

The lower part of Figure \ref{fig:BifRN}c) plots the eigenvalues of the two MOTSs. As would be expected at $q=m$, the point of bifurcation, an eigenvalue vanishes. In this case it is the principal eigenvalue. All other eigenvalues remain positive in the range of this graph.

\begin{figure}
    \centering
    \includegraphics[scale=0.95]{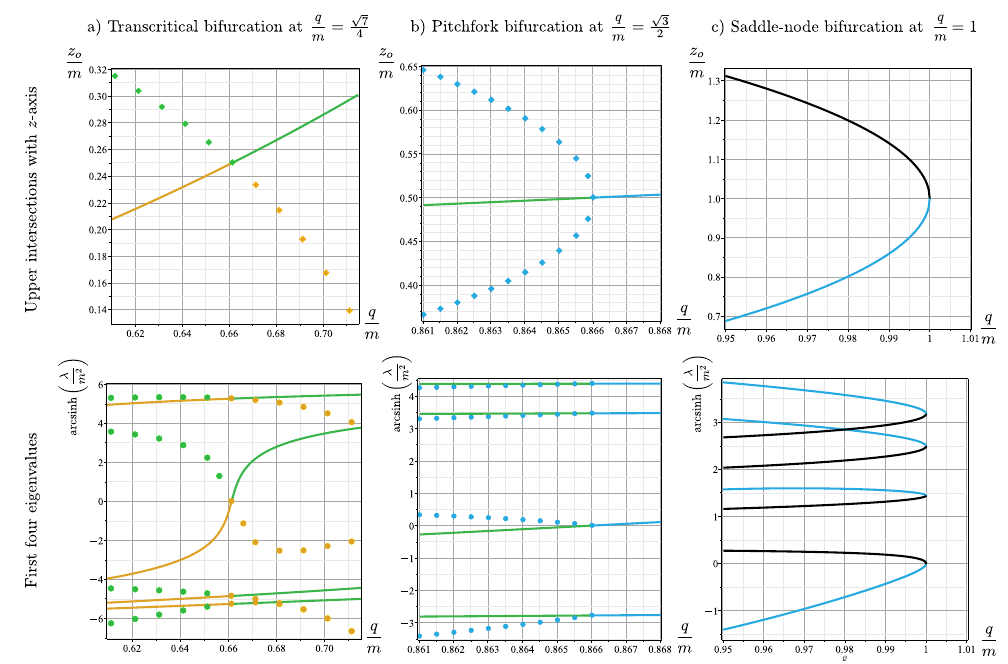}
    \caption{The first three bifurcations of Reissner--Nordstr\"om. 
    %In all diagrams charge increases from left to right. 
    The top row shows the intersection of the MOTSs (from Figure \ref{fig:RN_combo}) with the $z$-axis. The bottom row plots the first four eigenvalues of the various MOTSs. As expected, each bifurcation corresponds to a vanishing eigenvalue. The solid lines track the inner and outer horizon (and are calculated exactly) while the dots are the numerically calculated axisymmetric MOTSs. As in Figure \ref{fig:RN_combo}, black, blue, green and gold curves/dots respectively correspond to MOTSs with 0, 1, 2 or 3 negative eigenvalues. %\ivan{fix labels}
    }
    \label{fig:BifRN}
\end{figure}

\subsubsection{Transition to two negative eigenvalues: a pitchfork bifurcation}

More interesting is the MOTS evolution starting from $q = \frac{\sqrt{3}}{2} m$ that is shown in Figure \ref{fig:RN_combo}b). 
For $q>\frac{\sqrt{3}}{2} m$ there are no MOTSs close to the inner horizon. However, as the $n=1$ eigenvalue vanishes for $q=\frac{\sqrt{3}}{2} m$ and then becomes negative for $q<\frac{\sqrt{3}}{2} m$, two MOTSs split from the inner horizon and evolve away from it. Geometrically they are reflections of each other through  the equatorial plane. 

The top part of Figure \ref{fig:BifRN}b) plots the positive $z$ intersections as a function of $q$ and it is clear that this takes the form of a pitchfork bifurcation. During this transition, the inner horizon starts with two negative eigenvalues (green) and then switches to a single negative eigenvalue (blue) as the charge increases to $q = \frac{\sqrt{3}}{2} m$. At that point, the two non-spherical MOTSs join and beyond that point cannot be found. 

The bottom part of Figure \ref{fig:BifRN}b) plots the eigenvalues for these MOTSs through this process. The two non-spherical MOTSs are geometrically identical (just reflections of each other) and so have the same eigenvalues. As expected at the bifurcation point, all MOTSs have a vanishing eigenvalue. In this case it is the second one and the principle eigenvalue is negative throughout, and so all of the involved MOTSs are unstable. 

As might be expected for a pitchfork, this bifurcation is associated with a symmetry of the system (reflection across the equatorial plane). 

Note too that the two bifurcating MOTSs straddle the unstable Killing horizon. These are our first examples of an unstable MOTS that is crossed by a marginally stable MOTS, as allowed by Lemma \ref{lem:maxmin}. Here $\mathcal{N}_+$ runs from the inner to the outer horizons while $\mathcal{N}_-$ runs from the inner horizon to the singularity.

\subsubsection{Transition to three negative eigenvalues: a transcritical bifurcation}

Finally, in Figure \ref{fig:RN_combo}a), the top row shows a (green) MOTS that is partially inside and partially outside the inner horizon but which approaches it as $\frac{q}{m} \rightarrow \frac{\sqrt{7}}{4}$ from below. For $\frac{q}{m} > \frac{\sqrt{7}}{4}$ a (gold) MOTS then departs from the inner horizon. 

From the top and bottom of Figure \ref{fig:BifRN}a)  it can be seen that this is a transcritical bifurcation, with the inner horizon starting with three negative eigenvalues and finishing with two, while the bifurcating axisymmetric MOTS started with two and finished with three. This evolution was observed previously in Figure 16 of \cite{Hennigar:2021ogw}. However, it wasn't recognized in that paper as an example of a transcritical bifurcation. 

The bifurcating MOTS that straddles the inner horizon provides a second demonstration that an unstable MOTS is no longer a barrier to weakly (un)trapped surfaces. 

\subsubsection{Further bifurcations}

Preliminary calculations indicate that higher vanishing eigenvalues correspond to pitchfork (for $n$ odd) or transcritical (for $n$ even) bifurcations. %We have also tracked the non-spherical MOTSs from the pitchfork and transcritical bifurcations as they evolve away from the inner horizon and seen them go through further bifurcations. 
However, we will leave detailed investigations and classifications of these events for future work.

\section{Axisymmetric bifurcations II: Weyl-distorted Schwarzschild spacetimes}
\label{sec:weyl}

We now turn to our final, most computationally involved, set of examples. Weyl-distorted Schwarzschild spacetimes are black holes that are distorted by arbitrarily strong external fields \cite{Geroch:1982}. 
They are exact solutions with the possible distortions decomposed as  multipole expansions. There is a persistent Killing horizon at $r=2m$ no matter how strong the distortions become. 

For small distortions the Killing horizon is stable, but it was shown in \cite{Pilkington:2011aj} that it does not always remain stable.\footnote{This was not shown directly using the stability operator but rather a somewhat complicated argument that linked the sign of deformations in the inward null direction to that of the inward null expansion and then showed that that expansion was not everywere positive.} In this section we return to these solutions for a more detailed study of the bifurcations that occur as $r=2m$ becomes progressively more unstable.

\subsection{Weyl-distorted Schwarzschild in static coordinates}

In the standard static coordinate system, the Weyl-distorted Schwarzschild spacetime takes the form 
\begin{equation}
\dd s^2 = -e^{2U} F \dd t^2 + e^{-2U + 2V}\left[\frac{\dd r^2}{F} + r^2\,\dd \theta^2\right] + e^{-2U}\,r^2\,\sin^2\!\theta\,\dd \phi^2 \, , 
\label{metric}
\end{equation}
where $F(r) = 1 - \frac{2m}{r}$.
From the vacuum Einstein equations, from $R_{tt} = 0$ (and $R_{\phi \phi} = 0$) 
the distorting potential $U(r,\theta)$ satisfies 
\begin{equation}
  \frac{\partial}{\partial r} \left(r^2 F U_r \right)  + \frac{1}{\sin \theta} \frac{\partial}{\partial \theta} \Big(\sin \! \theta \, U_\theta \Big) = 0  \label{lap}
\end{equation}
where subscripts $r$ and $\theta$ indicate partial derivatives. Then, first-order equations for $V(r,\theta)$ in terms of $U(r,\theta)$ can be found by  
solving $R_{rr}=0$, $R_{r \theta}=0$ and $R_{\theta \theta}=0$\footnote{Solve for $V_r$ and $V_\theta$ from  $r(r-2m) R_{rr} -R_{\theta \theta}=0$ and $R_{r \theta}=0$, which are linear in $V_r$ and $V_\theta$.} to obtain
\begin{align}
V_r & = \frac{1}{\Xi}
\big( (r-m) A \sin \! \theta + B \cos \! \theta \big) \label{Vr}  \\
V_\theta & =  \frac{1}{\Xi} \big( (r-m)B \sin \! \theta - r^2 F A \cos \! \theta \big)  \label{Vt}
\end{align}
where 
\begin{align}
    \Xi &= (r-m)^2-m^2 \cos^2 \! \theta \\
    A & =  \left(r^2 F U_r^2 + 2 m U_r - U_\theta^2\right) \sin \! \theta\\
    B & = \left(2r^2 F U_r + 2m \right) U_\theta \sin \! \theta \; . 
\end{align}
There is a degeneracy in the components of the Einstein equations: all five non-trivial components of the Ricci tensor vanish if 
(\ref{lap}), (\ref{Vr}) and (\ref{Vt}) are satisfied.

Equation (\ref{lap}) has an exact series solution
\cite{Breton:1997,Frolov:2007}:
\begin{equation}
\label{eq:U}
U(r,\,\theta) =  \sum_{n = 1}^{\infty} \alpha_n \left( \frac{R}{m} \right)^n \mathcal{P}_n  \, ,  
\end{equation}
where the $\alpha_i$ are multipole coefficients, 
\begin{equation}
R =  \left[ \left(1 - \frac{2m}{r}\right)r^2 + m^2\cos^2\!\theta \right]^{1/2} \, , 
\end{equation}
and the $\mathcal{P}_n$ are defined in terms of Legendre 
polynomials $P_n$,
\begin{equation}
\mathcal{P}_n  = P_n\left(\frac{(r - m)\cos\theta}{R}\right) \, . 
\end{equation}
Given this form of $U$, the differential equations (\ref{Vr}) and (\ref{Vt}) for $V$ can 
also be integrated to obtain \cite{Breton:1997,Frolov:2007}
\begin{align}
\label{eq:V}
V(r,\,\theta) &=  \sum_{i = 1}^{\infty} \frac{i\alpha_i^2 }{2} \left( \frac{R}{m}\right)^{2i} \left(P_i^2 - P_{i - 1}^2\right)\\
& \quad  - \frac{1}{m} \sum_{i = 1}^{\infty}\alpha_i\,\sum_{j = 0}^{i - 1}  \Big[(-1)^{i + j} \left(r - m\left(1 - \cos\theta\right)\right)  
 + r - m\left(1 + \cos\theta\right) \Big] \left( \frac{R}{m} \right)^j P_j \; . \nonumber  
\end{align}

These expressions are complicated but take simple forms on $r=2m$. It is straightforward to check that 
\begin{equation}
    R(2m,\theta) = m \cos \theta  \; \;   \mbox{ and so} \quad \mathcal{P}_n(2m, \theta) = 1 \; . 
\end{equation}
Then 
\begin{equation}
    \bar{U}(\theta) := U(2m, \theta) = \sum_{n=1}^\infty \alpha_n \cos^n \! \theta
\end{equation}
and 
\begin{equation}
    \bar{V}(\theta) := V(2m, \theta) = 2 \bar{U}(\theta) - 2 u_o  \label{Vbar}
\end{equation}
where 
\begin{equation}
    u_o = \sum_{n=1}^\infty \alpha_{2n} \label{u0} \; . 
\end{equation}
To avoid conical singularities along the $z$-axis, the odd coefficients must also satisfy
\begin{equation}
        \sum_{n=1}^{\infty} \alpha_{2n - 1} = 0\; . \label{multipoleCon}
\end{equation}
Note that 
these conditions restrict the possible 
distortions that we can consider. While we 
can study pure $2n$-pole solutions (such as 
quadrupole), pure dipole, octopole or other
odd-poles all have conical singularities. Odd-pole distortions must include
at least two terms to avoid conical singularities.

Thus the simplest possible distortion that we can consider is pure quadrupole. Then $u_o = \alpha_2$ is non-zero but all other
expansion coefficients vanish. Dropping the subscript, the quadrupole distortion potential is
\begin{equation}
    U_{\mathrm{quad}}(r,\theta) = \alpha \left(  \cos^2 \! \theta 
+\frac{ r \left(r-2m\right) }{2m^2  }  \left(3 \cos^{2}\! \theta -1\right) \right) \; . 
    \label{UQ}
\end{equation}
This relatively simple $U_{\mathrm{quad}}$ is perhaps 
surprising given the apparently complicated form of (\ref{eq:U}). However it turns out that in this case (and in fact all other cases) the square root terms all cancel and only a polynomial in $r$ and $\cos \theta$ remains. 

The pure quadrupole term for $V(r,\theta)$ is also perhaps
simpler than might be expected. From (\ref{eq:V}) we have
\begin{align}
    V_{\mathrm{quad}} (r,\theta) 
     &= - 2 \alpha \left( \frac{r-m}{m} \right)  \sin^2 \! \theta \label{VQ}\\
     & \qquad + \alpha^2 \sin^2 \!\theta \left( \frac{r(r-2m)}{4m^4} \right) 
     % \nonumber \\
     % & \quad \times 
     \Big(
     -(4m-3r)(2m-3r) \cos^2 \!\theta  + r(r-2m)\Big) \; .  
   \nonumber
\end{align}

In the next section we return to general potentials, but when the time comes to focus on specific solutions, we will use the quadrupole distortions described by (\ref{UQ}) and (\ref{VQ}).

\subsection{Switching to Painlev\'e--Gullstrand-like coordinates}
The above form
of the metric has a coordinate singularity at $r=2m$, just like undistorted Schwarzschild. 
This singularity may be removed
with a generalized Painlev\'e--Gullstrand-like coordinate 
transformation. It would also be possible to do this in Eddington--Finkelstein-like coordinates, as in 
\cite{Fairhurst:2001, Pilkington:2011aj}, but
since we need spacelike time 
slices in order to use the MOTSodesic method, it will be more
convenient to use the PG-like coordinate system here. 

Following \cite{Hennigar:2021ogw}, we define 
a new coordinate
\begin{equation}
    T = t + e^{-2u_o} \int_{2m}^r {\frac{\sqrt{1-p(r) F}}{F}}
\end{equation}
where $p(r)$ is an arbitrary function. Then (\ref{metric}) becomes
\begin{align}
\mathrm{d} s^2 =&  - e^{2U}F  \dd T^2 
+ 2 e^{2U-2u_o} \sqrt{1-pF}  \dd T \dd r \label{PGmetric} \\
& + e^{2V-2U} (G \,\dd r^2 + r^2 \, \dd \theta^2) 
 + r^2e^{-2U} \sin^2 \! \theta \,\dd \phi^2 
\nonumber
\end{align}
where 
\begin{equation}
   G =  {e}^{4 U -2 V -4 u_{o} } p +\frac{1-{e}^{4 U -2 V -4 u_{o} }}{F} \; .  
\end{equation}
The denominator of the second term vanishes at $r=2m$ but by (\ref{Vbar}) so does the numerator. 
 Thus to find the limit we need to apply l'H\^{o}pital's rule:
\begin{align}
    \bar{G} = \! \lim_{r \rightarrow 2m} \! G \label{grr}
   & =  p(2m)  +  \lim_{r \rightarrow 2m} \left( \frac{r( 1-e^{4U-2V-4u_o}}{r-2m} \right)  \\
    & = p(2m) - \lim_{r \rightarrow 2m} \left( r (4U_r - 2 V_r) e^{4U-2V-4u_o}\right) \nonumber\\
    & = p(2m) + 8 \cot \! \theta \bar{U}_\theta - 4 \bU_\theta^2  
   \nonumber
\end{align}
where between the second and third lines we have applied (\ref{lap}) and (\ref{Vr}) to find $U_r$ and $V_r$ on $r=2m$. Hence, $g_{rr}$ is well-defined at $r=2m$ and there is no coordinate singularity. 

With spherical symmetry broken, the interpretation of $p(r)$ is not so physical as in Section \ref{sec:RNmetric}, but it remains a function that characterizes the behavior of the $\Sigma_T$-slicing.% rather than anything physical. 

In future subsections we will be working with pure quadrupole distortions with values of the distortion parameter $-9 < \alpha < 5$. As we shall see, at the outer ranges the geometry is far from being spherically symmetric and it is necessary
to carefully choose non-zero $p(r)$ to keep $\Sigma_T$ spacelike at least in some neighbourhood of $r=2m$. Happily for the ranges in which we are interested, this can be managed with constant $p(r)=p_o$. Depending on the exact problem we will have $15 \leq p_o \leq 40$. The expression for $g_{rr}$ is fairly complicated so it is easiest to confirm that these are appropriate values by numerical experimentation: simply plotting $g_{rr}$.

\subsection{Horizon geometry and stability operator}
\label{sec:HGSO}

The metric components of (\ref{PGmetric}) are independent of $T$
and so it is immediate that $\frac{\partial}{\partial T}$ is a
Killing vector field. 
Now consider the three-surface $H$ defined by $r=2m$. 
$\frac{\partial}{\partial T}$ is tangent to $H$ and is also null on that surface since 
$F(2m) = 0$, hence $H$ is a Killing horizon and will remain so for any value of $\alpha$. 
It follows that it is also an isolated horizon and that any cross-section 
of $H$ is a MOTS
\cite{Ashtekar:2000a, Ashtekar:2000b, Ashtekar:2001jb}.

We now examine the geometry of these MOTSs. Let $\mathcal{S}$ be any compact, axisymmetric,
two-dimensional slice of 
$r=2m$ that can be parameterized as
\begin{equation}
 (T,r,\theta, \phi) = (S(\vartheta) , 2m, \vartheta, \varphi)
 \quad \mbox{for} \quad 0 < \vartheta < \pi, \ - \pi < \varphi < \pi  \; . 
\end{equation}
Tangents to this surface push forward to vectors in 
the full spacetime as
\begin{equation}
e_\vartheta = \frac{\partial}{\partial \vartheta} =  S_\vartheta \frac{\partial}{\partial T} + \frac{\partial}{\partial \theta} 
\quad \mbox{and} \quad
e_\varphi=  \frac{\partial}{\partial \varphi} = \frac{\partial}{\partial \phi} \, ,
\end{equation}
and so the induced metric on $\mathcal{S}$
is 
\begin{align}
q_{AB} \dd \vartheta^A  \dd \vartheta^B & = g_{ab} e^a_A e^b_B  \dd \vartheta^A  \dd \vartheta^B \\
& =   4m^2 e^{-2\bU} \! \! \left(e^{4\bU-4u_o} \dd \vartheta^2 + \sin^2 \! \vartheta \, \dd \varphi^2  \right)   \; . \nonumber
\end{align}
As would be expected for an isolated horizon, the induced metric is independent of the choice of cross-section.

Using the tangent vectors $e_\vartheta$ and $e_\varphi$, it is straightforward to see that the normal space to $\mathcal{S}$  is spanned by $\dd T - S_\vartheta \dd \theta$ and  $\dd r$. From these we can construct the general form of the axisymmetric future-pointing null normal vector fields
\begin{equation}
    l_+ = e^{\beta} \frac{\partial}{\partial T} \, ,
   \qquad  l_- = e^{-\bU-\beta} \left( \frac{e^{4 u_o}  S_\theta^2 + 4 \bar{G} m^2}{8 m^{2}}  \frac{\partial}{\partial T} 
    - e^{2u_o} \frac{\partial}{\partial r}
+ \frac{ e^{4 u_{o}} S_\theta}{4 m^{2}}  \frac{\partial}{\partial \theta} \right) \, , 
\end{equation}
where $\beta= \beta(\vartheta)$ is a free function. 
% They form a basis for the normal space $T^\perp \mathcal{S}$ to $\mathcal{S}$. 
% They are cross-normalized so that $l_+ \cdot l_- = -1$. 
By direct calculation it is  straightforward to confirm that
\begin{equation}
    \theta_+ = 0 \; \; \mbox{and} \quad \sigma^+_{AB} = 0 
\end{equation}
irrespective of $\beta(\vartheta)$ or $S(\vartheta)$
(as they must be for an isolated horizon). 

The connection on the normal bundle is
\begin{equation}
    \omega^{l_+} = \left(\beta_\vartheta + \bar{U}_\vartheta + \frac{e^{2u_o} S_\vartheta}{4m} \right) \dd \vartheta \; . 
\end{equation}
This depends on both $\beta$ and the foliation $S(\vartheta)$, however
it is clearly a total derivative. In the dyad defined by 
\begin{equation}
    \beta = - \bU - \frac{e^{2u_o} S}{4m}  \label{betatriv}
\end{equation}
the connection is trivial, so we conclude that the stability of the cross-section  is independent of $S(\vartheta)$. This point was first noted in the less systematic discussion of \cite{Pilkington:2011aj}. An analogous result was shown for general Killing horizons in \cite{Mars:2012sb}: Proposition 4 from that paper says that the stability operators for different slices of a Killing horizon are similar (and so have the same eigenvalues).

Next, we consider the stability operator for these MOTSs.
We write the $q_{AB}$-compatible covariant derivative as $\mathcal{D}_A$. Then for a function $\psi(\vartheta,\varphi)$, the Laplacian on $\mathcal{S}$ is
\begin{equation}
\mathcal{D}^2 \psi = \frac{1}{4m^2}\left(  e^{-2 \bar{U} + 4 u_o} \left(\psi_{\vartheta \vartheta} + (\cot \theta - 2 \bU_\vartheta) \psi_\vartheta  \right) + \frac{e^{2 \bU} }{\sin^2 \vartheta} \psi_{\varphi \varphi} \right) \label{D2}
\end{equation}
and 
the Gauss curvature is
\begin{equation}
    \mathcal{K} = \frac{e^{-2\bar{U} + 4 u_o}}{4m^2} \left(1 + \bar{U}_{\vartheta \vartheta}  + 3  \bU_\vartheta \cot \vartheta - 2 \bar{U}_\vartheta^2  \right) \; .  \label{cK}
\end{equation}
All such $\mathcal{S}$ are slices of an isolated horizon and this is a vacuum spacetime. Hence, by the discussion of
Section \ref{sec:nonex_stab} the stability of $\mathcal{S}$ is fully 
determined by the self-adjoint operator
\begin{equation}
    L_\cS(l_+, -l_-) = - \mathcal{D}^2 \psi + \mathcal{K} \psi \;. \label{eq:KillStab}
\end{equation}
%which takes this simplest form and is explicitly self-adjoint when expressed relative to the (\ref{betatriv})-dyad.  
Other choices of $\ell$ or $X$ will change the numerical values of the eigenvalues but not their signs: the $(n_0,n_-)$ stability classification is invariant. 

% This subsection demonstrated that not only is the $(n_0,n_-)$ stability classification of the Killing horizon independent of the normal dyad 
% $(l_+, l_-)$, $\ell$  and direction of deformation $X$, but it is also 
% independent of the particular $S(\vartheta)$ slicing as well as being 
% completely indifferent to any foliation chosen (or not) for the full spacetime. 

\subsection{Stability of the quadrupole distorted Weyl--Schwarzschild Killing horizon}

 For all the examples up to this point we have considered bifurcations from spherical MOTSs and so it has been easy to identify the points of bifurcation. The current example is not spherically symmetric and so we must numerically calculate the spectrum of (\ref{eq:KillStab}). Here we use both the pseudo-spectral techniques of Appendix \ref{app:spectra} as well as an independent spectral method. 

 We recall from Section \ref{sec:slice}  that a non-rotating MOTS is said to be
 \emph{$(n_0, n_-)$-unstable} if its stability operator has $n_0$ vanishing eigenvalues and $n_-$ negative eigenvalues. 

%We have independently confirmed these results with a variety of other calculations including finite element \cite{MaryThesis} as well as using other scalings of the $(l_+, l_-)$ dyad, directions $X$ and null normals $\ell$.  In the interest of (relative) brevity, most of these are omitted. 

\subsubsection{Simplified stability operator}

From, (\ref{UQ}), (\ref{D2}) and (\ref{cK}), (\ref{eq:KillStab}) becomes
\begin{align}
\label{L:full}
     L(l_+, -l_-) & = - \mathcal{D}^2 \psi + \mathcal{K} \psi  \\
     & =  \frac{e^{-2\bU + 4u_o}}{4m^2} \bigg(
 -\psi_{\vartheta \vartheta} - (\cot \theta - 2 \bU_\vartheta) \psi_\vartheta 
  + \left(1 + \bar{U}_{\vartheta \vartheta}  + 3  \bU_\vartheta \cot \vartheta - 2 \bar{U}_\vartheta^2  \right) \psi  \bigg)\nonumber \\
 & \qquad  - \frac{e^{2 \bU}}{4m^2 \sin^2 \! \theta} \psi_{\varphi \varphi} \nonumber \; . 
\end{align}
We are only worried about the $(n_0, n_-)$-stability classification, and so can equally well study the spectrum of the slightly simpler
\begin{align}
     L(\ell, -l_-) & =
 -\psi_{\vartheta \vartheta} - (\cot \theta - 2 \bU_\vartheta) \psi_\vartheta 
 \nonumber  \\ &
  \qquad + \left(1 + \bar{U}_{\vartheta \vartheta}  + 3  \bU_\vartheta \cot \vartheta - 2 \bar{U}_\vartheta^2  \right) \psi   - \frac{e^{4 \bU-4u_o}}{ \sin^2 \! \theta} \psi_{\varphi \varphi}  \, ,  
\end{align}
where $\ell = 4m^2 e^{2\bU - 4u_o} l_+$. Calculations can (and have) been done keeping the exponential term on the main part of the operator. However, it increases the difficulty of the numerical calculations, as for larger $\alpha$ that term varies by several orders of magnitude between the poles and the equator.

Our focus is on the axisymmetric eigenfunctions $\psi(\theta)$ and their associated eigenvalues. Then, further specializing to quadrupolar distortions
\begin{equation}
    \bar{U} = \alpha \cos^2 \theta  \qand u_o = \alpha \, ,
\end{equation}
we study the spectrum of 
\begin{align}
    \tilde{L} = &   -\psi_{\vartheta \vartheta} - \Big(\cot \vartheta + 4 \alpha \sin \vartheta \cos \vartheta \Big) \psi_\vartheta \label{eq:Ltilde}\\
&  \quad + \Big(1 + 2 \alpha (1 - 5 \cos^2 \vartheta) - 8 \alpha^2 \sin^2 \vartheta \cos^2 \vartheta \Big) \psi \; .  \nonumber
\end{align}

\subsubsection{Quadrupolar geometry}

To get a feeling for the geometry of these quadrupolar distortions,
note that 
on $r=2m$ the equatorial circumference is independent of $\alpha$,
\begin{equation}
    C = 4 \pi m \,,
\end{equation}
while the north to south pole meridian length is 
\begin{equation}
    L = 2m e^{-2 \alpha} \int_0^\pi   e^{\alpha \cos^2 \vartheta } \dd \vartheta  \; . 
\end{equation}
A graph of the ratio of these quantities is shown in Figure \ref{fig:WeylDef}. The general effect is that negative $\alpha$ stretches the MOTS in the north-south direction while positive $\alpha$ flattens it. For $\alpha = -4$, twice the north-south distance is almost $1000$ times greater than the circumference, while for $\alpha = 4$ it is less than $\frac{1}{100}$ times smaller.

The area of the MOTS is
\begin{equation}
    A = 16 \pi m^2 e^{-2 \alpha} \,,
\end{equation}
thus negative/positive $\alpha$ respectively 
expand/shrink the total area. The Gauss curvature is
\begin{equation}
    \mathcal{K}_{\mathrm{quad}} = \frac{e^{\alpha  \left(3-\cos \! 2 \vartheta \right)}}{4m^2}\left(-2 \alpha^{2}\sin^{2} 2 \vartheta  - \alpha \left( 5   \cos \! 2 \vartheta + 3 \right)  +1\right) \; . 
\end{equation}
For small magnitude $\alpha$ this is everywhere positive but for $\alpha > \frac{1}{8}$ a negative curvature region develops at the poles. This is also the $\alpha$ at which the MOTS can no longer be embedded in $\mathbb{R}^3$: the north-south distance at the poles must increase faster than can be accommodated in flat space. For $\alpha < - \frac{1}{2}$ a negative curvature region develops at the equator. See Figure \ref{fig:WeylDef} for representative MOTSs embedded in $\mathbb{R}^3$. 

\begin{figure}
    \centering
\includegraphics{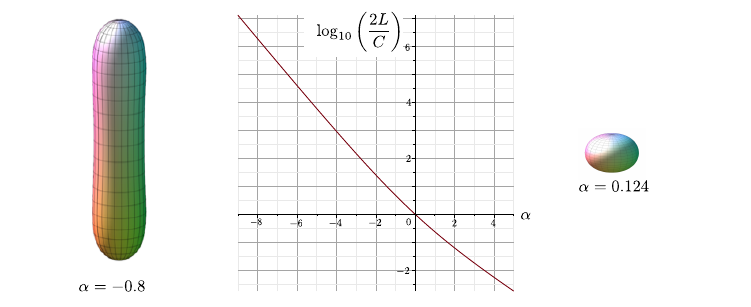}
    \caption{Deformation of the MOTS at $r=2m$ for quadrupolar Weyl--Scharzschild. The log of the ratio of twice the north-south meridian length to the equatorial circumference is plotted. Embedding diagrams for exemplary negative and positive $\alpha$ are respectively shown to the left and right (approximately to scale). For $\alpha \geq \frac{1}{8}$ the MOTS can no longer be isometrically embedded in Euclidean $\mathbb{R}^3$. }
    \label{fig:WeylDef}
\end{figure}

\subsubsection{Pseudo-spectral eigenvalue calculation}

The lowest eigenvalues in the  spectrum of  $\tilde{L}$, calculated using the methods \ref{app:spectra}, are shown in Figure \ref{fig:Ltilde}. The numerical calculation used 40 Chebyshev polynomial basis elements and the associate spectra were calculated at a resolution of $\Delta \alpha = 0.02$. 
As a check on the accuracy of this approximation, the zeros of the eigenvalue curves shown differ from those calculated with 20 basis elements by at most $0.02$ (for the outermost four\,---\,the inner four are indistinguishable). As we are mainly interested in the number of negative eigenvalues, this is sufficient for our purposes. 

Thus, within an accuracy of about $0.02$ we can identify the first few regions of $(n_o,n_-)$ stability for the Weyl--Schwarzschild Killing horizon. These are identified by colour in Figure \ref{fig:Ltilde}. 
We expect bifurcations from $r=2m$ 
at the points where eigenvalue curves cross zero.

\begin{figure}
    \centering
    \includegraphics[width=1.0\linewidth]{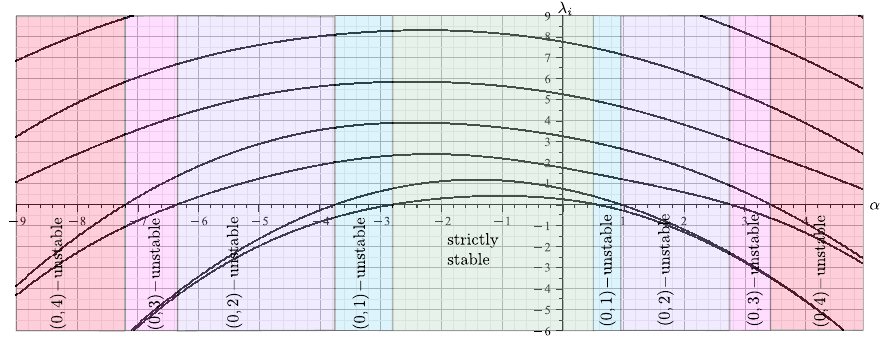}
\caption{Eigenvalues and stability ranges for the axisymmetric stability operator $\tilde{L}$ (\ref{eq:Ltilde}).
40 Chebyshev polynomial basis functions were used in the calculation, which was run at a resolution of $0.02$ in $\alpha$. 
}
\label{fig:Ltilde}
\end{figure}

\subsubsection{Vanishing eigenvalues: spectral calculation}

We could improve the accuracy of the location of the zeros by improving the resolution of the search. However, there is a more algebraic way to find them in which numerics are only needed to calculate the roots of a large polynomial.

We proceed with a spectral method, which is in spirit the same as the pseudo-spectral methods. However, in this case we expand functions with a cosine basis,
\begin{equation}
    \psi (\theta) = \sum_{q=0}^\infty \mathcal{C}_q \cos (q \theta) \, ,  
\end{equation}
and then do all integrals exactly using a computer algebra package (we used Maple \cite{maple}). This time the basis functions are orthogonal:
\begin{equation}
\mathcal{I}_{pq} = \int_0^\pi \cos (p\theta) \cos (q \theta) \dd \theta = \left\{\begin{array}{ll} 
    \pi &  \mbox{if } p = q = 0\\
    \frac{\pi}{2}  & \mbox{if } p=q \neq 1 \\
 0 & \mbox{otherwise}
    \end{array}  \right.
\end{equation}
and so the elements of the $N$-dimensional matrix approximation to $\tilde{L}$ are
\begin{align}
    \tilde{L}_{pq} =  \frac{a}{\pi} \int_0^\pi \! \! \! \!   
     \cos (p \vartheta) \Bigg( & q^2 \cos (q \vartheta) + q \Big(\cot \vartheta + 4 \alpha \sin \vartheta \cos \vartheta \Big) \sin (q \vartheta) \nonumber\\
& \ + \Big(1 + 2 \alpha (1 - 5 \cos^2 \vartheta) - 8 \alpha^2 \sin^2 \vartheta \cos^2 \vartheta \Big) \cos (q \vartheta)  \Bigg) \dd \vartheta \,, \label{eq:Ltildepq}
\end{align}
where $a=1$ for $p = 0$ and otherwise $a=2$. Doing these integrals in closed form is not quick (on a fast 2020 laptop it took a couple of days to do them all) but it does avoid numerical issues associated with integrating rapidly oscillating functions and ultimately returns each element of  $\tilde{L}_{pq}$ as a polynomial in $\alpha$. 

%The determinant of this matrix of polynomials is zero if and only if one or more eigenvalues vanishes. But, that determinant 
The determinant of this matrix is a $2N$-degree polynomial in $\alpha$ and its zeros are easily found by the root-solving algorithms built into any computer algebra package. A second advantage of this method is that once the $N \times N$ matrix of polynomials has been found, it is easy to calculate the estimates for $m \times m$ with $m < N$: this is just a subset of the larger matrix. Hence it is easy to examine the approach of the zeros to particular $\alpha$ as $N$ is increased. 
This limiting behaviour for the 
first four values of $\alpha$ corresponding to a vanishing eigenvalue on either side of $\alpha = 0$ is shown in Table \ref{tab:zeros}. They are clearly consistent with the zeros seen in Figure \ref{fig:Ltilde} but identified with considerably higher precision. 

\begin{table}
    \small
    \centering
    \begin{tabular}{|c|rrrrrrrr|}
    \hline
     N & \multicolumn{7}{c}{Values of $\alpha$ for which an eigenvalue of $\tilde{L}_{pq}$ vanishes} & \\
    \hline
      4 & -11.953951& - 8.848182& -2.699315& -2.101266 & 0.530775 & 0.941374& 1.418673& 1.711892\\
      8 & -34.113475& -28.258243 & -3.695085 &  -2.804439 & 0.557777 & 0.956615 & 1.870187 & 3.352650  \\
      12 & -7.392968 & -6.031000 & -3.744000 & -2.798910 & 0.557798 & 0.956652 &  2.762204 & 3.422071\\
      16 &  -7.096823 & -6.301235 & -3.741615 & -2.798086 & 0.557798 & 0.956652 & 2.762154 & 3.422026\\
      20 & -7.196978 & -6.349868 & -3.741561 &  -2.798084 & 0.557798 & 0.956652 &  2.762146 &3.422014 \\
      24 & -7.200478 & -6.349351 & -3.741562 & -2.798084 & 0.557798 & 0.956652 & 2.762146 & 3.422014\\
      28 & -7.200255 & -6.349264 & -3.741562 & -2.798084 & 0.557798 & 0.956652 & 2.762146 & 3.422014\\
      32 & -7.200250 & -6.349264 & -3.741562 & -2.798084 & 0.557798 & 0.956652 & 2.762146 & 3.422014\\
      36 & -7.200250 & -6.349264 & -3.741562 & -2.798084 & 0.557798 & 0.956652 & 2.762146 & 3.422014\\
  %    40 & -7.200250 & -6.349264 & -3.741562 & -2.798084 & 0.557798 & 0.956652 & 2.762146 & 3.422014\\
%      44 & -7.200250 & -6.349264 & -3.741562 & -2.798084 & 0.5577977 & 0.9566516 & 2.762146 & 3.422014\\
      \hline
    \end{tabular}
    \caption{Values of $\alpha$ for which $\det (\tilde{L}_{pq}) $ vanishes. In such cases there is a vanishing eigenvalue. The left-hand column records the number of basis functions used in the approximation. After $N=36$, these $\alpha$ are unchanging up to six decimal places (and for the smaller magnitude ones many more). }
    \label{tab:zeros}
\end{table}

With the vanishing eigenvalues found, we turn to looking for nearby MOTSs. Given that the 
Weyl--Schwarzschild solution is significantly more complicated than previous examples, this is also more complicated. 

\subsection{MOTSodesic equations}
We start with the geometry of the slices. 
Surfaces of constant $T$ have induced metric
\begin{equation}
    h_{ij} \dd x^i \dd x^j =   e^{2V-2U} (G \,\dd r^2 + r^2 \, \dd \theta^2) 
 + r^2e^{-2U} \sin^2 \! \theta \,\dd \phi^2     \; .   
\end{equation}  
The forward-in-time oriented unit normal to these surfaces is
\begin{equation}
    u_\alpha \dd x^\alpha = - \frac{e^U}{\sqrt{G}} \dd T
\end{equation}
as a one-form or 
\begin{equation}
    u^\alpha \left( \frac{\partial}{\partial x^\alpha} \right)  =  e^{-U} \sqrt{G} \frac{\partial}{\partial T} - e^{3U-2V-2u_o}\sqrt{\frac{1-p_o F}{G}}  \frac{\partial}{\partial r}
\end{equation}
as a vector. Then, the extrinsic curvature on a surface of constant $T$ is
\begin{align}
    K_{ij} \dd x^i \dd x^j &=  e_i^\alpha e_j^\beta \nabla_\alpha u_\beta \\
   &=   \frac{e^{U-2u_o}}{2\sqrt{G}} \Big(
    \left(-6 hG U_{r}  +2 hG V_{r}  + h G_{r}  -2 G h_{r}\right) \dd r^2 \nonumber \\
    & \qquad \qquad \quad +2 h\left(-4 G U_{\theta}+2 G V_{\theta}+G_{\theta}\right)  \dd r \,\dd \theta \nonumber \\
    & \qquad \qquad \quad +  2 r h \left(U_{r} r -V_{r} r -1\right) \dd \theta^2 + \left(r U_{r}  -1\right) \dd \phi^2  \Big)\nonumber \; . 
\end{align}

In the usual way we look for MOTSodesics  parameterized by arclength: $r=P(s)$ and $\theta = \Theta(s)$ so that 
\begin{equation}
    g_{ij} T^i T^j = e^{2V-2U} (G \dot{P}^2 + P^2 \dot{\Theta}^2) = 1 \; . 
\end{equation}
Here and in the rest of this subsection  it is understood that any $U$, $V$, $G$, $h$ or any derivatives of these quantities should be evaluated at $(r,\theta) = (P(s),\Theta(s))$.
The unit normal to the corresponding surface of revolution is
\begin{equation}
    N = \frac{P \dot{\Theta}}{\sqrt{G}} \frac{\partial}{\partial r} - \frac{\sqrt{G} \dot{P}}{P} \frac{\partial}{\partial \theta} \;  . 
\end{equation}

From (\ref{eq:kappa}) the MOTSodesic acceleration is
\begin{align}
\kappa   &= -   N^i a_i^{\hat{e}_\phi}  + 
    q^{ij} K_{ij} \\
     &= \left( \frac{\left(2 h G V_{r}+G_{r} h -6 h G U_{r} -2 G h_{r}\right) {e}^{U -2 u_{o}} }{2 \sqrt{G}} \right) \dot{P}^2 \nonumber \\
   & \qquad +\left(\frac{ h \left(G_{\theta} + 2G\left(V_\theta - 2U_{\theta} \right)  \right) {e}^{U -2 u_{o}} }{\sqrt{G}} \right) \dot{P} \dot{\Theta} + \left( \frac{\sqrt{G}\, \left(U_{\theta}-\cot \! \Theta \right)}{P}\right) \dot{P} \nonumber \\
  & \qquad + \left( \frac{P h \,{e}^{U -2 u_{o}} \left(P U_{r}  -P V_{r}  -1\right) }{\sqrt{G}} \right) {\dot{\Theta}}^{2} -\frac{\left(P U_{r}  -1\right) \dot{\Theta}}{\sqrt{G}}+\frac{h \,{e}^{3 U -2 u_{o} -2 V} \left(P U_{r}  -1\right)}{r \sqrt{G}} \nonumber
\end{align}
and the relevant contracted Christoffel symbols are
\begin{align}
\Gamma^{r}_{\alpha \beta} T^{\alpha} T^\beta  &=  \left(\frac{ G V_{r} -G U_{r}+ \frac{1}{2} G_{r}}{G} \right) \dot{P}^{2}
+ 2 \left( \frac{ G V_{\theta} - G U_{\theta}  + \frac{1}{2} G_{\theta}  }{G} \right)\dot{P} \dot{\Theta} \nonumber \\
& \qquad +P \left( \frac{ PU_{r}  -P V_{r}  -1 }{G} \right) {\dot{\Theta}}^{2}
\end{align}
and
\begin{align}
\Gamma^{\theta}_{\alpha \beta} T^{\alpha} T^\beta &= -\left( \frac{ G V_{\theta}-G U_{\theta}  +\frac{1}{2} G_{\theta}}{P^{2}} \right) \dot{P}^{2} -2 \left( \frac{ PU_{r}  -PV_{r}  -1 }{P}\right)  \dot{P} \dot{\Theta}  +\left(V_{\theta}-U_{\theta}\right) {\dot{\Theta}}^{2} \; . 
\end{align}
Substituting in $U_{\mathrm{quad}}$ and $V_{\mathrm{quad}}$, we solve these numerically  via the methods of \ref{app:MOTSo}.

\subsection{MOTSodesics near the bifurcation points}

We then look for MOTSodesics near these bifurcation points using the methods of \ref{app:MOTSo}. 

\begin{figure}
    \centering
    \includegraphics[scale=0.88]{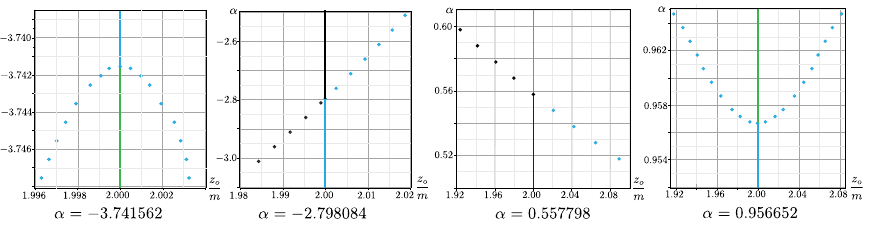}
    \caption{The first two bifurcations from the Weyl--Schwarzschild horizon for $\alpha>0$ and $\alpha< 0$. The first one on either side of zero is a transcritical bifurcation while the second is a pitchfork. $z_o$ is the point at which the MOTS intersects the $z$ axis. The color scheme is the same as in Figure \ref{fig:BifRN}: black, blue and green denote MOTSs which have respectively zero, one and two negative eigenvalues of the stability operator. }
    \label{fig:Weyl4}
\end{figure}

\subsubsection{The $(1,0)$-unstable bifurcations}

We begin with the bifurcation at $\alpha \approx 0.557798$ and for this choose $p_o = 15m$. This keeps the surfaces of constant $T$ spacelike in the range $1.4 m < r < 2.6m$ for that value of $\alpha$. 

The nearby MOTSs are shown in Figure \ref{fig:WeylTrans} and maximum $r$ intersections of those MOTSs with the $z$-axis are plotted in Figure \ref{fig:Weyl4}. As $\alpha$ increases, a
$(0,1)$-unstable MOTS approaches from the outside, becomes $(1,0)$-unstable when it meets $r=2m$ at $\alpha \approx 0.557798$ and then proceeds to smaller $r$ as a stable MOTS. Correspondingly, $r=2m$ starts out stable and transitions to become $(0,1)$-unstable. This is a transcritical bifurcation. 

While the MOTSs look spherical in the diagram, these are just coordinate representations. In reality they are very flattened and are not isometrically embeddable in Euclidean $\mathbb{R}^3$ (Figure \ref{fig:WeylDef}).

Figure \ref{fig:Weyl4} also shows the transcritical bifurcation at $\alpha \approx -2.798084$. In this case $p_o = 40$ keeps the constant $T$ slice spacelike over a wide range of $r$: at least $ m< r <3m $. However, over that range the metric quantities can vary by over 40 orders of magnitude! This causes numerical issues and so in practice we can only identify MOTSs much closer to $r=2m$. For $ 1.9m< r <2.1m $ there are about three orders of magnitude. 

That said, as $\alpha$ decreases a $(0,1)$-unstable MOTS again approaches from the outside, crosses $r=2m$ and becomes stable. The $r=2m$ MOTS stability evolves in the opposite way. The MOTS diagrams are very similar to Figure \ref{fig:WeylTrans} but omitted as they stay so close to $r=2m$ that they cannot be seen well.

\begin{figure}
    \centering
    \includegraphics{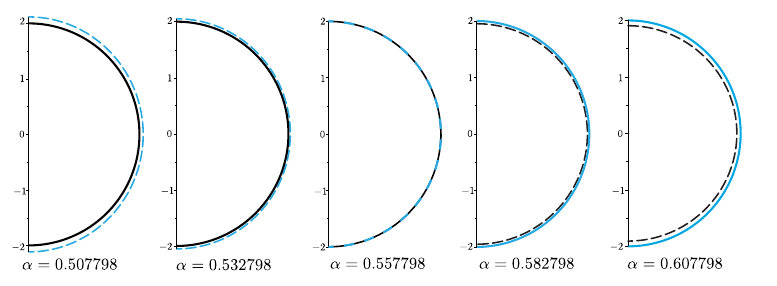}
    \caption{The transcritical bifurcation at $\alpha \approx 0.557798$. As $\alpha$ increases, a
    $(0,1)$-unstable MOTS approaches from outside, swaps stability with $r=2m$ and then continues to smaller $r$ as a stable MOTS. In this figure $r=2m$ is the solid line while the dashed line is the other MOTS. Color indicates stability in as in earlier figures. }
    \label{fig:WeylTrans}
\end{figure}

\subsubsection{The $(1,1)$-unstable bifurcations}

To study the bifurcation at $\alpha \approx 0.956652$ we choose $p_o=25m$, which keeps the surfaces of constant $T$ spacelike in the range $1.85m < r < 2.6m$ for that value of $\alpha$. 

The nearby MOTSs are shown in Figure \ref{fig:WeylPitch} and maximum $r$ intersections with the $z$-axis are plotted in Figure \ref{fig:Weyl4}.
For $\alpha < 0.956651$, $r=2m$ is the only MOTS in its neighbourhood. It is $(0,1)$-unstable. At $\alpha \approx 0.956652$ it becomes 
$(1,1)$-unstable and a bifurcation occurs. For $\alpha > 0.956652$ it is $(0,2)$-unstable and two $(0,1)$-unstable MOTSs diverge from it. This is a pitchfork bifurcation. 

As in the previous case, while these MOTSs look nearly spherical, geometrically they are very flattened and are not isometrically embeddable in Euclidean $\mathbb{R}^3$ (Figure \ref{fig:WeylDef}).

Figure \ref{fig:Weyl4} also shows a pitchfork bifurcation at $\alpha \approx -3.741562$. In this case $p_o =40$ keeps the surfaces of constant $T$ spacelike in the range $m < r < 3m$, but again this is over many orders of magnitude for the metric components. Restricting to $1.9 m < r < 2.1 m$ keeps that to a range of less than four orders of magnitude. 

This time, as $\alpha$ decreases the $(0,1)$-unstable MOTS at $r=2m$ is alone until $\alpha \approx -3.741562$, at which point another pitchfork bifurcation occurs. The MOTS diagrams are very similar to Figure \ref{fig:WeylPitch} but again omitted as they stay so close to $r=2m$ that they cannot be seen well.

\begin{figure}
    \centering
    \includegraphics{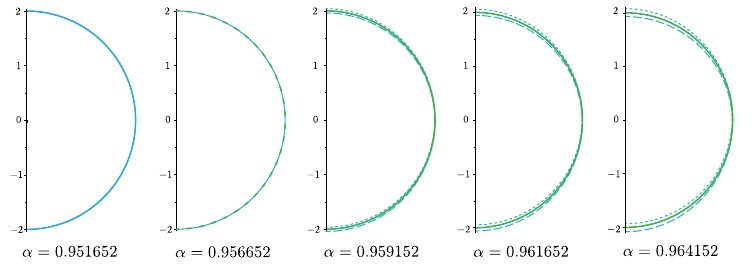}
    \caption{The pitchfork bifurcation at $\alpha \approx 0.956652$. $r=2m$ is initially $(0,1)$-unstable, becomes $(1,1)$-unstable at the bifurcation point and afterwards is $(0,2)$-unstable. The MOTSs that bifurcate off are $(0,1)$-unstable. In this figure $r=2m$ is the solid line while the other MOTSs are dashed  lines. Color indicates stability in as in earlier figures. }
    \label{fig:WeylPitch}
\end{figure}

\section{Discussion}

In this paper we examined how a MOTS changes under general geometric deformations of the time slice that contains it. Possible deformations include time evolutions, arbitrary changes of parameters in exact solutions, and changes to the time slice within the full spacetime. In all cases,
we have seen that the bifurcation theory for solutions of differential equations also applies to MOTSs. This brings further order to the complicated behaviour of ``exotic'' MOTSs that have been catalogued in both time and parameter space evolutions over the last few years. 

%Restricting attention to $(3+1)$-gravity, this theory also guarantees that the existence of such MOTS is not restricted to particular foliations. 
As long as a MOTS has an invertible stability operator, it will continue to exist after a sufficiently small deformation. If there are vanishing eigenvalues there may be bifurcations, but even in this case the possibilities are governed by the theory. Thus, though we have worked exclusively in axisymmetry both in this paper and in the head-on merger simulations of \cite{Pook-Kolb:2021gsh,Pook-Kolb:2021jpd}, the same phenomena should be observable in more general foliations and spacetimes. 

In particular, starting from our existing examples, any MOTS with an invertible stability operator will continue to exist in nearby spacetimes, whether or not they are axisymmetric.  ``Exotic'' MOTS are not artifacts of axisymmetry. On the other hand, for an axisymmetric MOTS that undergoes a bifurcation, a perturbation away from axisymmetric could change the type of bifurcation that occurs. This scenario is described by the theory of ``imperfect bifurcations" \cite{GSimperfect} and will be considered elsewhere.
There may be extreme situations where there are no observable MOTS (in analogy with the lack of trapped surfaces in \cite{iyerwald}) but they will be the exception rather than the rule.

Bifurcation theory does not depend on Einstein's equations and so the same types of bifurcations should appear in any theory of gravity with MOTSs or indeed in any theory that has generalizations of MOTSs. As long as they are surfaces which are solutions to a set of differential equations, the possible evolutions will be governed by bifurcation theory. 
%
%Above we noted that bifurcation theory is very general and independent of Einstein's equations: it equally well applies to other theories. 
However, it is likely that the particular bifurcations that can occur and how they occur will be restricted during time evolutions by those equations. With respect to time evolution, the Einstein equations (together with a suitable energy condition) may place restrictions on the direction from which an unstable MOTS can approach a stable one during a transcritical bifurcation, or whether a pitchfork is supercritical or subcritical. 
We leave discussion of this  for later work;  some results in this direction can  be found in \cite{BusseyCoxKunduri}.

In the example sections we have seen saddle-node, transcritical and pitchfork bifurcations associated with the vanishing of both principal and non-principal eigenvalues. Even for the exact solutions considered here, one generally has to use numerical techniques to both find MOTSs and classify their stability. However, understanding that the possible bifurcations are constrained significantly simplifies numerical searches for MOTSs. 

The examples ranged from fully spherically symmetric RNdS to bifurcations from spherical symmetry in RN to only axiysymmetry in Weyl--Schwarzschild. The RNdS and RN examples are useful for building intuition and understanding how the theory applies, but the Weyl--Schwarzschild example is more physically relevant. Ultimately we are interested in a full understanding of how the geometry evolves during black hole mergers and Weyl--Schwarzschild, though still static, allows us to examine how MOTSs evolve under the application of external deforming fields. In particular, we saw that sufficiently strong deformations can cause a MOTS to lose stability. Without stability, a MOTS also loses its barrier property and so this is an alternate way in which an apparent horizon could dissolve into an external gravitational field (the other way is just to disappear in a pair annihilation, i.e., a saddle-node bifurcation). During a merger, the loss of the original apparent horizons is a (final?) step after
which the original black hole identities are lost inside the new, merged black hole. 

While this new understanding further increases the utility of the MOTS stability operator in understanding black holes and their internal structure, parts of the theory still require work. In this paper we have essentially identified bifurcations ``by hand'': when an eigenvalue vanishes we have manually searched for nearby MOTS and so identified the type of bifurcation. However, they can also be classified in a more rigorous way by calculating higher derivatives (deformations) of the stability operator. While straightforward in principle, the calculations tend to be rather complicated.
%We have made partial progress on this but chosen to omit it from this paper. 
This approach is discussed systematically in \cite{BusseyCoxKunduri}.

The theory presented here does not cover all phenomena that have been observed in simulations. For example:
\begin{enumerate}
    \item 
It is not uncommon for MOTSs in simulations to develop cusps. During a merger, when the two original apparent horizons first touch, a third (unstable) MOTS ``hugs'' tightly around them with cusps at the point of contact (see, for example, Figure 3 in \cite{Pook-Kolb:2019iao}). Subsequently, the two original MOTSs pass through each other and the third develops self-intersections. During this process, the third MOTS is  effectively a combined deformation of the two other ones.
Some interesting results on how physical properties of MOTS evolve through the formation of cusps can be found in \cite{kastha2026cuspformationmergingblack} but connections to bifurcation theory have not yet been explored.

\item Toroidal MOTS have been observed to form during departures from time symmetry, where a minimal surface splits into two oppositely oriented MOTSs plus a toroidal MOTS between them \cite{Pook-Kolb:2021jpd,Sievers:2023zng}. The toroidal MOTS cannot be viewed as a smooth deformation of the original minimal surface, as it has different topology.
\end{enumerate}
We expect that the theory can be extended to cover both of these cases but that is left for future work.

\section*{Acknowledgements}
IB and CMO were supported by NSERC grant 2018-0473. GC acknowledges the support of NSERC grants RGPIN-2017-04259 and RGPIN-2025-06186. 
%IB would like to thank Robie Hennigar, Badri Krishnan and Ryan Unger for useful discussions. 
The authors would like to thank Liam Bussey, Robie Hennigar, Badri Krishnan, Hari Kunduri and Ryan Unger for useful discussions. 

\vspace{1cm}

\appendix
%\addtocontents{toc}{\fixappendix}

\section{The generalized stability operator}
\label{sec:stabop}

The stability operator defined in Section \ref{sec:slice} measures changes in the expansion $\theta_\ell$ when a MOTS $\cS$ is deformed within a spacelike slice. That is, the deformation is proportional to $N$, the normal to $\cS$ in the slice. Here we develop a more general stability operator, allowing for deformations proportional to any vector field $X$ that is normal to $\cS$ in the full spacetime.

\subsection{Definition and basic properties}
\label{sec:defprop}

Let $X$ be a vector field in $M$ that is normal to $\cS$, and consider a (local) congruence of curves $\Gamma$
that are tangent to $X$ at the point where they intersect $\mathcal{S}$. Then, given a function $\psi$, choose a parameter $\zeta$ on those curves so that
\begin{enumerate}
    \item $\zeta =0$ on $\mathcal{S}$ and 
    \item $\frac{\partial}{\partial \zeta} = \psi X$ on $\mathcal{S}$.
\end{enumerate}
If we Lie-drag $\mathcal{S}$ along the $\Gamma$ to define a family of surfaces $\mathcal{S}_\zeta$ and calculate the geometric 
properties of those surfaces, the deformation of $\theta_\ell$ by an amount $\psi$
in the direction $X$ is
\begin{equation}
    \delta_{\psi X} \!  \theta_\ell  = \frac{\mathrm{d} \theta_\ell}{\mathrm{d} \zeta} \; . 
\end{equation}
This derivative turns out to be independent of the details of $\Gamma$ away from $\mathcal{S}$. Further, in practice it is not necessary to even construct the congruence: we can calculate this derivative of $\theta_\ell$ entirely from surface quantities. Over the decades the expression for this quantity has been derived many times
(for example \cite{Newman_1987,Hayward:1994,  Andersson:2005, cao2011}). Here we will use the form found in \cite{Booth:2007}. For a general future-outward null vector $\ell$ we have\footnote{This is actually a slight generalization of the \cite{Booth:2007} calculation. In that paper 
the authors fixed $l^+ = \ell$. Here we have generalized to 
allow for the expression to be calculated with an arbitrary
reference dyad. 
%Setting $ \ell = l^+$, it is straightforward to recover the original form.
}
\begin{equation}
    \delta_{\psi X} \theta_\ell = e^\mu \bigg( 
    -  \mathcal{D}^2_{l^+} \left( {B} \psi \right) 
     + \Big( {B}  \left[ \mathcal{K} - %G_{_{+ -}} 
     G_{l^+ l^-} %G(l^+,l^+) %G_{ab} l_+^a l_+^b 
     \right]  - {A} \left[ \|\sigma_{+}\|^2 +G_{l^+ l^+} \right]\Big) \,   \psi \bigg) \label{eq:var} 
\end{equation}
where, for a cross-scaled null normal reference dyad $(l^+, l^-)$,
\begin{equation}
    e^{\mu} = - \ell \cdot l^- 
    \; ,  \quad {A} = - X \cdot l^-
    \quad \mbox{and} \quad {B} = X \cdot l^+ \, ,
\end{equation}
so that $X = {A} l^+ - {B} l^-$; see Figure \ref{AB}. As in Section \ref{sec:slice}, $\mathcal{K}$ is the Gauss curvature of $\mathcal{S}$, 
$G_{l^+ l^\pm} = G_{ab} l_+^a l_\pm^b$ for the Einstein tensor $G_{ab}$, $\|\sigma_{l^+}\|^2 = \sigma_{AB}^{l^+} \sigma^{AB}_{l^+}$
for the outward null shear
\begin{equation} 
\sigma^{l_+}_{AB} = k^{l_+}_{AB} - \frac{1}{2}  \theta_{l^+} q_{AB} = 
e_A^a e_B^b \nabla_A l^+_B - \frac{1}{2} \theta_{l^+} q_{AB}
\end{equation}
and
\begin{align}
\mathcal{D}_{l^+}^2 F &  = (\mathcal{D}^A - \omega_{l^+}^A)(\mathcal{D}_A - {\omega}_A^{l^+}) F \, , 
% \\
% & = \mathcal{D}^2 F - 2 {\omega}_{l^+}^A \mathcal{D}_A F + (- \mathcal{D}_A {\omega}_{l^+}^A + {\omega}^{l^+}_A {\omega}_{l^+}^A) F \, ,
% \nonumber
\end{align} 
where $\mathcal{D}$ is the covariant derivative on $\mathcal{S}$ and ${\omega}^{l_+}_A = - e_A^a l^-_b \nabla_a l_+^b$
is the connection on the normal bundle to $\mathcal{S}$, expressed with respect to the reference dyad $(l^+, l^-)$.

Under a rescaling 
of the dyad $(l^+, l^-)$ a straightforward calculation shows that 
\begin{equation}
    {\omega}^{e^\beta l_+}_A = \omega^{l_+}_A + \mathcal{D}_A \beta \; . 
    \label{eq:tomgauge}
\end{equation}
% In the preceding and following 
% equations we
% move the $l^+$ (or $l_+$)  label up and down as is notationally convenient. 
Using this, it can be shown that the right-hand side of (\ref{eq:var}) does not depend on the choice of dyad; cf. the proof of Lemma 1 in  \cite{Jaramillo_2015}. We can thus define a stability operator without reference to the normal dyad.

\begin{defn}
\label{stabop}
Let $\mathcal{S}$ be a MOTS with 
an outward null vector field $\ell$ and
$X$ a vector field normal 
to $\mathcal{S}$ with $X \cdot \ell > 0$.
The \textbf{stability operator in 
the direction $X$} acts on a function $\psi$ as
\begin{equation}
    L_\mathcal{S}(\ell, X) \psi := \delta_{\psi X} \theta_\ell \; .  
\end{equation}
\end{defn}

\begin{figure}
\begin{center}
    \includegraphics[]{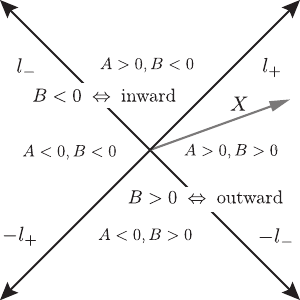}
    \caption{For $A>0$ and $B>0$, $X$ is spacelike and outward oriented. }
    \label{AB}
\end{center}
\end{figure}

We have required $X \cdot \ell = e^\mu B > 0$ to restrict our attention to outward-pointing $X$, as shown in 
Figure \ref{AB}, thus $\psi > 0$ will generate an outward deformation and $\psi < 0$ will generate an inward one. Under this assumption, each $L_\mathcal{S}  (\ell,  X)$ is a linear %second-order 
elliptic differential operator. To see this we need to find the leading (i.e., second-order) part of the stability operator. Expanding the derivative term in (\ref{eq:var}), we find
% \begin{equation}
%     L_\mathcal{S}(\ell, X) = -e^\mu B \mathcal{D}^2 \psi
%     \ +\  \textrm{terms with $\mathcal D \psi$ and $\psi$} \; .
% \end{equation}
\begin{align}
    L_{\mathcal{S}} (\ell, X) \psi  &= 
    -e^\mu B \mathcal{D}^2 \psi - 2e^\mu (\mathcal{D}^C_{l^+} B) 
      \mathcal{D}_C \psi \label{eq:def2}\\
    & \qquad + e^\mu \left(-  \mathcal{D}^2_{l^+} B + B [\mathcal{K}    -G_{l_+ l-}]  - 
   A [ \| \sigma_{l_+} \|^2 + G_{l_+ l_+} ]\right)  \psi \; . \nonumber
\end{align}
The leading part $-e^\mu B \mathcal{D}^2 \psi$ is elliptic because $B>0$, hence $L_\mathcal{S}(\ell, X)$ is elliptic.

As a linear elliptic operator defined on a compact manifold, $L_{\mathcal{S}} (\ell, X)$ has a discrete spectrum. In general it is not self-adjoint, and hence can have complex eigenvalues; see \cite{BCK-Kerr} for an explicit example. Even when $L_\mathcal{S} (\ell, X)$ is not self-adjoint, however, the \emph{principal eigenvalue} $\lambda_o$ (the eigenvalue with the smallest real part) is real and simple, and the corresponding eigenfunction $\psi_0$ has no zeros, so it can be chosen to be positive everywhere \cite{Andersson:2005}. We then say that $\cS$ is \emph{strictly stable (in the direction $X$)} if $\lambda_o > 0$, \emph{marginally stable} if $\lambda_o = 0$, and \emph{unstable} if $\lambda_o< 0$.

In general this classification depends on the direction $X$: examples of how the same MOTS can be strictly stable, marginally stable or unstable depending on the choice of $X$ are shown in Figure \ref{fig:zoom}. %In the next section we show that this does not depend on the scaling of $\ell$ or $X$.

\begin{figure}
    \centering
    \includegraphics[width=0.7\linewidth]{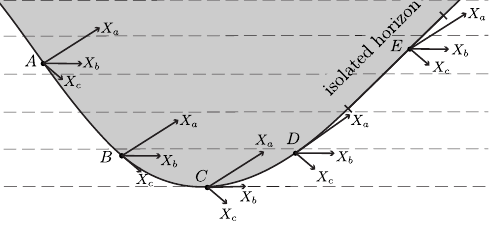}
    \caption{Magnification of part of Figure \ref{fig:mtt}, demonstrating the $X$ dependence of the stability operator $L_{\mathcal{S}} (\ell, X)$. 
    %In this spherically symmetric case it is sufficient to consider similarly symmetric deformations. 
    All $X$ shown in the figure are spacelike and outward-oriented. The individual MOTSs are unstable with respect to deformations into the gray  outer trapped region, marginally stable when tangent to the MOTT and stable with respect to deformations into the white  outer untrapped region.  }
    \label{fig:zoom}
\end{figure}

% If we are only interested in the stability of axisymmetric surfaces then there is an extra simplification. If $L_\mathcal{S}(\ell, X)$ is axisymmetric then its principal eigenfunction is also axisymmetric\cite{Andersson:2007fh}. Hence, in determining whether or not a surface is stable we need only consider axisymmetric eigenfunctions. 

\subsection{Scaling invariance and self-adjointness}
\label{sec:rescaling}

We now show that the classification of $\cS$ as strictly stable, marginally stable or unstable does not depend on the scaling of $\ell$ and $X$.

\begin{prop}
    \label{prop:scaling-stab}
     Let $\cS$ be a MOTS with $\ell$ and $X$ as above. For any functions $\mu$ and $\gamma$, the principal eigenvalues of $L_{\mathcal{S}} (e^\mu \ell, e^\gamma X)$ and $L_{\mathcal{S}} (\ell, X)$ have the same sign.
\end{prop}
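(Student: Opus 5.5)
First I will reduce the two rescalings to simple operator identities using the general formula (\ref{eq:var}). For $\ell$: rescaling $\ell \to e^{\mu'}\ell$ multiplies $\theta_\ell$ pointwise by $e^{\mu'}$. On a MOTS $\theta_\ell = 0$, so the derivative of $e^{\mu'}$ never appears and
\begin{equation}
L_\cS(e^{\mu'}\ell, X) = e^{\mu'} L_\cS(\ell,X).
\end{equation}
For $X$: a deformation by $\psi$ in direction $e^\gamma X$ is the same as a deformation by $e^\gamma\psi$ in direction $X$, because $\delta_{\psi X}\theta_\ell$ depends only on the vector field $\psi X$ along $\cS$. Hence
\begin{equation}
L_\cS(\ell, e^\gamma X)\psi = L_\cS(\ell,X)(e^\gamma\psi).
\end{equation}
Consistency with (\ref{eq:var}) is immediate, since $A$ and $B$ both scale by $e^\gamma$ and they enter only through the combinations $B\psi$ and $A\psi$. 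Combining the two, it suffices to compare the principal eigenvalues of $L$ and of $\hat L := f\,L\,g$, where $L = L_\cS(\ell,X)$ and $f = e^{\mu}$, $g = e^{\gamma}$ are smooth positive functions.

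Next I will use the principal eigenfunction. Let $\psi_o > 0$ be the principal eigenfunction of $L$, with $L\psi_o = \lambda_o\psi_o$, and let $\hat\psi_o > 0$ be that of $\hat L$, with eigenvalue $\hat\lambda_o$. Both are real, simple and positive by the Krein--Rutman type result quoted from \cite{Andersson:2005}. Set $\phi := g\hat\psi_o > 0$. Then
\begin{equation}
L\phi = f^{-1}\hat L\hat\psi_o = \hat\lambda_o f^{-1} g^{-1}\phi,
\end{equation}
so $\phi$ is a positive solution of $(L - \hat\lambda_o w)\phi = 0$ with positive weight $w = (fg)^{-1}$.

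The key step is a characterization of the sign of $\lambda_o$ through positive supersolutions and subsolutions. I will use the standard fact for such operators (as in \cite{Andersson:2005}, via the adjoint principal eigenfunction): if $L^*$ has a positive principal eigenfunction $\psi_o^*$, then $L^*\psi_o^* = \lambda_o\psi_o^*$ and, for any $\phi$,
\begin{equation}
\lambda_o \int \psi_o^*\phi = \int \psi_o^* L\phi.
\end{equation}
Taking the $\phi > 0$ above gives
\begin{equation}
\lambda_o \int \psi_o^*\phi = \hat\lambda_o \int \psi_o^*\, w\, \phi.
\end{equation}
Both integrals are strictly positive, so $\lambda_o$ and $\hat\lambda_o$ have the same sign, including the case where both vanish. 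This yields the proposition.

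The main obstacle is justifying the existence of the positive adjoint principal eigenfunction for this non-self-adjoint operator. I would argue that $L^*$ is again a second-order elliptic operator with the same principal eigenvalue, since the spectra of $L$ and $L^*$ are conjugate and $\lambda_o$ is real. The Krein--Rutman argument applied to $L^*$ (or to its resolvent) then gives a positive eigenfunction. Alternatively, I could avoid the adjoint and use the max-min characterization $\lambda_o = \sup_{\phi>0}\inf_\cS (L\phi/\phi)$, applying it with weights to conclude the sign agreement. Both routes are standard, and the care needed is only in citing them correctly.
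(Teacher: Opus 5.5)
Your proof is correct, but after the shared first step it takes a different route from the paper's. The shared step is the conjugation identity $L_\cS(e^\mu\ell,e^\gamma X)\psi = e^\mu L_\cS(\ell,X)(e^\gamma\psi)$, which you derive from (\ref{eq:var}) exactly as the paper does.

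From there the paper pushes the original principal eigenfunction forward. It notes that $e^{-\gamma}\psi_o>0$ satisfies $L_\cS(e^\mu\ell,e^\gamma X)(e^{-\gamma}\psi_o)=\lambda_o e^\mu\psi_o$, and then argues case by case:
\begin{enumerate}
\item For $\lambda_o>0$ it cites Lemma 2 of \cite{Andersson:2005}: a positive supersolution forces a positive principal eigenvalue.
\item For $\lambda_o=0$ it uses the fact that a positive kernel element forces the principal eigenvalue to vanish.
\item It then swaps the roles of the two operators to get the converse implications, and deduces the negative case by elimination.
\end{enumerate}

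You instead pull the rescaled principal eigenfunction back and pair it against the positive principal eigenfunction $\psi_o^*$ of the adjoint. This is essentially the mechanism behind Lemma 2, inlined. What it gains you is a single identity, $\lambda_o\int\psi_o^*\phi=\hat\lambda_o\int\psi_o^*w\phi$, that treats the positive, zero and negative cases at once with no role reversal. It is also quantitative: when $\lambda_o\neq0$ it gives $\min e^{\mu+\gamma}\le\hat\lambda_o/\lambda_o\le\max e^{\mu+\gamma}$.

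The cost is that you must establish the positive adjoint principal eigenfunction with eigenvalue $\lambda_o$ yourself, where the paper outsources this to the cited lemma. Your justification is sound: the formal adjoint is again second-order elliptic, so the Krein--Rutman argument applies to it. Its spectrum is the complex conjugate of that of $L$, so its real principal eigenvalue, having minimal real part, must equal $\lambda_o$.
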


\begin{proof}
Let $\lambda_o$ and $\tilde\lambda_o$ denote the principal eigenvalues of $L_{\mathcal{S}} (\ell, X)$ and $L_{\mathcal{S}} (e^\mu \ell, e^\gamma X)$, respectively, and let $\psi_o > 0$ be the eigenfunction for $\lambda_o$. Using (\ref{eq:var}) we calculate
\begin{equation}
    L_{\mathcal{S}} (e^\mu \ell, e^\gamma X)(e^{-\gamma}\psi_o) = e^\mu L_{\mathcal{S}} (\ell, X)\psi_o = \lambda_o e^\mu \psi_o.
\end{equation}
Now consider the function $\psi = e^{-\gamma}\psi_o > 0$. If $\lambda_o > 0$, then $L_{\mathcal{S}} (e^\mu \ell, e^\gamma X) \psi > 0$ and it follows from Lemma 2 in \cite{Andersson:2005} that $\tilde\lambda_o > 0$. If $\lambda_o = 0$, then $L_{\mathcal{S}} (e^\mu \ell, e^\gamma X) \psi = 0$ and we conclude that $\tilde\lambda_o = 0$.
Reversing the roles of the two operators, the same argument shows that $\lambda_o$ is positive (zero) if $\tilde\lambda_o$ is positive (zero). We then conclude that $\lambda_o < 0$ if and only if $\tilde\lambda_o < 0$, completing the proof.
\end{proof}

When $L_{\mathcal{S}} (\ell, X)$ is self-adjoint, and hence has real eigenvalues, we get more. In particular, the number of negative and zero eigenvalues is invariant under rescalings of $\ell$ and $X$. To this end, first recall that 
% two operators $L$ and $\tilde{L}$ are \emph{similar} if
% % \begin{prop}
% % \label{prop:simcong}
% % Let $L$ and $\tilde{L}$ be linear operators. Then they are said to be
%     % \begin{enumerate}
%     %     \item \textbf{similar} and we write $L \sim \tilde{L}$, if for some function $\gamma$:
% \begin{equation}
%     {L} = e^\gamma \tilde{L} e^{-\gamma} \; \;  \mbox{so that for any $\psi$} \, ,  \;   {L} \psi = 
%     e^\gamma \tilde{L} ( e^{-\gamma} \psi) \; . 
% \end{equation}
% Similar operators clearly share a common eigenvalue spectrum (with eigenfunctions rescaled by $e^\gamma$). Then we have the following. 
%
if ${L} = e^\gamma \tilde{L} e^{-\gamma}$ for some function $\gamma$, then $L$ and $\tilde L$ are similar\footnote{This is a special case of similarity, where the transformation is multiplication by the smooth, positive function $e^\gamma$.} and hence have the same eigenvalues.

%         \item \textbf{congruent} and we write $L \simeq \tilde{L}$, if for some 
% function $\gamma$:
% \begin{equation}
%     {L} = e^\gamma \tilde{L} e^{\gamma} \; \;  \mbox{so that for any $\psi$} \, ,  \;   {L} \psi = 
%     e^\gamma \tilde{L} ( e^{\gamma} \psi) \; . 
% \end{equation}
% Congruent self-adjoint operators have the same number of negative eigenvalues and the same number of vanishing eigenvalues. 
%     \end{enumerate}
% \end{prop}
% \begin{proof} 
% This is straightforward: (i) is easy and  (ii) just quotes a standard result: 
%     \begin{enumerate}
%     \item It is immediate that if $\psi_n$ is an eigenfunction of $L$ then 
% $e^\gamma \psi_n$ is an eigenfunction of $\tilde{L}$ and they share 
% a common eigenvalue $\lambda_n$. Hence the operators have the same eigenvalue 
% spectrum. 
%   \item  It is immediate that if $\psi_0$ is an eigenfunction of $L$ with eigenvalue
% $0$ then $e^{-\gamma} \psi_0$ is an eigenfunction of $\tilde{L}$ with 
% eigenvalue $0$. Hence they have the same number of vanishing eigenvalues. More generally the eigenvalues and functions don't have such a simple relationship, but a 
% generalization of Sylvester's Law of Inertia (Lemma 2.3 in \cite{inertia}), tells us that congruent self-adjoint operators have the same number of negative eigenvalues in their spectra. 
%     \end{enumerate}
% \end{proof}

% We have seen that the eigenvalue spectrum depends on the $X$. However the next result shows that only the direction of $X$ matters. The scaling doesn't affect the spectrum. 
%Then it is straightforward to show the following:
\begin{prop}
    \label{prop:scaling}
     Let $L_{\mathcal{S}} (\ell, X)$ be a self-adjoint MOTS stability operator with $n_0$ vanishing  and $n_-$ negative eigenvalues. For any functions $\mu$ and $\gamma$, the operator $L_{\mathcal{S}} (e^\mu \ell, e^\gamma X)$ has  $n_0$ vanishing and $n_-$ negative eigenvalues.
\end{prop}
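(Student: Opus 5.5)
The plan is to reduce the rescaled operator to a congruence transformation of the original one and then invoke a Sylvester-type law of inertia.

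\textbf{Step 1: the conjugation identity.} Directly from (\ref{eq:var}), rescaling $\ell \to e^\mu \ell$ multiplies the whole expression by $e^\mu$. Rescaling $X \to e^\gamma X$ multiplies $A$ and $B$ by $e^\gamma$, which is the same as deforming by $e^\gamma\psi$ in direction $X$. Hence
\begin{equation*}
L_{\mathcal{S}}(e^\mu \ell, e^\gamma X)\psi = e^\mu L_{\mathcal{S}}(\ell, X)\big(e^\gamma \psi\big),
\end{equation*}
as already used in the proof of Proposition \ref{prop:scaling-stab}. Writing $L = L_{\mathcal{S}}(\ell,X)$ and $\tilde L = L_{\mathcal{S}}(e^\mu\ell, e^\gamma X)$, this says $\tilde L = e^\mu L e^\gamma$.

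\textbf{Step 2: reduce to a congruence.} Conjugate by a positive multiplier. Set $\kappa = (\mu+\gamma)/2$ and $\nu = (\mu-\gamma)/2$. Then
\begin{equation*}
e^{-\nu}\tilde L e^{\nu} = e^{\mu-\nu} L e^{\gamma+\nu} = e^{\kappa} L e^{\kappa}.
\end{equation*}
The left-hand side is similar to $\tilde L$, so it has the same eigenvalues with the same multiplicities. The right-hand side $M := e^\kappa L e^\kappa$ is a congruence of the self-adjoint operator $L$ by the positive multiplication operator $e^\kappa$. It is therefore self-adjoint, with respect to the same $L^2(\mathcal{S})$ inner product in which $L$ is self-adjoint. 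In particular $\tilde L$ has real spectrum, so counting its negative and zero eigenvalues makes sense.

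\textbf{Step 3: zero eigenvalues.} These are immediate: $M\psi=0$ if and only if $L(e^\kappa\psi)=0$. Multiplication by $e^\kappa$ is therefore an isomorphism $\ker M \cong \ker L$, which gives $n_0$ for $M$. Since $M$ is self-adjoint, geometric and algebraic multiplicities agree.

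\textbf{Step 4: negative eigenvalues (the main step).} I would use the min--max characterization. For a self-adjoint elliptic operator bounded below with discrete spectrum, $n_-$ is the maximal dimension of a subspace of $H^1(\mathcal{S})$ on which the quadratic form $Q_L(\psi)=\langle \psi, L\psi\rangle$ is negative definite. Now
\begin{equation*}
Q_M(\psi) = \langle \psi, e^\kappa L e^\kappa \psi\rangle = Q_L(e^\kappa\psi),
\end{equation*}
and $\psi \mapsto e^\kappa\psi$ is a linear bijection of $H^1$ preserving dimensions of subspaces. Hence the maximal negative-definite dimensions coincide, and the same holds for $n_- + n_0$ via nonpositive subspaces. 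This is exactly Sylvester's law of inertia in this setting.

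The point needing care is justifying the variational characterization for the unbounded operator. I would handle it through the form domain $H^1$ and the spectral decomposition of $L$. One direction: if $V$ is negative definite for $Q_L$ with $\dim V > n_-$, then $V$ meets the closed span of the eigenfunctions with nonnegative eigenvalue nontrivially, a contradiction. The other direction: the span of the negative eigenfunctions realizes $n_-$. The result follows.
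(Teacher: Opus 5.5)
Your proof is correct and is essentially the paper's argument. The paper uses the same conjugation $e^{(\gamma-\mu)/2}\tilde L\, e^{(\mu-\gamma)/2} = e^{(\mu+\gamma)/2} L\, e^{(\mu+\gamma)/2}$ to make $\tilde L$ similar to a congruence of $L$, and then cites Sylvester's law of inertia (Lemma 2.3 of the cited reference). The only difference is that you prove that inertia law yourself, via the min--max characterization on $H^1$, instead of citing it.
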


In general the rescaled operator $L_{\mathcal{S}} (e^\mu \ell, e^\gamma X)$ will not be self-adjoint, but it is similar to a self-adjoint operator, namely (\ref{congop}), therefore it has real eigenvalues and the numbers $n_0$ and $n_-$ are well defined.

\begin{proof}
As in the proof of Proposition \ref{prop:scaling-stab}, we use (\ref{eq:var}) to calculate
\begin{equation}
    e^{(\gamma-\mu)/2} L_{\mathcal{S}} (e^\mu \ell, e^\gamma X) \big(e^{(\mu - \gamma)/2} \psi\big) = e^{(\mu+\gamma)/2} L_{\mathcal{S}} (\ell, X) \big(e^{(\mu+\gamma)/2}\psi \big).
\end{equation}
This shows that $L_{\mathcal{S}} (e^\mu \ell, e^\gamma X)$ is similar
%\footnote{Recall that operators $A$ and $B$ are \emph{similar} if $B = S^{-1}AS$ for some invertible $S$, and \emph{congruent} if $B = S^TAS$.}
to the operator
\begin{equation}
\label{congop}
    e^{(\mu+\gamma)/2} L_{\mathcal{S}} (\ell, X) \big(e^{(\mu+\gamma)/2}\psi \big)
\end{equation}
 on the right-hand side, thus they have the same eigenvalues. Moreover, the operator in (\ref{congop}) is congruent to $L_{\mathcal{S}} (\ell, X)$, and so by Sylvester's Law of Inertia (Lemma 2.3 in \cite{inertia}) it has the same number of zero and negative eigenvalues. 
\end{proof}

We next give a sufficient condition for the stability operator to be self-adjoint.

\begin{prop}
\label{prop:rotation}
If $\cS$ is a non-rotating MOTS, then $L_\cS(\ell,X)$ is similar to a self-adjoint operator. Moreover, there is an $\ell_*$ (depending on $X$) for which $L_\cS(\ell_*,X)$ is self-adjoint.
\end{prop}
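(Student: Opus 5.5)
The plan is to exploit the dyad freedom in (\ref{eq:var}) so that the normal-bundle connection disappears, and then to read off self-adjointness from what remains. Since $\cS$ is non-rotating, (\ref{eq:totalder1}) gives $\omega^{l_+}_A = \mathcal{D}_A F$ for some function $F$. By (\ref{eq:tomgauge}), replacing the reference dyad by $(e^{-F} l^+, e^{F} l^-)$ yields $\omega^{e^{-F}l_+}_A = \mathcal{D}_A F - \mathcal{D}_A F = 0$. Because the right-hand side of (\ref{eq:var}) is independent of the choice of reference dyad, we may compute $L_\cS(\ell,X)$ using this new dyad.

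In that dyad $\mathcal{D}^2_{l^+} = \mathcal{D}^2$, and (\ref{eq:var}) becomes
\begin{equation*}
L_\cS(\ell,X)\psi = e^{\mu}\Big( -\mathcal{D}^2(B\psi) + V\psi \Big), \qquad V = B[\mathcal{K} - G_{l^+l^-}] - A[\|\sigma_{l^+}\|^2 + G_{l^+l^+}],
\end{equation*}
where $e^\mu = -\ell\cdot l^-$, $A$ and $B$ are now computed in the new dyad, and $B>0$. Let $\hat L\psi := -\mathcal{D}^2\psi + (V/B)\psi$. This operator is formally self-adjoint on $L^2(\cS, q)$, since $V/B$ is real. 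Then
\begin{equation*}
L_\cS(\ell,X)\psi = e^\mu \big(-\mathcal{D}^2(B\psi) + (V/B)(B\psi)\big) = e^\mu \hat L(B\psi).
\end{equation*}

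\begin{enumerate}
\item For the first claim, write $e^\mu \hat L (B\,\cdot)$ as $e^{\mu} B^{-1/2}\cdot\big(B^{1/2}\hat L B^{1/2}\big)\cdot B^{1/2}$. The middle operator $B^{1/2}\hat L B^{1/2}$ is self-adjoint because $B^{1/2}$ is a real multiplication operator. The outer factors differ: one is $e^\mu B^{-1/2}$ and the other is $B^{1/2}$. To fix this, let $\ell_* = e^{-\mu}B^{-1}\ell$. Rescaling $\ell$ just multiplies the whole operator by the scale factor (as used in the proof of Proposition \ref{prop:scaling-stab}), so $L_\cS(\ell_*,X)\psi = B^{-1}\cdot\big(-\mathcal{D}^2(B\psi) + V\psi\big)$, which equals $B^{-1/2}\big(B^{1/2}\hat L B^{1/2}\big)B^{1/2}\psi$. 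This is similar, via $e^{\gamma}$ with $e^\gamma = B^{1/2}$, to the self-adjoint $B^{1/2}\hat L B^{1/2}$. For general $\ell = e^{\nu}\ell_*$, the operator $L_\cS(\ell,X)$ is $e^{\nu}$ times that, and so it is congruent-and-similar to a self-adjoint operator in the sense of Proposition \ref{prop:scaling}. I would phrase this part via the computation in the proof of Proposition \ref{prop:scaling}, which shows that a rescaling of $\ell$ and $X$ produces an operator similar to $e^{(\mu+\gamma)/2}L e^{(\mu+\gamma)/2}$.
\item For the second claim, choose $\ell_*$ so that the operator is exactly $\hat L$ in a frame where no similarity transformation is needed. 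Taking $\ell_* := B^{-1} e^{-\mu}\ell$ gives $L_\cS(\ell_*,X)\psi = B^{-1}\big(-\mathcal{D}^2(B\psi)+V\psi\big)$, which is not symmetric. A better choice is to combine the $\ell$-scaling with the observation that $L(\ell,X)$ depends on $X$ only through $A$ and $B$. Since $X$ is fixed, we need instead $\ell_*$ with $L_\cS(\ell_*,X)\psi = -\mathcal{D}^2(B\psi)+\dots$ made symmetric.
\end{enumerate}

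This is the main obstacle: with $X$ fixed, the coefficient $e^{\mu}$ multiplies on the left while $B$ acts on the right, so a pure rescaling of $\ell$ cannot produce symmetry unless $B$ is constant. I would therefore revisit the setup and use that $B = X\cdot l^+$ depends on the dyad: the dyad may be rescaled by any $\beta$ with $\omega$ changing by $\mathcal{D}\beta$. The idea is to choose the dyad so that $B\equiv 1$, i.e.\ $l^+ \propto X$-normalized, and absorb the resulting $\omega = \mathcal{D}\tilde F$ as a gradient term. Explicitly, with $B=1$ and $\omega = \mathcal{D}\tilde F$, the operator is $e^\mu\big(-(\mathcal{D}-\mathcal{D}\tilde F)^2 + V\big)$. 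Using the identity $(\mathcal{D}-\mathcal{D}\tilde F)^2 = e^{\tilde F}\mathcal{D}^2 e^{-\tilde F}$, this equals $e^{\mu+\tilde F}\big(-\mathcal{D}^2 + V\big)e^{-\tilde F}$. Choosing $e^{\mu} = e^{-2\tilde F}$, i.e.\ $\ell_* = e^{-2\tilde F} l^+$ in this dyad, gives $e^{-\tilde F}(-\mathcal{D}^2+V)e^{-\tilde F}$, which is manifestly self-adjoint. The same formula with a general $\mu$ shows similarity (via $e^{\tilde F}$) to $e^{\mu}(-\mathcal{D}^2+V)$ up to a congruence, which gives the first claim. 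The delicate point to verify is the conjugation identity and the fact that $V$ is real and smooth.
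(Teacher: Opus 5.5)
Your final paragraph is a valid proof of both claims, and it lands on exactly the paper's $\ell_*$. Start from the $\omega=0$ dyad and normalize so that $B\equiv 1$. This replaces $l^+$ by $B^{-1}l^+$, so $\tilde F=-\ln B$, and the new potential is your old $V/B$. Your operator $e^{-\tilde F}(-\mathcal{D}^2+V)e^{-\tilde F}$ is then $B\hat L B$, and your $\ell_*=e^{-2\tilde F}B^{-1}l^+$ equals $B\,l^+$.

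The paper gets there directly. It stays in the $\omega=0$ gauge and sets $e^\mu=B$, so that $L_\cS(\ell_*,X)\psi=B\big(-\mathcal{D}^2(B\psi)+V\psi\big)$, and checks symmetry by integrating by parts. It handles general $\ell=e^\gamma\ell_*$ through $e^{-\gamma/2}L_\cS(\ell,X)(e^{\gamma/2}\psi)=e^{\gamma/2}L_\cS(\ell_*,X)(e^{\gamma/2}\psi)$. Your gauge change and the identity $(\mathcal{D}-\mathcal{D}\tilde F)^2=e^{\tilde F}\mathcal{D}^2e^{-\tilde F}$ are correct; the identity is immediate from $(\mathcal{D}_A-\mathcal{D}_A\tilde F)f=e^{\tilde F}\mathcal{D}_A(e^{-\tilde F}f)$. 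But they only re-encode the weight $B$ as a gradient connection. What they buy is the factorization $L_\cS(\ell,X)=e^{\mu+\tilde F}(-\mathcal{D}^2+V)e^{-\tilde F}$ for every $\ell$, which gives similarity to $e^{\mu/2}(-\mathcal{D}^2+V)e^{\mu/2}$ in one line.

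Two earlier statements in your proposal are wrong and should be deleted.

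\textbf{The ``main obstacle'' does not exist.} You tried $\ell_*=e^{-\mu}B^{-1}\ell$, which puts $B^{-1}$ on the left. Instead take $\ell_*=e^{-\mu}B\,\ell$. This gives $L_\cS(\ell_*,X)=B\hat L B$, a congruence of the self-adjoint $\hat L$ by a real multiplication operator, so it is self-adjoint for any positive $B$. Self-adjointness needs the left factor to \emph{equal} the right one (a congruence), not to cancel it (a similarity). The operator is independent of the dyad, so your own final $\ell_*$ is exactly such a pure rescaling of $\ell$. The obstacle claim therefore contradicts your conclusion.

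\textbf{The factorization in item (1) is wrong.} $B^{-1}\hat L B$ is not $B^{-1/2}\big(B^{1/2}\hat L B^{1/2}\big)B^{1/2}$; that product is $\psi\mapsto\hat L(B\psi)$. So the asserted similarity to $B^{1/2}\hat L B^{1/2}$ is false in general. What is true is that $B^{-1}\hat L B$ is similar to $\hat L$ itself via multiplication by $B$, so the first claim still survives.
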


\begin{proof}
If $\omega^{l_+}_A = D_A F$, then by 
(\ref{eq:tomgauge}) we can pick a new gauge $\beta = -F$ for which 
$\omega_A^{e^\beta {l}_+} = 0$. We thus assume the dyad $(l^+,l^-)$ has been chosen to have $\omega^{l_+}_A = 0$. As noted in Section \ref{sec:defprop}, the stability operator is independent of this choice. From (\ref{eq:var}) we have
\begin{equation}
    L_\cS(\ell,X)\psi = e^\mu \bigg( 
    -  \mathcal{D}^2 \left( {B} \psi \right) 
     + \Big( {B}  \left[ \mathcal{K} - %G_{_{+ -}} 
     G_{l^+ l^-} %G(l^+,l^+) %G_{ab} l_+^a l_+^b 
     \right]  - {A} \left[ \|\sigma_{+}\|^2 +G_{l^+ l^+} \right]\Big) \,   \psi \bigg) \;.  
\end{equation}
The zeroth-order part is automatically self-adjoint, so we only need to check the $\mathcal{D}^2$ term. Integrating by parts, for an arbitrary test function $\phi$ we have
\begin{equation}
    \left< \phi, e^\mu\mathcal{D}^2 \left( B \psi \right) \right>
    = \left< e^\mu \phi, \mathcal{D}^2 \left( B \psi \right) \right>
    = \left< \mathcal{D}^2 \left(e^\mu \phi\right) , B \psi \right>
    = \left< B\mathcal{D}^2 \left(e^\mu \phi\right) ,  \psi \right> \,,
\end{equation}
where $\left<\cdot,\cdot\right>$ denotes the standard $L^2$ inner product with respect to the volume form on $\cS$. This will be self-adjoint if $e^\mu = B$, so we choose $\ell_* = B l^+$.

Any other $\ell$ is of the form $\ell = e^\gamma \ell_*$. We then calculate, as in the proof of Proposition \ref{prop:scaling},
\begin{equation}
    e^{-\gamma/2} L_\cS(\ell,X)\big(e^{\gamma/2}\psi\big) 
    = e^{-\gamma/2} L_\cS(e^\gamma \ell_*,X)\big(e^{\gamma/2}\psi\big)
   = e^{\gamma/2} L_\cS(\ell_*,X)\big(e^{\gamma/2}\psi\big) \;.
\end{equation}
It follows that $L_\cS(\ell,X)$ is similar to the operator $\psi \mapsto e^{\gamma/2} L_\cS(\ell_*,X)\big(e^{\gamma/2}\psi\big)$. This latter operator is congruent to the self-adjoint operator $L_\cS(\ell_*,X)$ and hence is self-adjoint.
\end{proof}

Hence, given a non-rotating MOTS $S$ and a normal direction $X$, we may invariantly classify its stability by the number of zero and negative eigenvalues, without reference to the scaling of the reference dyad $(l^+, l^-)$, $\ell$ or $X$.

\begin{defn}
    \emph{(Stability Classification)} A non-rotating MOTS $\mathcal{S}$ is \textbf{$\bm{ (X,n_0, n_-)}$-unstable} if  $L_{\mathcal{S}} (\ell, X)$ has $n_0$ vanishing eigenvalues and $n_-$ negative eigenvalues. 
\end{defn}
%\noindent If $X$ is restricted to be tangent to a time-slice $\Sigma_t$ and so spacelike or a time-evolution and so timelike, then it would be natural to specify the direction as a unit-length vector. However, we also wish to allow for null deformations and so do not impose any scaling. 
By this classification, a  MOTS that is strictly stable in the $X$ direction is $(X,0,0)$-unstable, while a marginally stable MOTS is $(X,1,0)$-unstable. 

\subsection{Stability in a spacelike hypersurface $\Sigma$}

The most common appearance of the stability operator (and the one used in most of this paper) is that for an $\mathcal{S}$ that is embedded in a spacelike hypersurface $\Sigma$. This is the version that first appeared in \cite{Andersson:2005}.

Then it is natural to start with  the timelike normal $u$ to $\Sigma$ and the spacelike normal $N$ to $\mathcal{S}$ in $\Sigma$. 
From there one defines null normals
\begin{equation}
    l_+ = u + N, \qquad
    l_- = \frac{1}{2} \left( u - N\right)
\end{equation}
or some constant multiples of these quantities (depending on scaling choices). It is then also natural to work with the stability operator which acts as
\begin{equation}
    L_\mathcal{S} (l_+, N) \psi = \delta_{\psi N} \theta_{l^+} \; . 
\end{equation}
This is the one that we work with through most of this paper. 

However, consider now another hypersurface $\bar{\Sigma}$ that  also intersects $\mathcal{S}$. The associated timelike $\bar{u}$ and spacelike $\bar{N}$ normals can be written as
\begin{align}
    \bar{u} &  = ( \cosh \gamma ) u + (\sinh \gamma) N \label{eq:tilt} \\
    \bar{N} &  = ( \sinh \gamma ) u + (\cosh \gamma) N \nonumber
\end{align}
for some function $\gamma$, whence
\begin{align}
    \bar{l}_+ &= e^\gamma l_+ \\
    \bar{l}_- &= e^{-\gamma} l_- 
    \nonumber \; . 
\end{align}
Then it is straightforward to see that relative to this new slice
\begin{equation}
    L_\mathcal{S} (\bar{l}_+, \bar{N} ) = L_\mathcal{S} (e^\gamma l_+, e^{-\gamma} N) + L_\mathcal{S} \big(e^\gamma l_+, (\cosh \gamma)l_+ \big) \; . \label{eq:SigmaSim}
\end{equation}
If the second term on the right-hand side vanishes (as it will in the next subsection) then this is a similarity relation  and so they have the same eigenvalue spectrum.

\subsection{Non-expanding horizons}
\label{sec:nonex_stab}

A non-expanding horizon (NEH) \cite{Ashtekar:2000a,Ashtekar:2000b,Ashtekar:2001jb,Ashtekar:2004a} is a null three-surface of topology 
$\mathbb{R} \times S^2$  on which 
$\theta_\ell = 0$ and the null dominant energy condition holds ($-T_{ab} \ell^b$ is 
causal and future directed). 
The best known examples of non-expanding horizons are the Killing horizons (inner, outer and cosmological) in the standard exact solutions such as the Kerr--Newman--(A)dS family. However, NEHs are more general than Killing horizons as they do not require global ``time'' Killing vector fields.\footnote{In generality, isolated horizons are intermediate between non-expanding and Killing horizons. Roughly, a non-expanding horizon has fixed intrinsic 
geometry while an isolated horizon also has a fixed extrinsic geometry. All Killing horizons are isolated horizons and all isolated horizons are non-expanding horizons.}

In this case the stability of $\cS$ does not depend on the direction $X$ of deformation.

\begin{prop}
    Let $\mathcal{S}$ be a non-rotating MOTS that is a slice of a non-expanding horizon. If $\mathcal{S}$ is $(X,n_0, n_-)$-unstable in an outward normal direction $X$, then it is $(\tilde{X},n_0, n_-)$-unstable in any other outward normal directions $\tilde{X}$. 
\end{prop}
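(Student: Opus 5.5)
The plan is to show that, on a slice of a non-expanding horizon, the operator $L_\cS(\ell,X)$ depends on $X$ only through the positive function $B = X\cdot l^+$. Once this is established, Proposition \ref{prop:scaling} completes the argument. We work in the general formula (\ref{eq:var}) with the reference dyad chosen so that $l^+$ is tangent to the horizon, so that $l^+$ is proportional to $\ell$. Since $\cS$ is non-rotating, we may further fix the scaling of $l^+$ (as in the proof of Proposition \ref{prop:rotation}) so that $\omega^{l_+}_A = 0$. The null normals of $\cS$ along the horizon direction are unique up to scale, so this choice is consistent.

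The key step is to show that the coefficient of $A$ in (\ref{eq:var}), namely $\|\sigma_{l^+}\|^2 + G_{l^+l^+}$, vanishes on a NEH. We plan to use the Raychaudhuri equation along the null generators. Because $\theta_{l^+}=0$ identically on the horizon, its derivative along $l^+$ vanishes, which gives
\begin{equation}
0 = -\|\sigma_{l^+}\|^2 - G_{l^+l^+} .
\end{equation}
The null dominant energy condition makes $G_{l^+l^+} = 8\pi T_{ab}l_+^a l_+^b \geq 0$, and the shear term is also non-negative, so each term vanishes separately. This is the standard NEH argument, and it uses only that $l^+$ is tangent to the horizon. The main point requiring care is to ensure that the $l^+$ appearing in (\ref{eq:var}) is indeed the horizon generator rather than some other null normal. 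That is why we fix the dyad in this way before applying the formula.

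With the $A$-term removed, (\ref{eq:var}) becomes
\begin{equation}
L_\cS(\ell,X)\psi = e^\mu\Big(-\mathcal{D}^2(B\psi) + B\,[\mathcal{K}-G_{l^+l^-}]\,\psi\Big),
\end{equation}
and this depends on $X$ only through $B>0$. Any other outward direction $\tilde X$ has $\tilde B = X\cdot l^+\,e^{\gamma}$ with $e^\gamma = \tilde B/B > 0$. Hence $L_\cS(\ell,\tilde X) = L_\cS(\ell, e^\gamma X)$: the $A$-components of $\tilde X$ and $e^\gamma X$ differ, but they multiply a vanishing coefficient.

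Proposition \ref{prop:rotation} shows that $L_\cS(\ell,X)$ is similar to a self-adjoint operator. Combining this with Proposition \ref{prop:scaling}, we conclude that $L_\cS(\ell,\tilde X)$ has the same numbers $n_0$ and $n_-$ of zero and negative eigenvalues. Therefore $\cS$ is $(\tilde X,n_0,n_-)$-unstable.

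One remaining point is that Proposition \ref{prop:scaling} is stated for a self-adjoint starting operator, whereas here $L_\cS(\ell,X)$ is only similar to one. To handle this, we first pass to $\ell_*$ as in Proposition \ref{prop:rotation}, so that $L_\cS(\ell_*,X)$ is self-adjoint. We then apply Proposition \ref{prop:scaling} with that operator as the base, and both $X$ and $\tilde X$ are reached from it by rescaling.
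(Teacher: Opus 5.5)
Your proposal is correct and is essentially the paper's proof: the Raychaudhuri equation on the non-expanding horizon kills the coefficient $\|\sigma_{l^+}\|^2 + G_{l^+l^+}$ of $A$ in (\ref{eq:var}), so the operator depends on $X$ only through $B$, and Proposition \ref{prop:scaling} finishes the argument. The differences are cosmetic. The paper compares every $L_{\mathcal{S}}(\ell,X)$ with $L_{\mathcal{S}}(l_+,-l_-)$, which is self-adjoint in the $\omega^{l_+}=0$ gauge, rather than comparing $\tilde X$ with $(\tilde B/B)X$ via $\ell_*$. Also, you do not need the energy condition, since the Raychaudhuri equation alone forces the sum to vanish.
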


%In other words, the number of zero and negative eigenvalues of $L_\cS(\ell,X)$ is independent of $\ell$ and $X$.

\begin{proof}
From the Raychaudhuri equation it follows that for a non-expanding horizon
    \begin{equation}
    \| \sigma_{l^+} \|^2 + G_{l^+ l^+}  = 0 \; . \label{eq:raycon}
    \end{equation}
From (\ref{eq:var}) we see that $L_\cS(\ell,X)$ does not depend on the value of $A$ (where $X = Al^+ - Bl^-$), thus
\begin{equation}
    L_{\mathcal{S}}(\ell, X) = L_{\mathcal{S}}(\ell, -B l_-) \; .   \label{eq:NEHstab1}
\end{equation}
It then follows from Proposition \ref{prop:scaling} that $L_{\mathcal{S}}(\ell, -B l_-)$ has the same number of zero and negative eigenvalues as
\begin{equation}
    L_{\mathcal{S}}(l_+, -l_-) = -  \mathcal{D}_{l^+}^2  +   \left[ \mathcal{K} - G_{l^+ l^-} \right] \; .   \label{eq:NEHstab2}
\end{equation}
This means all $L_{\mathcal{S}}(\ell, X)$ have the same number of zero and negative eigenvalues. 
\end{proof}
\noindent The comment following (\ref{eq:SigmaSim}) is special case of this result. From (\ref{eq:var}) we have
\begin{equation}
    L_\mathcal{S} \big(e^\gamma l_+, (\cosh \gamma)l_+ \big) 
    = - e^\gamma \cosh \gamma (
     \| \sigma_{l^+} \|^2 + G_{l^+ l^+}) = 0 \label{eq:sigSim2}
\end{equation}
for a non-expanding horizon. 

Thus, for a non-expanding horizon, we can talk about $(n_0,n_-)$-instability without specifying a direction of deformation. As noted in Figure \ref{fig:zoom}, this is not true for a general MOTS. 

For a non-rotating, vacuum Killing horizon it is immediately clear that the (\ref{eq:NEHstab2}) is also independent of the particular MOTS chosen on which to calculate the stability (the induced two-metric is slice independent). However, this is also true more generally. 
In Proposition 4 of \cite{Mars:2012sb}, it was shown that the stability operators on different (MOTS) slices of \emph{non-evolving} horizons (a variant of an isolated horizon) are similar and so all have same stability classification. This follows by essentially the same argument as (\ref{eq:SigmaSim}) and (\ref{eq:sigSim2}): for non-evolving horizons the extrinsic geometry is also invariant. Then all slices have the same geometry up to ``tiltings'' like those in (\ref{eq:tilt}).

Combined, these results
greatly simplify the study of stability for non-expanding horizons: all component MOTS have the same stability classification which is also independent of the deformation direction $X$. 
Often it will be most convenient to calculate  via the right-hand side of (\ref{eq:NEHstab2}). All NEHs considered in this paper are non-rotating and so we can choose a gauge for which $\omega_A^{l^+} = 0$. Then we can understand the (in)stability of $L_{\mathcal{S}}(\ell, X)$ 
by calculating the spectrum of
\begin{equation}
    L_\mathcal{S} \big( l_+,  - l_-  \big)  =    
    -  \mathcal{D}^2   +   \left[ \mathcal{K} - G_{l_+ l_-} \right]   \; . \label{eq:simplerL}
\end{equation}

\section{Numerical tools}
\label{sec:numerics}

In general neither the MOTSodesic equations nor the eigenvalue equations can be solved exactly. In this appendix we discuss the numerical tools used for 
these calculations.

\subsection{Solving the MOTSodesic equations}
\label{app:MOTSo}

The MOTSodesic equations (\ref{eq:PT}) are fairly easily solved.  The only real complication comes with the initial conditions.  

\subsubsection{Initial conditions}
All of the examples that we study  in this paper are expressed in  spherical-type coordinates for which there are coordinate degeneracies at $\theta = 0, \pi$. As a result (\ref{eq:PT}) are not well-defined at the endpoints where we need to impose our boundary conditions. Hence, we impose them perturbatively. We consider solutions of the form:
\begin{align}
    P(s) & = P_0 + \cancel{ P_1 s} + P_2 s^2 + P_3 s^3 + P_4 s^4 + P_5 s^5 + P_6 s^6 + \dots \\
    \Theta(s) & =  \cancel{\Theta_0} + \Theta_1 s + \Theta_2 s^2 + \Theta_3 s^3 + \Theta_4 s^4 + \Theta_5 s^5 + \Theta_6 s^6 + \dots \; .
\end{align}
$P_o$ is the intended initial position of MOTSodesic on the $z$-axis and at that point $\Theta=0$, so we have $\Theta_0=0$.  $P_1=0$ by the requirement that the MOTSodesic orthogonally intersects the $z$-axis. Then $\Theta_1$ can be determined by the requirement that $g_{ij}T^i T^j = 1$. From that point the other coefficients can be solved for order-by-order. 
For an example of such a calculation see \cite{Booth:2021sow}. 

In the MOTSodesic calculations in the main body of this paper, for each given $P_0$ we calculated $(P(0.001),\Theta(0.001))$ and 
$(\dot{P}(0.001),\dot{\Theta}(0.001))$ for a fifth order expansion and used those quantities as the initial conditions. 

\subsubsection{Integrating the MOTSodesic equations} 
 With the initial conditions established, (\ref{eq:PT}) can be integrated by any standard numerical integration technique. In particular, the built-in packages in standard symbolic algebra software are more than sufficient to solve for $(P(s), \Theta(s))$.

The shooting method was then used to search for MOTSodesics, systematically working through possible values of $P_0$ to find 
curves that (approximately) close.
We cannot expect perfect closure as the axisymmetry forces an instability into the MOTSodesic equations: if a curve approaches the $z$-axis at any deviation from a right angle, then the MOTS principal curvature associated with the rotationally symmetry will always diverge and this will be sufficient to force the MOTSodesic to turn from the axis. When searching for a MOTSodesic, the key to to look for the point where the curves switch from curving upwards to curving downwards. For more details see the discussion and diagrams in, for example, \cite{Booth:2020qhb, Booth:2021sow}. Other search strategies can be found in \cite{Hennigar:2021ogw}.

\subsection{Calculating stability spectra}
\label{app:spectra}

Calculating the spectra of the MOTS stability operator is a bit more complicated and requires more than the built-in packages of standard computer algebra systems.
We 
proceed using psuedo-spectral methods
\cite{boyd01  ,Pook-Kolb:2018igu}.

This section starts with an overview of the process and then considers its implementation.

\subsubsection{Basic idea}

We wish to calculate the eigenvalues of our axisymmetric stability operator $\tilde{L}$ acting on the space of functions $\tilde{\psi}:[0, \smax] \rightarrow \mathbb{R}$ 
with vanishing derivatives at the boundaries:
\begin{equation}
    \left. \frac{\dd {\tilde\psi}}{\dd s} \right|_{s=0} = \left. \frac{\dd {\psi}}{\dd s} \right|_{\smax} = 0 \label{eq:BC}
    \; . 
\end{equation}
That is, we wish to find pairs  $(\tilde{\psi}(s),\lambda)$ such that
\begin{equation}
    (\tilde{L} - \lambda)  \tilde{\psi} = 0 \; . \label{eq:eig} 
\end{equation}

Let $\{ \mathcal{T}_p, p \in \mathbb{N} \}$ be a set of basis functions satisfying the boundary conditions (\ref{eq:BC}). There is a non-degenerate inner product on these functions:
\begin{equation}
    \mathcal{I}_{pq} =  \langle \mathcal{T}_p, \mathcal{T}_q \rangle \; . 
\end{equation}
If they are orthonormal then $\mathcal{I}_{pq} = \delta_{pq}$ but we do not assume this. 
With respect to the basis functions we can expand
\begin{equation}
    \tilde{\psi}(s) = \sum_{p=1}^\infty \mathcal{A}_p \mathcal{T}_p(s)
\end{equation}
where the $\mathcal{A}_p$ are 
numerical coefficients. Then (\ref{eq:eig}) becomes
\begin{equation}
    \sum_{p=1}^\infty \mathcal{A}_p 
    \left( \tilde{L} - \lambda \right) \mathcal{T}_p = 0 \; . \label{}
\end{equation}

Of course, we still can't solve this in exact form but for finite $N$ we can approximate it using the Galerkin method\cite{boyd01}. If we approximate our eigenfunctions with respect to a subset of basis elements $\{ \mathcal{T}_p : 1 \leq p \leq N \}$, then given (\ref{eq:eig}) the error is the \emph{residual}
\begin{equation}
    R(\mathcal{A}_p, \lambda) = \sum_{p=N+1}^\infty \mathcal{A}_p 
    \left( \tilde{L} - \lambda \right) \mathcal{T}_p
   % =   - \sum_{p=0}^N \mathcal{A}_p 
   % \left( \tilde{L} - \lambda \right) \mathcal{T}_p 
   \; . 
\end{equation}
We then solve for $(\mathcal{A}_p, \lambda)$ such that the residual is orthogonal to our finite subset of basis elements:
\begin{equation}
    \langle \mathcal{T}_p, R(\mathcal{A}_q, \lambda) \rangle = 0 \quad \mbox{for } 1 \leq p,q \leq N \; . 
\end{equation}
Equivalently, by (\ref{eq:eig}),
\begin{equation}
    \sum_{q=1}^N \left( \tilde{\mathcal{L}}_{pq} - \lambda \mathcal{I}_{pq} \right) \mathcal{A}_q = 0 \, , 
\end{equation}
where 
\begin{equation}
    \tilde{\mathcal{L}}_{pq} = \langle \mathcal{T}_p , \tilde{L} \mathcal{T}_q \rangle \; . 
\end{equation}

Then the eigenvalues and eigenfunctions are approximated by solving the finite-dimensional linear algebra problem
% \begin{equation}
%     \sum_{q,r=1}^N \left( \mathcal{I}^{-1}_{pq}  \tilde{\mathcal{L}}_{qr} - \lambda \delta_{qr} \right) \mathcal{A}_r = 0 
% \end{equation}
\begin{equation}
    \sum_{r=1}^N \left(  \sum_{q=1}^N \mathcal{I}^{-1}_{pq}  \tilde{\mathcal{L}}_{qr} - \lambda \delta_{pr} \right) \mathcal{A}_r = 0 
\end{equation}
where $\mathcal{I}^{-1}_{pq}$ is the inverse of the matrix of  
inner products for the finite subset of basis elements. 

To implement this in our calculations we need to specify the basis elements $\mathcal{T}_p$ and the inner product, along with a procedure to calculate that inner product. 

\subsubsection{Basis functions}

As basis elements, we choose the  
Chebyshev polynomials
\begin{equation}
   \left\{ \mathcal{T}_p \left( \frac{2s}{\smax} -1 \right) \, : \,  p \in \mathbb{N} \right\}
\end{equation}
where 
\begin{equation}
    \mathcal{T}_p (y) = 
    \left\{
    \begin{array}{ll}
    T_0(y) & \mbox{if  } p=1 \\
    T_{p-1}(y) - \left( \frac{p-1}{p+1}\right)^2
    T_{p+1}(y) & \mbox{if }p >1
    \end{array} \right. 
\end{equation}
for the standard Chebyshev polynomials $T_q(y):[-1,1] \rightarrow \mathbb{R}$
defined by 
\begin{equation}
    T_0  = 1 \; , \quad
    T_1  = y \quad  \mbox{and } \; 
    T_{n+1} = 2 y T_n(y) - T_{n-1} (y)  
\end{equation}
or, equivalently,
$
    T_p(\cos(\theta)) = \cos (p \theta) 
$.
%We have shifted the numbering of the $\mathcal{T}$ to start at 1 rather than 0 to simplify later expressions. 
The $\mathcal{T}_p$ satisfy the boundary conditions (\ref{eq:BC}) but are neither normalized nor orthogonal. We could construct an orthonormal basis with the usual Gram--Schmidt process but this is computationally cumbersome. Instead we work with the $\mathcal{T}_p$ directly. 

With respect to the standard weight function 
\begin{equation}
  \langle T_p, T_q \rangle =   \int_{-1}^1 \frac{T_p(y) T_q(y) }{\sqrt{1-y^2}}  \dd y
    = \left\{\begin{array}{ll} 
    0  & \mbox{if } p \neq q\\
    \pi &  \mbox{if } p = q = 0\\
    \frac{\pi}{2} & \mbox{if } p=q \neq 0
    \end{array}  \right. 
\end{equation}
and so the inner products of the modified $\mathcal{T}_p$ are
\begin{equation}
    \mathcal{I}_{pq} := \langle \mathcal{T}_p ,  \mathcal{T}_q \rangle
    = \left\{\begin{array}{ll} 
    \pi &  \mbox{if } p = q = 1\\
    \frac{\pi}{2}  \left(1+\left(\frac{p-1}{p +1}\right)^4 \right)& \mbox{if } p=q \neq 1 \\
-\frac{\pi}{2}  \left( \frac{p-1}{p +1} \right)^2
 & \mbox{if } q=p+2 \\
 -\frac{\pi}{2} \left( \frac{q-1}{q +1} \right)^2
 & \mbox{if } p=q+2 \\
 0 & \mbox{otherwise}
    \end{array}  \right. \; . 
\end{equation}
\bigskip 
For $1 \leq p, q \leq N$, $\mathcal{I}_{pq}$ is a symmetric $N\!\times\! N$ matrix and we denote its inverse $\mathcal{I}^{-1}_{pq}$.

Then the $N^{\mathrm{th}}$-order approximation of a function $\tilde{\psi}(y)$ 
%(that also satisfies the boundary conditions) 
is
\begin{equation}
    \tilde{\psi} \approx \sum_{p=1}^N \mathcal{A}_p \mathcal{T}_p
\quad \mbox{with coefficients} \quad
    \mathcal{A}_p = \sum_{q=1}^N I^{-1}_{pq} \langle \mathcal{T}_q, \psi \rangle \; .   \label{eq:ChebExp}
\end{equation} 

\subsubsection{Calculating $\tilde{\mathcal{L}}_{pq}$}

While we can calculate $\mathcal{I}_{pq}$ exactly, the operator $\tilde{L}$  is usually given numerically, with the $\mathcal{G}$ and $\mathcal{H}$ coefficients depending on $(P(s), \Theta(s))$ as in (\ref{eq:cG}) and (\ref{eq:cH}), hence the $\tilde{\mathcal{L}}_{pq}$ will also have to be calculated numerically. 

Applying the coordinate transformation $s = \frac{\smax}{2} ( y + 1 )$, 
(\ref{eq:L1d}) becomes
\begin{equation}
    \tilde{L} = \left( - \frac{4}{\smax^2} \right) \frac{\dd^2}{\dd^2 y} + \left( \frac{2\mathcal{G}(s)}{\smax}  \right) \frac{\dd}{\dd y} + \mathcal{H}(s) \; . 
\end{equation}
Then we need a method for 
calculating 
\begin{equation}
    \tilde{\mathcal{L}}_{pq} =  \int_{-1}^1
    \frac{\mathcal{T}_p (y) \tilde{L} [\mathcal{T}_q(y)]}{\sqrt{1 - y^2}} \dd y  \; . 
\end{equation}
The natural way to do this is by Chebyshev--Gauss 
quadrature\cite{boyd01}. 
To that end we will evaluate functions at $M$ \emph{collocation points}:
\begin{equation}
    y_i = -\cos \left( \frac{2i-1}{2M} \pi \right) 
    \quad \Longleftrightarrow \quad s_i = \frac{\smax}{2} \left(1- \cos \left( \frac{2i-1}{2M} \pi \right)  \right)  . 
\end{equation}
These are the zeros of the Chebyshev polynomials. 
Then
\begin{equation}
    \tilde{\mathcal{L}}_{pq} \approx 
    \frac{\pi}{n} \sum_{i=1}^M
    \mathcal{T}_p(s_i) \left(
     - \frac{4\mathcal{T}''_q(y_i)}{\smax^2}     +  \frac{2\mathcal{G}(s_i) \mathcal{T}'_q(y_i)}{\smax}   + \mathcal{H}(s_i) \mathcal{T}_q (y_i) \right)  , 
\end{equation}
where the primes are $y$-derivatives. 

\subsubsection{Calculating eigenvalues and eigenfunctions}

Once $\mathcal{I}_{pq}$ and $\tilde{\mathcal{L}}_{pq}$ are known, it is straightforward to calculate the (approximate) eigenvalues and corresponding eigenfunctions (from the coefficients $\mathcal{A}_p$) using standard numerical linear algebra packages, including those built into any computer algebra software. 

As a rule-of-thumb, choosing $M = N+2$ appears to provide as much accuracy as you can get for a given $N$. For a given $N$, the lower eigenvalues can be expected to the most accurate with a second rule-of-thumb being that the first $N/2$ should be accurate to several decimal places. These have also been observed in other studies\cite{Pook-Kolb:2018igu,Pook-Kolb:2021jpd,Hennigar:2021ogw}. We have tested this assumption in many cases by calculating the eigenvalues and seeing how they (do not) change as $N$ increases.

\bigskip

\bigskip

\bigskip

\bibliographystyle{abbrv}

\bibliography{References}

\end{document}